\renewcommand\fbox{\fcolorbox{white}{white}}
\newtheorem{thm}{Theorem}
\theoremstyle{proposition}
\newtheorem{prop}{Proposition}
\theoremstyle{corollary}
\theoremstyle{lemma}
\newtheorem{lem}{Lemma}
\theoremstyle{definition}
\newtheorem{defn}{Definition}
\newtheorem{rem}{Remark}
\theoremstyle{definition}
\theoremstyle{definition}
\newtheorem{ass}{Assumption}
\DeclareMathOperator{\Cov}{\mathrm{Cov}}
\DeclareMathOperator*{\argmax}{argmax}
\DeclareMathOperator{\indep}{\perp \!\!\! \perp}
\begin{document}

\title{Social networks, confirmation bias and shock elections}

\author{Edoardo Gallo\footnote{Address: University of Cambridge and Queens' College, Cambridge CB3 9ET, UK. Email: \emph{edo@econ.cam.ac.uk.}}\and Alastair Langtry\footnote{University of Cambridge.  Email: \emph{atl27@cam.ac.uk.} }\footnote{Thanks to Nizar Allouch, Itai Arieli, Yakov Babichenko and Matthew Elliott for helpful comments and suggestions. We also thank seminar participants at the University of Cambridge, University of East Anglia and University of Kent. This work was supported by the Cambridge Endowment for Research in Finance (CERF) and the Economic and Social Research Council [award reference ES/P000738/1]. Any remaining errors are the sole responsibility of the authors.}}

\date{\today \\ \vspace{7mm}}

\maketitle

\begin{abstract}

In recent years online social networks have become increasingly prominent in political campaigns and, concurrently, several countries have experienced shock election outcomes. This paper proposes a model that links these two phenomena. In our set-up, the process of learning from others on a network is influenced by confirmation bias, i.e. the tendency to ignore contrary evidence and interpret it as consistent with one's own belief. When agents pay enough attention to themselves, confirmation bias leads to slower learning in any symmetric network, and it increases polarization in society. We identify a subset of agents that become more/less influential with confirmation bias. The socially optimal network structure depends critically on the information available to the social planner. When she cannot observe agents' beliefs, the optimal network is symmetric, vertex-transitive and has no self-loops. We explore the implications of these results for electoral outcomes and media markets. Confirmation bias increases the likelihood of shock elections, and it pushes fringe media to take a more extreme ideology. \\

\noindent \textbf{JEL:} C63, D72, D83, D85, D91, L15.\\

\noindent \textbf{Keywords:} social learning, confirmation bias, network, elections, media.\\

\end{abstract}

\newpage

\begin{quote}
\singlespacing
\emph{\footnotesize{
The strongest bias in American politics is not a liberal bias or a conservative bias; it is a confirmation bias, or the urge to believe only things that confirm what you already believe to be true. Not only do we tend to seek out and remember information that reaffirms what we already believe, but there is also a ``backfire effect'' which sees people doubling down on their beliefs after being presented with evidence that contradicts them.}}
\flushright{
\footnotesize{``Your facts or mine?'' by E. Roller, \emph{NYT}, Oct 25th, 2016.}}
\end{quote}

\onehalfspacing
\vspace{6mm}

Social networks are increasingly becoming the primary channel for people to acquire information and form opinions. In an experiment involving 60m Facebook users prior to the 2010 US elections, \citet{bond201261} showed they could generate 340,000 additional votes using a social message that informed a user about friends that had voted, compared to an informational message without social network information. Unlike traditional media or gatherings in the local church, club or pub, online social networks make it very easy for a user to ``unfollow'' someone who does not share their opinion. In this way, they exacerbate the role of confirmation bias -- people's tendency to ignore information contrary to their view and reinterpret it as agreeing with their own (\citet{pariser2011filter}). A natural question is whether the growing importance of online social networks in opinion formation, and the corresponding heightened role of confirmation bias, affects the democratic process and whether it is a driver of shock elections results such as Trump's win or Brexit.

The aim of this paper is to examine how confirmation bias affects the process of learning from others, and its consequences for elections and the media. We are interested in learning at the societal level, so individuals are embedded in a large network. They are endowed with an initial belief at time $0$ and learn according to the well-known \cite{degroot1974} behavioral rule: they update their beliefs by taking weighted averages of their neighbours' beliefs. Experimental evidence shows that this DeGroot learning rule is a good predictor of how people learn from others, and it is particularly suitable to model learning in a large network where it is unrealistic to assume individuals are Bayesian updating and process information conditioning on the network structure.\footnote{See, e.g., \citet{corazzini2012influential} and \citet{chandrasekhar2020testing}. Section 17.2.5 in \citet{choi2016networks} reviews experimental evidence on social learning.}

In the spirit of \cite{rabinschrag1999}, confirmation bias in the model implies that when an individual learns that someone has beliefs too different from their own, they ignore them thereafter and give more weight to their own belief instead. Specifically, after the assignment of initial beliefs each individual cuts connections with others who have beliefs further away from their own than a threshold. The individual reassigns the weight of these severed connections to themself, and they never reinstate these links after they have been cut. Mathematically, understanding the effect of confirmation bias in the model reduces to a comparison of the learning processes on the original network and the new sparser network.

The first result is that, when agents pay enough attention to themselves, confirmation bias slows down learning in \emph{any} symmetric network. In the proof we apply the notion of Dirichlet energy to relate the confirmation bias parameter to the whole spectrum of eigenvalues of the network matrix, which governs the rate of convergence to a consensus. Agents need to pay enough attention to themselves so that their beliefs do not oscillate from period to period. Mathematically, this restriction ensures that the eigenvalues are positive, which guarantees that changes in size correspond to changes in magnitude. Using counterexamples we show that the result is not just a consequence of the network being sparser, but it critically relies on the key feature of confirmation bias that the individual puts more weight on their \emph{own} belief: if the weight of the severed links is partly reassigned to other surviving links then learning may be faster or slower. While this result requires the network to be symmetric, we show using simulations that it largely holds in asymmetric networks too.

Confirmation bias leads to a redistribution of influence. The intuitive result that individuals who cut links increase their influence does not always hold. We can, however, show that there are individuals that we dub influencers (listeners) whose influence increases (decreases) in the presence of confirmation bias. A further consequence of confirmation bias is that society becomes more polarized at each point in time.

A natural objective of a social planner is to maximize the chance that society converges to the truth. Confirmation bias works against this goal by redistributing influence across individuals and, potentially, breaking the network into separate components -- preventing the aggregation of initial signals. Assuming that a social planner does not observe the distribution of initial signals or the level of confirmation bias, we characterize the set of networks that maximize the probability that a society converges to the truth. Given a fixed budget of links to allocate, optimal networks are symmetric, have no self-links, and their unweighted equivalent is vertex-transitive -- a subset of regular networks such that every node is structurally equivalent to every other in the network.

In the second part of the paper, we examine the consequences of social learning affected by confirmation bias for elections and media markets. In the first application, we embed our social learning framework into a two-candidate voting model in which sincere voting is a weakly dominant strategy so whenever there is an election individuals vote for the candidate closer to their current belief. We restrict the distribution of beliefs to focus on the interesting case when a society would vote for the same candidate before learning takes place and at the end of the learning process. We define a society as having \emph{shock elections} if the other candidate wins at any point in time during the learning process. Using a mean-field assumption, we prove that a society never has shock elections without confirmation bias, but it can do if confirmation bias is high enough to remove some connections. Simulations show that this result holds even without the mean-field assumption.

Finally, we embed our social learning framework in a Hotelling-style model of a media market. Media players choose their editorial line, or ideology, and only care about maximizing their audience. Individuals follow one and only one media organization due to their limited attention budgets. We focus on ``fringe'' media organizations that adopt an extreme editorial line, and prove the editorial line of the fringe media organization becomes more extreme as the strength of confirmation bias increases.\\

\noindent \textbf{Literature review.} This paper sits at the intersection of literatures in behavioral economics, social learning and political economy. We review each in turn, highlighting papers that are particularly relevant to this work.

\textbf{Psychology.} The study of confirmation bias has a long history in psychology; a comprehensive review by \citet{nickerson1998confirmation} shows its relevance to a large range of issues, including judicial outcomes (\citet{kuhn1994well}), policymaking (\citet{tuchman2011march}), and medical decisions (\citet{elstein1979psychology}).\footnote{\citet{nickerson1998confirmation}'s opening paragraphs states: ``If one were to attempt to identify a single problematic aspect of human reasoning that deserves attention above all others, the confirmation bias would have to be among the candidates for consideration. Many have written about this bias, and it appears to be sufficiently strong and pervasive that one is led to wonder whether the bias, by itself, might account for a significant fraction of the disputes, altercations, and misunderstandings that occur among individuals, groups, and nations.'' The careful reader will note that the pervasiveness of confirmation bias may extend to the quote itself.} A difficulty posed by the vastness of this literature is, quoting Nickerson, that ``confirmation bias has been used in the psychological literature to refer to a variety of phenomena'' (p. 175). \citet{nickerson1998confirmation}'s working definition is ``unwitting selectivity in the acquisition and use of evidence'' (p. 175) and he puts special emphasis on the non-deliberate nature of the bias which emerges as a heuristic to quickly process information. The core idea present throughout the psychology literature is that people are biased against information which conflicts with their own beliefs.

\textbf{Behavioral economics.} In the economics literature, \citet{rabinschrag1999} formulate a model of how confirmation bias affects individual decision-making. In their set-up, there are two states of the world. Each time a new signal arrives, the agent performs Bayesian updating, with the twist that when a signal runs counter to the agent's current hypothesis then there is a probability $q$ that the agent misinterprets it as actually confirming her hypothesis; i.e. $q$ is the strength of confirmation bias. The main result of their paper is that confirmation bias leads to overconfidence. \citet{epstein2006axiomatic} presents a model of Non-Bayesian updating, where the agent is `tempted' to change their belief after receiving a signal.\footnote{It is similar to \citet{gul2001temptation}'s model except that it is \textit{beliefs}, rather than utilities, that change.} This model is able to nest a version of confirmation bias by choosing an appropriate specification of how to agent is tempted to update their belief.

The main contribution of our paper to behavioral economics is to analyze the effects of confirmation bias in processing information in a context with multiple agents who learn from each other through their social connections. This is arguably becoming more relevant nowadays, as individuals increasingly learn by sharing information on social media rather than individually processing information from a media source. A crucial step is to model confirmation bias as reducing the range of opinions an individual is willing to listen to. This translates to ignoring information in a similar way to the single agent in \citet{rabinschrag1999}'s framework. In our basic model, an agent always ignores information too far away from their view rather than only some of the time, as in \citet{rabinschrag1999}. Relaxing this assumption does not affect the main results.\footnote{See Appendix \ref{generalised model} for further discussion.}

\textbf{Social learning.} Research on social learning in economics began with the seminal papers by \citet{banerjee1992simple} and \citet{bikhchandani1992theory} in which rational players take actions in succession and each mover can see the actions of their predecessors. A sizeable branch of subsequent work has enriched this basic framework by embedding agents in a network and relaxing the assumption of sequential moves. In this more complex set-up, however, tractability is a challenge and assuming full Bayesian rationality tends to limit the results to showing convergence to consensus in the long-run (see \citet{golub2016networks} for a comprehensive review). Moreover, the sophistication required by Bayesian reasoning in this set-up is unrealistic in large societies, and \citet{corazzini2012influential} show experimentally that even in small groups it is a poor predictor of how individuals learn.

An alternative approach is to assume agents are non-Bayesian and use a behavioral rule to learn from others. \citet{degroot1974} proposed the simple rule that agents update their beliefs by taking a weighted average of their neighbours' opinions, and he shows that the process reaches a consensus under mild regularity conditions. This set-up gained traction in economics with \citet{demarzo2003persuasion}, who obtain novel results on convergence speed and relate each agent's contribution to the consensus to their respective network position. More recently, \citet{golub2010naive} give new results on network structures that lead to society correctly aggregating information.

The primary contribution of our paper to the social learning literature is to study how confirmation bias affects the outcomes of the learning process, including convergence to a consensus, speed of learning and the influence of agents. To our knowledge, this is the first paper to examine how a psychological bias affects DeGroot-type learning. Aside from its intrinsic interest and applications, it also provides a check on whether existing DeGroot-type learning results are robust to a bias that is ubiquitous in reality. The objective is similar to \citet{golub2012homophily}: they examine how homophily, a ubiquitous feature of social networks, affects the speed of social learning and consensus.\footnote{Other papers using the DeGroot framework include \citet{acemoglu2010spread}, \citet{gallo2014social} and \citet{jadbabaie2012}.}

\textbf{Voting.} In recent years political experts have been surprised by several electoral outcomes, including the election of Donald Trump in the United States, and the outcome of the U.K.'s Brexit referendum. It may not be entirely coincidental that these events occurred alongside a shift in news consumption from traditional media outlets to online social networks (\citet{gottfried2016news}). For instance, an extensive study of 10.1 million U.S. Facebook users by \citet{bakshy2015exposure} shows that people tend to predominantly share news with friends that is in line with the recipient's ideology, and this filtering by friends is more powerful than Facebook's algorithmic selection on the news feed to limit exposure to distant viewpoints.

Alongside this is a growing literature on `fake news' and its potential impact on elections. \citet{allcott2017social} provide an overview of some recent work, and also sketch a model of fake news. They suggest that if agents have a preference for confirmatory news reporting, then news reporting can become distorted, possibly reducing the ability of democracies to choose high-quality candidates.

This paper shows how confirmation bias' impact on the way we learn from others can lead to surprising election outcomes. In particular, confirmation bias prevents us from directly learning from other people whose information conflicts with our own views. This means that our immediate friends are unrepresentative of society as a whole. Therefore, we can be swayed one way in the medium term, even though the weight of information on aggregate points the other. A consequence of this in the short/medium-term is that a society may vote for a candidate that would not be supported by the majority once long-run information aggregation has occurred. Confirmation bias implies that a society can choose policies that, in the long term, it would not want.

\section{Model}\label{model}
This section presents the main elements of the model: the network and initial signals, the learning process, and the way we model confirmation bias.

\vspace{2mm}
\noindent
\textbf{Endowments.} Consider a finite set of agents $N = \{1,2,...,n\}$ who communicate through a directed, weighted network $T \in \mathcal{T}$. $\mathcal{T}$ is the set of all networks with $n$ nodes. The entry $T_{i j}\in [0,1]$ denotes how much weight agent $i$ places on the views of agent $j$ and $\sum_{j\in N} T_{i j}=1$ for all $j$; so $T$ is row-stochastic. The \emph{self-link} $T_{ii}$ is the weight an agent places on her own view. A directed path of length $l$ between $i$ and $j$ is a sequence of links $T_{i k_1},...,T_{k_{l-1} j}$ such that no two nodes on the path are the same. We assume that $T$ is \textit{strongly connected} -- there is a directed path from any agent to any other agent -- and \emph{aperiodic} -- there are no cycles.

We say that agent $i$ listens to $j$ if $T_{i j}>0$, and $i$ is listened to by $j$ if $T_{j i}>0$. Denote by $N^{out}_i(T)=\{j\in N | T_{i j}>0\}$ the out-neighbourhood and by $d^{out}_i(T)=|N^{out}_i(T)|$ the \emph{out-degree} of agent $i$. Similarly, $N^{in}_i(T)=\{j\in N | T_{j i}>0\}$ is the in-neighbourhood and $d^{in}_i(T)=|N^{in}_i(T)|$ the \emph{in-degree} of agent $i$. A network is \emph{symmetric} if $T_{i j}=T_{j i}$ for all $i,j\in N$. If it is symmetric, then $d^{in}_i(T)=d^{out}_i(T)$ for all $i\in N$. It is \emph{regular} if $d^{in}_{i}(T) = d^{in}(T)$ and $d^{out}_{i}(T) = d^{out}(T)$ for all $i\in N$.

Each agent is endowed with a signal $\theta_i \in [0,1]$ about the underlying state of the world, and we make the standard assumption in the literature that agent's $i$ initial belief $x_{i 0}$ at time $t=0$ is equal to the initial signal received by $i$.\footnote{We leave the distribution of signals unspecified because it does not matter for the results in the paper.} Notice that the initial belief $x_{i 0}$ of agent $i$ is independent of $i$'s position in network $T$.

\vspace{2mm}
\noindent
\textbf{Learning.} In each time period an agent updates her belief by taking a weighted average of her current belief and the beliefs of agents she listens to. Mathematically, agents' beliefs at time $t$ are equal to $\mathbf{x}_{t} = T^{*} \ \mathbf{x}_{t-1}$. Iterating, we have that $\mathbf{x}_{t} = (T^{*})^{t} \ \mathbf{x}_{0}$ so we can derive agents' beliefs at time $t$ from the initial signals and the network.

\vspace{2mm}
\noindent
\textbf{Confirmation bias.} In the first step of the learning process, agents truthfully share their signals and therefore an agent learns the initial belief of their neighbours. We assume that when an agent with confirmation bias learns that the difference between a neighbour's belief and her own belief exceeds a threshold, she ignores that neighbour and transfers the weight she would have put on that neighbour's belief to her own belief. Moreover, she never listens again to information from that neighbour for the rest of the learning process.

\begin{defn}\label{core rule}
A society on a network $T$ in which agents have confirmation bias $q$ communicates according to a network $T^*$ such that:
\begin{eqnarray*}
  \text{if} \quad |{x_{i 0} - x_{j 0}}| > (1-q) & \text{then}  & T^{*}_{i j}  = 0  \qquad	T^{*}_{i i} = T_{i i} + T_{i j} \\
   & \text{otherwise}  & T^{*}_{i j}  = T_{i j}  	\quad	T^{*}_{i i} = T_{i i}
\end{eqnarray*}
\end{defn}

\noindent Mathematically, a society in which agents have confirmation bias $q$ communicates through a network $T^*$ that has had links cut compared to $T$, and where the weight of the links that have been cut is redistributed to self-links. Notice that the threshold is defined as $1-q$ so that a higher $q$ corresponds to increasing confirmation bias. Understanding the impact of confirmation bias in the model means comparing the learning processes on $T$ and $T^*$. Unless stated otherwise, we assume that $T$ and $T^*$ are both strongly connected. There are some implicit assumptions in modeling confirmation bias in this way which are worth discussing and motivating upfront.

\begin{enumerate}
  \item Agents \emph{completely} cut links. This is a realistic assumption for online social networks where users have to make a binary decision on whether to follow or unfollow someone. The results are, however, robust to relaxing this assumption. In Appendix \ref{generalised model} we show that the main result concerning the speed of learning holds if agents affected by confirmation bias weaken rather than cut links, and the weakening can be either by a common factor or proportionally to the difference in beliefs.
  \item Agents \emph{redistribute} weights of cut links \emph{to themselves}. This is an important feature of confirmation bias from a large body of experimental evidence (\citet{nickerson1998confirmation}) as well as consistent with the ``backfire effect'' mentioned in the initial quote. If agents were to redistribute the weight to, say, other neighbours then this would not be confirmation bias and, as we will show, it would lead to different results.
  \item Agents \emph{never reinstate} links they have cut. This is consistent with the standard assumption of myopic learning in the DeGroot model, in which the weights an agent gives to different neighbours are fixed at the beginning and never change over time. It also helps in terms of tractability because time-varying weights would significantly increase the complexity of the Markov process.
\end{enumerate}

\section{How confirmation bias affects learning}\label{main results}
This section examines how confirmation bias affects social learning. Section \ref{speed of learning} considers speed of learning, section \ref{influence} considers agents' influence, and section \ref{polarization} considers polarization of beliefs. Finally, section \ref{optimal network} characterizes the optimal networks to minimize the adverse consequences of confirmation bias on the learning process.

\subsection{Speed of Learning}\label{speed of learning}
The assumptions that $T$ and $T^*$ are strongly connected and $T$ is aperiodic ensure that in the long run the society converges to a consensus where all agents agree with one another. Reaching a consensus may, however, take a long time and the purpose of this section is to characterize how this convergence time varies with confirmation bias.

In the Markov chain literature there are different definitions of convergence or mixing time. In general, mixing time depends on the spectrum of eigenvalues, which is often well-approximated by the second largest eigenvalue (\cite{montenegro2006}). In this paper we adopt the following definition of convergence time that takes into account the full spectrum of eigenvalues of the matrix $T$.

\begin{defn}
The \emph{average convergence time} $\tau$ is equal to:
\begin{align*}
\tau = \min \left\{ t > 0 : \frac{1}{n} \sum_{i} || T^{t}(i, \cdot) - s ) ||_{2}^{2} < \epsilon \right\}
\end{align*}
where $||.||_{2}$ denotes the $\ell_{2}$ norm.
\end{defn}

Average convergence time captures how long it takes for agents, on average, to get within a distance $\epsilon$ of their invariant distribution -- the proportion of total attention they indirectly pay to all agents in the long run. Notice that the initial assignment of beliefs does not enter explicitly in the definition. In economics terms, what we care about is the speed of convergence of beliefs to a consensus and the initial assignment matters for this -- if an agent's initial belief is extreme then it will take more time for her to converge to the consensus. The evolution of beliefs, however, closely tracks, on average, the evolution of our measure, and we validate this in a simulation study in section \ref{simulations results}. Moreover, Appendix \ref{generalised model} shows that our results are robust to adopting the definition of convergence in \citet{golub2010naive} based on the second eigenvalue and the worst-case scenario in the assignment of initial beliefs.

Armed with this definition, we can prove that confirmation bias always weakly increases convergence time in any symmetric network provided that agents listen mostly to themselves.

\begin{thm}\label{speed result}
When $T_{ii} \geq \frac{1}{2}$ for all $i$, then for any symmetric network $T$, the average convergence time $\tau$ is (weakly) monotonically increasing in the amount of confirmation bias $q$.
\end{thm}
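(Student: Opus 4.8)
The plan is to show that cutting a link (and reassigning its weight to the self-loop) weakly increases the average convergence time, and then to iterate this since $T^*$ is obtained from $T$ by a sequence of such single-link cuts as $q$ increases. So it suffices to fix a symmetric $T$ with $T_{ii}\ge \tfrac12$, pick one cut link $(i,j)$ (note by symmetry we also cut $(j,i)$), and compare $T$ with the modified matrix $\tilde T$. For a symmetric row-stochastic matrix the rows $T^t(i,\cdot)$ converge to the uniform stationary distribution $s=\tfrac1n\mathbf 1$, and the quantity $\frac1n\sum_i \|T^t(i,\cdot)-s\|_2^2$ can be written spectrally as $\frac1n\sum_{k\ge 2}\lambda_k^{2t}$, where $1=\lambda_1>\lambda_2\ge\cdots\ge\lambda_n$ are the eigenvalues of $T$. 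The condition $T_{ii}\ge\tfrac12$ for all $i$ makes $T = \tfrac12 I + \tfrac12 M$ with $M$ symmetric row-stochastic, hence $\lambda_n\ge 0$: all eigenvalues are nonnegative. This is exactly the step the introduction flags — it guarantees that $\lambda_k^{2t}$ is monotone in $\lambda_k$, so that making every eigenvalue (weakly) smaller makes the whole sum smaller at every $t$, and therefore makes $\tau$ weakly larger.

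The heart of the argument is then to show that each eigenvalue weakly decreases under a cut: $\tilde\lambda_k \le \lambda_k$ for all $k$. I would do this via the Courant–Fischer min-max characterization together with the Dirichlet-energy (Rayleigh quotient) identity alluded to in the introduction. Writing the Rayleigh quotient $R_T(v)=\frac{v^\top T v}{v^\top v}$, one computes that $v^\top T v - v^\top \tilde T v$ equals the contribution of the severed symmetric pair, which works out to $T_{ij}(v_i - v_j)^2 \ge 0$ — this is precisely the Dirichlet energy of the cut edge. Hence $R_{\tilde T}(v) \le R_T(v)$ for every vector $v$, and by Courant–Fischer $\tilde\lambda_k = \max_{\dim U = k}\min_{v\in U} R_{\tilde T}(v) \le \max_{\dim U=k}\min_{v\in U} R_T(v) = \lambda_k$ for each $k$. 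Combined with $\tilde\lambda_n \ge 0$ (which holds because $\tilde T_{ii}\ge T_{ii}\ge\tfrac12$, so the self-loop condition is preserved under cuts — I should check this explicitly, as it is what lets the induction go through), we get $0 \le \tilde\lambda_k \le \lambda_k$ for all $k\ge 2$, hence $\tilde\lambda_k^{2t}\le\lambda_k^{2t}$ termwise, hence $\frac1n\sum_{k\ge2}\tilde\lambda_k^{2t} \le \frac1n\sum_{k\ge2}\lambda_k^{2t}$ for every $t$, and so the threshold defining $\tau$ is reached no earlier for $\tilde T$: $\tilde\tau \ge \tau$.

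Two bookkeeping steps remain. First, the spectral formula: I need $\frac1n\sum_i\|T^t(i,\cdot)-s\|_2^2 = \frac1n\sum_{k\ge2}\lambda_k^{2t}$. This follows by diagonalizing $T = \sum_k \lambda_k u_k u_k^\top$ in an orthonormal eigenbasis with $u_1 = \tfrac1{\sqrt n}\mathbf 1$, so $T^t - s\mathbf 1^\top / \text{(row)} = \sum_{k\ge2}\lambda_k^t u_k u_k^\top$, and $\frac1n\|T^t - \mathbf 1 s^\top\|_F^2 = \frac1n\sum_{k\ge 2}\lambda_k^{2t}$ since the $u_k u_k^\top$ are orthonormal in Frobenius inner product; the left side is exactly $\frac1n\sum_i\|T^t(i,\cdot)-s\|_2^2$. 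Second, the reduction to single cuts: as $q$ increases from $q_1$ to $q_2$, the set of cut links only grows, so $T^*(q_2)$ is obtained from $T^*(q_1)$ by cutting an additional (symmetric) set of links one at a time, each step weakly increasing $\tau$ by the above; chaining gives $\tau(q_2)\ge\tau(q_1)$. The main obstacle is making sure the $\lambda_n\ge 0$ condition is genuinely preserved along the whole induction — it is, since cutting a link only moves weight onto the diagonal — and being careful that the symmetric cut $(i,j)$ together with $(j,i)$ keeps $\tilde T$ symmetric so that the spectral/Rayleigh machinery applies at every intermediate step.
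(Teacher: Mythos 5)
Your overall strategy is the same as the paper's: reduce the average convergence metric to $\frac{1}{n}\sum_{k\ge 2}\lambda_k^{2t}$, show every eigenvalue moves monotonically under a single cut via a Rayleigh-quotient/Dirichlet-energy comparison (Courant--Fischer), and use $T=\tfrac12 I+\tfrac12 M$ to keep all eigenvalues nonnegative so that the change in each eigenvalue translates into a termwise change in the sum. The paper does exactly this with the energy functional $\mathcal{E}(f,T)=\tfrac12\sum_{i,j}(f_i-f_j)^2 s_i T_{ij}$ and the identity $1-\lambda_j=\min_f\mathcal{E}(f,T)$ subject to orthogonality constraints.

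However, your execution contains a sign error that reverses the key inequality, followed by a second reversed inference that happens to cancel it. Computing the change in the quadratic form when the symmetric pair $(i,j)$ is cut and its weight moved to the diagonal gives
\[
v^\top \tilde T v - v^\top T v \;=\; T_{ij}v_i^2 + T_{ij}v_j^2 - 2T_{ij}v_iv_j \;=\; T_{ij}(v_i-v_j)^2 \;\ge\; 0,
\]
so $R_{\tilde T}(v)\ge R_T(v)$ for every $v$ and, by Courant--Fischer, $\tilde\lambda_k\ge\lambda_k$ for all $k$: cutting a link makes every eigenvalue weakly \emph{larger}, not smaller. (This is the substance of the result --- the chain becomes lazier and mixes more slowly.) Your claim that $v^\top Tv - v^\top\tilde Tv = T_{ij}(v_i-v_j)^2\ge 0$ has the sign flipped. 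You then compound this with a second error: from ``$\frac1n\sum_{k\ge2}\tilde\lambda_k^{2t}\le\frac1n\sum_{k\ge2}\lambda_k^{2t}$ for every $t$'' you conclude $\tilde\tau\ge\tau$, but a pointwise \emph{smaller} sum means the threshold $<\epsilon$ is reached no \emph{later}, i.e.\ $\tilde\tau\le\tau$; the same backwards inference already appears in your first paragraph (``making every eigenvalue smaller makes the whole sum smaller \dots and therefore makes $\tau$ weakly larger''). The two mistakes cancel, so you land on the correct final statement, but both displayed intermediate claims are false as written. With the signs corrected --- eigenvalues weakly increase, hence by nonnegativity each $\lambda_k^{2t}$ weakly increases, hence the sum is pointwise larger and $\tau$ weakly increases --- your argument becomes a correct and essentially identical rendering of the paper's proof; the remaining bookkeeping (the spectral/Frobenius identity, preservation of $T_{ii}\ge\tfrac12$ and of symmetry along the chain of single cuts) is sound.
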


The proof consists of two steps. First, we show that the Dirichlet energy of $T^*$ is lower than the one of $T$. Second, we rely on results from the Markov chain literature to relate the Dirichlet energy to the whole spectrum of eigenvalues of $T$ and $T^*$. The result holds for the large class of symmetric networks, which captures many types of real social networks, such as trust, friendship and family networks. It does, however, exclude some other types of social networks, such as Twitter. The limitation to symmetric networks is necessary for tractability because it implies that the removal of links does not change the influence each agent has on the final consensus outcome. In other words, it simplifies the analysis by disentangling the effect of confirmation bias on convergence time from its effect on the distribution of influence, which may also affect convergence time and is the focus of the next section. The simulations in section \ref{simulations results} show that the result in Theorem \ref{speed result} largely holds in asymmetric networks as well.

From a technical standpoint, the restriction that $T_{ii} \geq \frac{1}{2}$ for all $i$ ensure the Markov chain has no negative eigenvalues -- this is a standard approach in the mathematics literature.\footnote{See for example \cite{levin2009, mcnew2011eigenvalue, basu2014characterization}. This assumption is sufficient but not necessary. The Markov chain needs only to be positive semi-definite.} It ensures we avoid situations where the Markov chain is ``nearly'' periodic.
Intuitively, we can think of the periods in the model as corresponding to relatively short periods of chronological time, so it would be unreasonable for agents to have large swings in their belief from one period to the next.

A tempting interpretation of Theorem \ref{speed result} is that convergence time is longer with confirmation bias simply because $T^*$ is a sparser network than $T$. This is not correct. Recall that there are two features that define what it means to have confirmation bias. The first one is that someone with confirmation bias ignores information from others whose beliefs are too different -- this is the removal of links that makes the network sparser. The second one is what the initial quote dubs the backfire effect -- the weight of these links is fully redirected to the self-link. The counterexample in Figure \ref{fig:speed} shows that both features of confirmation bias are essential for the result in Theorem \ref{speed result} to hold.

In particular, consider the following extended model that nests confirmation bias as a special case. As in the current set-up, $i$ removes a link with $j$ if $|{x_{i 0} - x_{j 0}}| > 1-q$. The extension is that $i$ reroutes a fraction $\phi \in [0,1]$ of the severed link to herself, and spreads out a fraction $1- \phi$ across her remaining links in proportion to the strength of each link. Clearly, $\phi=1$ is the special case in which the rerouting of links is determined by confirmation bias. Consider network $T$ and the allocation of initial beliefs in Figure \ref{figa:speed}. If $q\in [0, 0.3]$ and $\phi=1$, so rerouting of links is determined by confirmation bias, then the resulting network is $T^*$ in Figure \ref{figb:speed}. As expected from Theorem \ref{speed result}, the convergence time in $T^*$ is 135 periods, which is longer than the 33 periods required in $T$ when there is no confirmation bias. If, instead, $\phi=0$ then the resulting network $T^{*}_{0}$ is displayed in Figure \ref{figc:speed} and convergence time is 12 periods. Notice that $T^{*}_{0}$ is \emph{sparser} than the initial network $T$, but convergence in $T^{*}_{0}$ is \emph{faster} than in $T$. It turns out that convergence is faster for most values of $\phi$, Figure \ref{figd:speed} displays network $T^{*}_{0.65}$ for the critical value of $\phi=0.65$ such that the convergence time is the same as in $T$ -- for any $\phi<0.65$ the convergence time is faster than in $T$ even though all these networks are sparser than $T$.\footnote{When considering the average consensus time -- which depends on the sum of squared eigenvalues (see Appendix \ref{proofs}) and abstracts from the initial beliefs -- this counterexample holds for the range $\phi \in [0, 0.27]$.}

\begin{figure}[h]
\caption{}\label{fig:speed}
\captionsetup[subfigure]{oneside,margin={1cm,0cm}}
\begin{subfigure}[t]{0.45\linewidth}
\centering
\caption{Network $T$. Convergence takes $33$ periods.}\label{figa:speed}
\vspace{2mm}
\resizebox{7.2cm}{5.4cm}{ \fbox{
\begin{tikzpicture}[->,>=stealth',shorten >=1pt,auto,node distance=4cm,semithick, scale=0.5]
  \tikzstyle{every state}=[fill=white, draw, text=black, circle]
  \node[state, label = {$x_{A0} = 0$} ] 		(A)                {$A$};
  \node[state, draw=none]						(dummy) [right = 2cm of A] {} ;
  \node[state, label = {$x_{B0} = 1$} ] 		(B) [right =5cm of A]  {$B$};
  \node[state, label =left:{$x_{C0} = 0.7$} ] 	(C) [below =4cm of  dummy] {$C$};

  \path (A) edge [loop left]		node {0.55} (A)
  			edge [bend left=10]		node {0.4} (B)
  			edge [bend left=10]		node {0.05} (C)
        (B) edge [loop right]		node {0.55} (B)
        	edge [bend left=10]		node {0.4} (A)
            edge [bend left=10]  	node {0.05} (C)
        (C) edge [loop below]		node {0.9} (C)
        	edge [bend left=10]		node {0.05} (A)
            edge [bend left=10] 	node {0.05} (B);  
\end{tikzpicture}  }  }
\end{subfigure}
\hfill
\begin{subfigure}[t]{0.45\linewidth}
\centering
\caption{Network $T^{*}$: $\phi = 1$ Convergence takes $135$ periods.}\label{figb:speed}
\vspace{2mm}
\resizebox{7.2cm}{5.4cm}{ \fbox{
\begin{tikzpicture}[->,>=stealth',shorten >=1pt,auto,node distance=4cm,semithick, scale=0.5]
  \tikzstyle{every state}=[fill=white, draw, text=black, circle]
  \node[state, label = {$x_{A0} = 0$} ] 		(A)                {$A$};
  \node[state, draw=none]						(dummy) [right = 2cm of A] {} ;
  \node[state, label = {$x_{B0} = 1$} ] 		(B) [right =5cm of A]  {$B$};
  \node[state, label =left:{$x_{C0} = 0.7$} ] 	(C) [below =4cm of  dummy] {$C$};

  \path (A) edge [loop left]		node {0.95} (A)
  			edge [bend left=10]		node {0.05} (C)
        (B) edge [loop right]		node {0.95} (B)
            edge [bend left=10]  	node {0.05} (C)
        (C) edge [loop below]		node {0.9} (C)
        	edge [bend left=10]		node {0.05} (A)
            edge [bend left=10] 	node {0.05} (B);  
\end{tikzpicture}  }  }
\end{subfigure}

\vspace*{8mm}

\begin{subfigure}[t]{0.45\linewidth}
\centering
\caption{Network $T^{*}_{0}$, $ \phi = 0$ Convergence takes $12$ periods.}\label{figc:speed}
\vspace{2mm}
\resizebox{7.2cm}{5.4cm}{  \fbox{
\begin{tikzpicture}[->,>=stealth',shorten >=1pt,auto,node distance=5cm,semithick,scale=0.5]
  \tikzstyle{every state}=[fill=white, draw, text=black, circle]
  \node[state, label = {$x_{A0} = 0$} ] 		(A)                {$A$};
  \node[state, draw=none]						(dummy) [right = 2cm of A] {} ;
  \node[state, label = {$x_{B0} = 1$} ] 		(B) [right =5cm of A]  {$B$};
  \node[state, label =left:{$x_{C0} = 0.7$} ] 	(C) [below =4cm of  dummy] {$C$};

  \path (A) edge [loop left]		node {0.55} (A)
  			edge [bend left=10]		node {0.45} (C)
        (B) edge [loop right]		node {0.55} (B)
            edge [bend left=10]  	node {0.45} (C)
        (C) edge [loop below]		node {0.9} (C)
        		edge [bend left=10]	node {0.05} (A)
            edge [bend left=10] 	node {0.05} (B);  
\end{tikzpicture}  }  }

\end{subfigure}
\hfill
\begin{subfigure}[t]{0.45\linewidth}
\centering
\caption{Network $T^{*}_{0.65}$, $ \phi = 0.65$ Convergence takes $33$ periods.}\label{figd:speed}
\vspace{2mm}
\resizebox{7.2cm}{5.4cm}{  \fbox{
\begin{tikzpicture}[->,>=stealth',shorten >=1pt,auto,node distance=5cm,semithick, scale=0.5]
  \tikzstyle{every state}=[fill=white, draw, text=black, circle]
  \node[state, label = {$x_{A0} = 0$} ] 		(A)                {$A$};
  \node[state, draw=none]						(dummy) [right = 2cm of A] {} ;
  \node[state, label = {$x_{B0} = 1$} ] 		(B) [right =5cm of A]  {$B$};
  \node[state, label =left:{$x_{C0} = 0.7$} ] 	(C) [below =4cm of  dummy] {$C$};

  \path (A) edge [loop left]		node {0.81} (A)
  			edge [bend left=10]		node {0.19} (C)
        (B) edge [loop right]		node {0.81} (B)
            edge [bend left=10]  	node {0.19} (C)
        (C) edge [loop below]		node {0.9} (C)
        		edge [bend left=10]	node {0.05} (A)
            edge [bend left=10] 	node {0.05} (B);  
\end{tikzpicture}  }  }
\end{subfigure}
\end{figure}

\newpage
\subsection{Influence and influencers} \label{influence}

A single Bayesian agent who aggregates all the information in a society would weight each initial signal equally in their posterior. However, when the society learns through DeGroot social learning, the weight that an agent's initial signal has in the final consensus depends on that agent's network position. Confirmation bias alters the network structure, so it may affect an agent's influence on the consensus. Agents who cut links put more weight on their initial signal and less on others' beliefs. Intuitively, we might expect that this means they have greater influence on the final consensus. This section examines how confirmation bias changes and redistributes influence, and shows that this intuition does not follow through.

An appealing feature of the DeGroot framework is that an agent's influence is equal to their eigenvector centrality, and this notion is captured by the following standard definition.

\begin{defn} \label{defn influence}
The influence, $s_{i}$, of agent $i$ is the $i$\textsuperscript{th} entry in the left-hand unit eigenvector associated with the first eigenvalue $\lambda_{1} \equiv 1$:
\begin{align*}
s \ \lambda_{1} = s \ T \ \text{where} \ s = (s_{1} , s_{2} , ... , s_{n} )  \quad \implies \quad s_{i} = \sum_{j=1}^{n} T_{j i} \ s_{j}
\end{align*}
\end{defn}

Our first result identifies a class of networks in which confirmation bias does not affect agents' influence on the final consensus.

\begin{rem} \label{equal influence no change}
If $T$ is symmetric and $T^*$ is strongly connected, then all agents have equal influence and confirmation bias does not alter any agent's influence.
\end{rem}

In a symmetric network all agents have the same influence.\footnote{This is because the first left-hand eigenvector of any symmetric $n \times n$ Markov chain is a vector with all entries $1 / n$.} Given that the level of confirmation bias $q$ is the same for all agents, if $i$ cuts the $T_{ij}$ link then $j$ will also cut the $T_{ji}$ link. The resulting network $T^*$ is, as a consequence, also symmetric, and therefore all agents will continue to have the same influence as in $T$.

When the network is not symmetric, agents' influence varies with their position in the network. Because confirmation bias removes some links from the network, it will also redistribute the influence across agents.
The following remark shows that when there is just a single agent who cuts links due to confirmation bias, their influence unambiguously rises.

\begin{rem}\label{single agent influence}
If exactly one agent $i$ cuts one or more links, then their influence, $s_{i}$, strictly increases.
\end{rem}

An appealing conjecture is that this extends to a situation where many agents cut links.
Unfortunately, Figure \ref{fig:influence} demonstrates that this is not the case. When the level of confirmation bias is in the range $q\in[0.3, 0.7)$, and initial beliefs are $(0.2, 0.5, 0.75, 0.9)$, the listening structure $T$ on the left changes to $T^*$ on the right because $A$ removes links with $C$ and $D$, and $D$ removes the link with $A$. Among these agents who cut links, $A$'s influence rises in $T^*$, but $D$'s influence \emph{decreases} in $T^*$ despite the fact she has removed a link. The reason for this decrease is that in $T^*$ the most influential agent $A$ does not listen to $D$ any more, and therefore $D$ does not directly affect the beliefs of the most influential agent. Conversely, $B$ does not cut any links but her influence \emph{rises} because she is the only agent in $T^*$ that the influential agent $A$ still listens to.

\begin{figure}[H]
\caption{Original network $T$ on the left and resulting network $T^*$ with confirmation bias in the $[0.3,0.7)$ range.}\label{fig:influence}
\begin{subfigure}[t]{0.45\linewidth}
\resizebox{8cm}{7cm}{ \fbox{
\begin{tikzpicture}[->,>=stealth',shorten >=1pt,auto,node distance=5cm,semithick]
  \tikzstyle{every state}=[fill=white, draw, text=black, circle]
  \node[state, label = {$x_{A0} = 0.2$} ] 		(A)                {$A$};
  \node[state, label = {$x_{B0} = 0.5$} ] 		(B) [right of =A]  {$B$};
  \node[state, label =below:{$x_{C0} = 0.75$} ] 	(C) [below of =B]  {$C$};
  \node[state, label =below:{$x_{D0} = 0.9$} ] 	(D) [below of =A]  {$D$};

  \path (A) 	edge [bend left=10]	node {0.55} (B)
  			edge             	node {0.25} (C)
  			edge [bend left=10]	node {0.2}  (D)
        (B) edge [bend left=10]	node {0.8} (A)
            edge [bend left=10]  node {0.2} (C)
        (C) edge [bend left=10]	node {0.7} (B)
            edge [bend left=10] 	node {0.3} (D)
        (D) edge [bend left=10]	node {0.7} (A)
            edge [bend left=10]	node {0.3} (C);  

\end{tikzpicture} } }
\vspace{3mm}
\caption{Network $T$: \\ $ s = (0.353 , 0.328 , 0.192 , 0.128)^{\prime}$}
\end{subfigure}
\hfill
\begin{subfigure}[t]{0.45\linewidth}
\resizebox{8cm}{7cm}{ \fbox{
\begin{tikzpicture}[->,>=stealth',shorten >=1pt,auto,node distance=5cm,semithick]
  \tikzstyle{every state}=[fill=white, draw, text=black, circle]
  \node[state, label = {$x_{A0} = 0.2$} ] 		(A)                {$A$};
  \node[state, label = {$x_{B0} = 0.5$} ] 		(B) [right of =A]  {$B$};
  \node[state, label =below:{$x_{C0} = 0.75$} ] 	(C) [below of =B]  {$C$};
  \node[state, label =below:{$x_{D0} = 0.9$} ] 	(D) [below of =A]  {$D$};

  \path (A) 	edge [bend left=10]	node {0.55} (B)
  			edge [loop left] 	node {0.45} (A)
        (B) edge [bend left=10]	node {0.8} (A)
            edge [bend left=10]  node {0.2} (C)
        (C) edge [bend left=10]	node {0.7} (B)
            edge [bend left=10] 	node {0.3} (D)
        (D) edge [loop left]		node {0.7} (D)
            edge [bend left=10]	node {0.3} (C);
\end{tikzpicture} } }
\vspace{3mm}
\caption{Network $T^{*}$: \\ $s^{*} = (0.481 , 0.330 , 0.094 , 0.094)^{\prime}$}
\end{subfigure}
\end{figure}
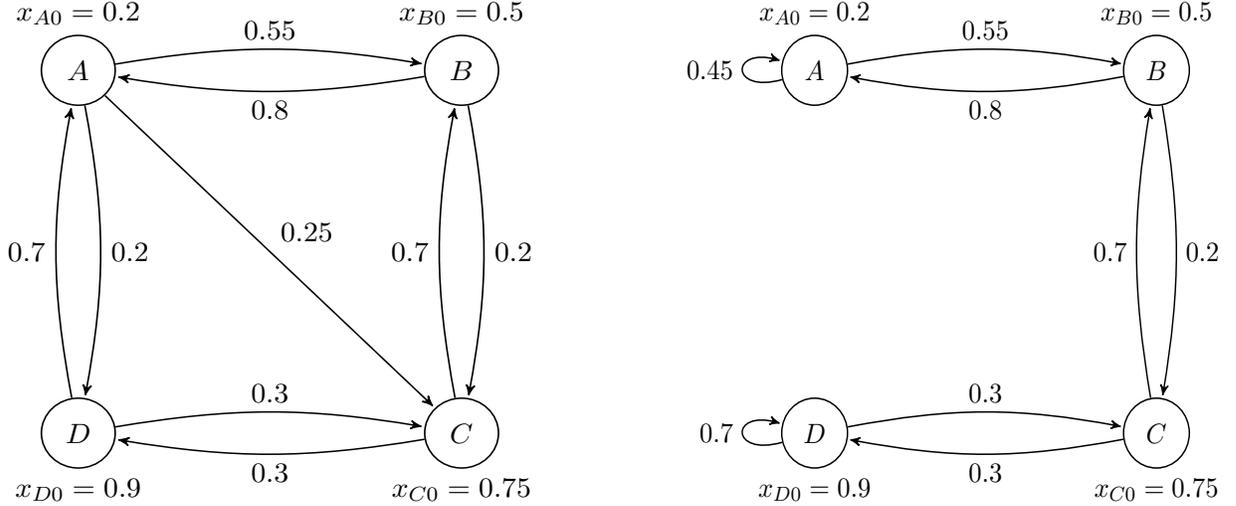

The message from the counterexample is that understanding the change in an agent $i$'s influence involves keeping track of the change in influence of the agents who listen to $i$, in addition to whether $i$ cuts any links. Definition \ref{influencer} captures this notion by identifying an \emph{influencer} as someone who (1) cuts links with other agents, but (2) no other agent cuts a link with her, and (3) continues listening only to agents who satisfy (1) and (2).


\begin{defn} \label{influencer}
An agent $i$ is an \emph{influencer} if $d_{i, \text{out}}(T^{*}) < d_{i, \text{out}}(T)$ and $d_{i, \text{in}}(T^{*}) = d_{i, \text{in}}(T)$; and, for every $j\in N_{out}^{*}(i)$, we have that $d_{j, \text{out}}(T^{*}) < d_{j, \text{out}}(T)$ and $d_{j, \text{in}}(T^{*}) =d_{j, \text{in}}(T)$.
\end{defn}

The opposite of an influencer is a \emph{listener} -- an agent who (1) does not cut any links with other agents, but (2) others cut links with her, and (3) continues to be listened to only by other agents who satisfy (1) and (2) after confirmation bias has changed the network.

\begin{defn} \label{listener}
An agent $i$ is a \emph{listener} if $d_{i, \text{out}}(T^{*}) = d_{i, \text{out}}(T)$ and $d_{i, \text{in}}(T^{*}) < d_{i, \text{in}}(T)$; and, for every $j\in N_{in}^{*}(i)$, we have that $d_{j, \text{in}}(T^{*}) < d_{j, \text{in}}(T)$ and $d_{j, \text{out}}(T^{*}) = d_{j, \text{out}}(T)$.
\end{defn}

A final restriction needed to understand shifts of influence at the individual level is to focus on a society that is ``wise'', as defined by \citet{golub2010naive}. Informally, a network is wise when each agent has a negligible influence on the outcome the society converges to. The following definition states this formally.

\begin{defn} \label{wisdom}
A large network $T$ is \emph{wise} if and only if $s_{i} \sim \mathcal{O}(\frac{1}{n})$ for all $i$, and therefore $s_i \approx 0$ as $n$ gets large.
\end{defn}

\noindent
Armed with these definitions, we can characterize how confirmation bias changes the influence of influencers and listeners in a wise society.

\begin{prop}\label{influencer listener result}
Suppose $T$ and $T^*$ are large, wise networks, the influence of influencers rises due to confirmation bias and the influence of listeners declines due to confirmation bias.
\end{prop}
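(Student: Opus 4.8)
The plan is to treat $T^{*}$ as a perturbation of $T$ and track how the left invariant vector moves. Write $\Delta := s^{*}-s$. Subtracting $s=sT$ from $s^{*}=s^{*}T^{*}$ gives $\Delta = \Delta T^{*} + s(T^{*}-T)$, i.e. the linear system $\Delta(I-T^{*}) = b$ with $b := s(T^{*}-T)$, together with the normalisation $\Delta\mathbf{1}=0$ (both $s,s^{*}$ are probability vectors, and note $b\mathbf{1}=0$). Because confirmation bias only moves weight from a severed out-link onto the severing agent's self-link, $b_{k} = \delta_{k}s_{k} - \sum_{j:\, j\text{ severs }j\to k} T_{jk}s_{j}$, where $\delta_{k}\ge 0$ is the total weight $k$ reallocates to itself. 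The two degree conditions in Definitions \ref{influencer}--\ref{listener} are exactly what pins down the sign of $b$ at the node of interest: for an influencer $i$ no link into $i$ is severed, so $b_{i}=\delta_{i}s_{i}>0$; for a listener $i$ one has $\delta_{i}=0$ and at least one in-link severed, so $b_{i} = -\sum_{j}T_{ji}s_{j}<0$. (By Remark \ref{equal influence no change} influencers and listeners exist only in asymmetric networks, so we are automatically outside that case.)

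Next I would solve the system with the fundamental matrix $Z = (I - T^{*} + \mathbf{1}s^{*})^{-1}$ of $T^{*}$: since $b\mathbf{1}=0$, the unique mean-zero solution is $\Delta = bZ$, so $\Delta_{i} = \sum_{k}b_{k}Z_{ki}$. Substituting the standard identity $Z_{ki} = Z_{ii} - s^{*}_{i}\,\E_{k}[\tau_{i}]$, where $\tau_{i}$ is the hitting time of $i$ for the $T^{*}$-walk and $\E_{i}[\tau_{i}]:=0$, and using $\sum_{k}b_{k}=0$, everything collapses to
\[
\Delta_{i} \;=\; -\,s^{*}_{i}\sum_{k}b_{k}\,\E_{k}[\tau_{i}].
\]
Regrouping the right-hand side by the severing agent $m$ (with $S_{m}$ the set of neighbours $m$ drops) rewrites this as $\Delta_{i} = s^{*}_{i}\sum_{m}s_{m}\sum_{l\in S_{m}}T_{ml}\big(\E_{l}[\tau_{i}]-\E_{m}[\tau_{i}]\big)$. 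For an influencer the block $m=i$ contributes $s^{*}_{i}s_{i}\sum_{l\in S_{i}}T_{il}\E_{l}[\tau_{i}]>0$ and no block has $i\in S_{m}$; for a listener the block $m=i$ is absent while for every $m$ severing $m\to i$ the term $l=i$ contributes $-s^{*}_{i}s_{m}T_{mi}\E_{m}[\tau_{i}]<0$. These are the ``direct'' effects; the remaining blocks are feedback through everybody else's severed links.

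The last step is to show that in a large wise network the feedback is of strictly lower order than the direct effect, so that $\Delta_{i}$ inherits the sign of $b_{i}$ — indeed $\Delta_{i}=b_{i}(1+o(1))$, positive for influencers and negative for listeners. The mechanism is that wisdom forces $s^{*}_{i}=\mathcal{O}(1/n)$, hence $\E_{k}[\tau_{i}]\asymp 1/s^{*}_{i}$ for all starting points $k$ except a set of vanishing $b$-mass that sits very near $i$, and those near-$i$ points contribute little precisely because their hitting times are small; together with $\sum_{k}b_{k}=0$ this makes the feedback negligible relative to the leading term $s^{*}_{i}\,\overline{\E[\tau_{i}]}\,b_{i}\asymp b_{i}$. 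I expect this to be the main obstacle: the bare definition of wisdom does not by itself bound the dispersion of $\{\E_{k}[\tau_{i}]\}_{k}$, so one must either feed in a quantitative mixing property that ``reasonable'' wise networks enjoy, or — closer to the spirit of the paper — exploit the influencer/listener structure itself (the relevant in-/out-neighbourhoods consist of clean severers or cleanly-severed nodes, so the severing agents cannot be arranged pathologically around $i$) to control the feedback. A more elementary alternative that avoids hitting times is to iterate the scalar recursion $\Delta_{i}(1-T_{ii}) = s^{*}_{i}\delta_{i} + \sum_{j\in N^{in}_{i}\setminus\{i\}}T_{ji}\Delta_{j}$ (valid for an influencer $i$, since no in-link of $i$ is severed) down the in-neighbourhoods and argue that the resulting weighted sum of $b$-terms is dominated by $s^{*}_{i}\delta_{i}$ as $n\to\infty$; this runs into the same feedback-control issue.
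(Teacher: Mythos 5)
Your setup is correct and takes a genuinely different route from the paper. You work with the global perturbation identity $\Delta(I-T^{*})=s(T^{*}-T)$, invert via the fundamental matrix, and convert to hitting times, isolating a ``direct'' term whose sign is pinned down by the in-/out-degree conditions in Definitions \ref{influencer} and \ref{listener}. The paper instead never leaves the local eigenvector equation: it substitutes the balance equation for each $j\in N(i)$ into the one for $i$, obtaining a two-step expansion of $s_i$ in terms of the $s_k$ for $k$ at distance two, does the same for $s_i^{*}$, and subtracts. The degree conditions on $i$ \emph{and on every $j$ that $i$ still listens to} are used to sign every surviving term, and the wisdom assumption enters as the explicit approximation $T_{ji}T_{kj}(s_k^{*}-s_k)\approx 0$, i.e.\ third-round effects are declared negligible. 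Your formulation is cleaner and makes transparent exactly where the sign comes from ($b_i=\delta_i s_i$ versus $b_i=-\sum_j T_{ji}s_j$), and in particular your derivation of $\Delta_i=-s_i^{*}\sum_k b_k\,\E_k[\tau_i]$ is an exact identity rather than a truncation.

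The gap is the one you name yourself: you do not establish that the feedback blocks (severing agents $m\neq i$ with $i\notin S_m$) are of lower order than the direct block, and the bare Definition \ref{wisdom} ($s_i\sim\mathcal{O}(1/n)$) does not control the dispersion of $\{\E_k[\tau_i]\}_k$, so $\Delta_i = b_i(1+o(1))$ does not follow. This is a real missing step, not a formality: the feedback sum aggregates $\mathcal{O}(n)$ terms of indeterminate sign, each weighted by $s_m T_{ml}(\E_l[\tau_i]-\E_m[\tau_i])$, and without a uniform bound on these hitting-time differences relative to $1/s_i^{*}$ the sum can in principle swamp $b_i$. You should be aware, however, that the paper's own proof does not resolve this either --- it truncates at distance two and \emph{assumes} the residual is zero, reading that license into the wisdom assumption (the text states that wisdom ``allows us to ignore the effect on an agent's influence of changes in the neighbours' neighbours influence''). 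So your proposal is one approximation assumption short of the published argument; to match the paper you would need to posit, analogously, that $\sum_{m\neq i}s_m\sum_{l\in S_m}T_{ml}(\E_l[\tau_i]-\E_m[\tau_i])$ is negligible relative to $s_i\sum_{l\in S_i}T_{il}\E_l[\tau_i]$, and to do better than the paper you would need the quantitative mixing or structural input you sketch at the end.
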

This result shows that confirmation bias increases the influence of a particular subgroup of agents -- the influencers -- who are embedded in a neighbourhood of other agents like them. Confirmation bias has the opposite effect on listeners. Given the restrictions imposed by definitions \ref{influencer} and \ref{listener}, a large network would typically have few (if any) influencers/listeners. An implication is that we are able to characterize changes in influence at the individual level only for a small subset of agents. Technically, the wisdom assumption allows us to ignore the effect on an agent's influence of changes in the neighbours' neighbours influence, which helps with tractability.

\subsection{Polarization} \label{polarization}

A commonly held view is that our society is becoming increasingly polarized. Intuitively, confirmation bias may be a force that pushes society toward greater polarization by preventing communication between individuals with different views, and therefore delaying the process of finding common ground. Section \ref{speed of learning} confirms this intuition in the long-run -- a society without confirmation bias converges earlier than one with confirmation bias, and therefore is, by definition, less polarized in the gap between the two convergence times. This section shows that, subject to a mean-field assumption, confirmation bias causes society to be more polarized at each point in time.

There are numerous metrics in economics and other social sciences that capture the notion of polarization (e.g. \citet{esteban1994measurement}). We adopt what is perhaps the most basic metric in a set-up with a continuum of beliefs.

\begin{defn}\label{defn variance}
The \emph{polarization} of a society at time $t$ is equal to $var(x_{t}) = \frac{1}{N} \sum_{i} ( x_{i,t} - \mu )^{2}$.
\end{defn}

While the asymptotic behavior of Markov systems is well-studied, it is notoriously challenging to make statements about their intermediate state. For tractability reasons, we assume that the distribution of initial signals is uniform, and also make the following mean-field assumption to reduce the level of noise in the system.

\begin{ass}[\textbf{Mean-field}]\label{mean field ass}
Each agent $i$'s neighborhood is representative of the society as a whole, so $\sum_{j \in N(i) \backslash \{ i \} } x_{j 0} \approx \mu$ for all $i\in N$ and $x_{j} \indep T_{ij}$ for all $i$.
\end{ass}

There are two justifications for using a mean-field approach. Theoretically speaking, it allows us to isolate the randomness introduced by confirmation bias from the intrinsic randomness of the system. The short and medium term evolution of beliefs will depend on the random initial allocation of signals. For instance, a handful of agents may stick to an extreme position for a while because they happen to form a close-knit community with similar beliefs to begin with. This particular instance of the evolution of the system will, therefore, have a high initial level of polarization that is unaffected by confirmation bias, but may make it more difficult to identify the effect of the bias on polarization. By assuming that each agent's neighborhood is representative of society as a whole, we block this channel -- allowing us to explore whether confirmation bias on its own increases polarization.

In practice, the mean-field assumption is a reasonable approximation in a large society if we ignore the presence of homophily -- the tendency to associate with like-minded people. In a large society, individuals have several friends and therefore the size of one's neighborhood ensures it is approximately a representation of the wider society. The simulations in section \ref{simulations results} show that the following result on polarization holds even if we drop the mean-field assumption.

\begin{prop}\label{polar result}
Assume that $x_{0} \sim U [0,1]$ and the mean-field assumption holds. Then at each point in time polarization is (weakly) monotonically increasing in the strength of confirmation bias $q$.
\end{prop}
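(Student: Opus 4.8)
The plan is to use the mean-field assumption to collapse the $n$-dimensional linear dynamics $\mathbf{x}_t=(T^{*})^{t}\mathbf{x}_0$ into $n$ decoupled scalar recursions, one per agent, indexed only by that agent's self-weight $T^{*}_{ii}$, and then to observe that confirmation bias moves these recursions in a single monotone direction: it can only (weakly) raise self-weights.

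First I would pin down the center of the belief distribution. Since $x_0\sim U[0,1]$, the cross-sectional mean $\mu=\tfrac1n\sum_i x_{i0}$ converges to $\tfrac12$ and, crucially, the realized signals are symmetric about $\mu$ in the large-$n$ limit. Under Assumption \ref{mean field ass} the $T^{*}_{ij}$-weighted average of $i$'s non-self neighbours' beliefs is $\approx\mu$, so the one-step update reduces to $x_{i,1}-\mu=T^{*}_{ii}(x_{i0}-\mu)$. The confirmation-bias self-weight $T^{*}_{ii}=T_{ii}+\sum_{j:\,|x_{i0}-x_{j0}|>1-q}T_{ij}$ depends on the realized beliefs only through $|x_{i0}-\mu|$ — more extreme initial beliefs sever weakly more links — so one checks by induction on $t$ that $x_{i,t}-\mu=(T^{*}_{ii})^{t}(x_{i0}-\mu)$: at each date this is an odd function of $x_{i0}-\mu$, so the belief distribution stays symmetric about $\mu$, the cross-sectional mean is preserved at $\mu$, each agent's neighbourhood remains representative, and the scalar recursion carries to the next date. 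Consequently
\begin{align*}
var(x_t)=\frac1n\sum_i(x_{i,t}-\mu)^2=\frac1n\sum_i (T^{*}_{ii})^{2t}(x_{i0}-\mu)^2 .
\end{align*}

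The remaining step is a short monotonicity argument. Raising $q$ lowers the threshold $1-q$, so the set $\{j:|x_{i0}-x_{j0}|>1-q\}$ of links that $i$ severs can only grow, and since the weight of every severed link is reassigned to $i$'s self-loop, $T^{*}_{ii}$ is weakly increasing in $q$ for every $i$ and every realization of signals. Because $T^{*}_{ii}\in[0,1]$ and $y\mapsto y^{2t}$ is non-decreasing on $[0,1]$ for every $t\ge 0$, each nonnegative summand $(T^{*}_{ii})^{2t}(x_{i0}-\mu)^2$ — whose coefficient $(x_{i0}-\mu)^2$ does not depend on $q$ — is weakly increasing in $q$, and hence so is $var(x_t)$. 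The argument is pointwise in the signal draw, so it also delivers monotonicity of $\E[var(x_t)]$; at $t=0$ the expression does not depend on $q$, consistent with the ``weakly'' in the statement. Note that, unlike Theorem \ref{speed result}, no condition such as $T_{ii}\ge\tfrac12$ is needed here, since the monotonicity uses only $T^{*}_{ii}\in[0,1]$.

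I expect the main obstacle to be the iteration in the first step rather than the monotonicity in the second. Assumption \ref{mean field ass} is stated in terms of the initial signals $x_{j0}$, so the delicate point is justifying that neighbourhoods remain representative — equivalently, that no net drift of the cross-sectional mean is introduced — at later dates; this is exactly where symmetry of $U[0,1]$ about its mean does work alongside the mean-field assumption, via the fact that $T^{*}_{jj}$ is an even function of $x_{j0}-\mu$. Controlling the mean-field error terms, which accumulate over $t$ steps, is the part that would need the most care for a fully rigorous version rather than the mean-field heuristic.
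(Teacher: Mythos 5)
Your overall strategy is the paper's: use the mean-field assumption to collapse the dynamics to a per-agent scalar recursion in the self-weight (this is the paper's Lemma \ref{eq with mean field}), write $var(x_t)=\frac1n\sum_i T_{ii}^{2t}(x_{i0}-\mu)^2$, and then argue monotonicity in $q$. The monotonicity of $T^*_{ii}$ in $q$ that you invoke is correct. However, there is a genuine gap in the step where you claim that ``the $T^{*}_{ij}$-weighted average of $i$'s non-self neighbours' beliefs is $\approx\mu$.'' Assumption \ref{mean field ass} asserts $x_j \indep T_{ij}$ for the \emph{original} network $T$; it cannot be carried over to $T^*$, because confirmation bias severs precisely the links to agents whose beliefs are far from $x_{i0}$. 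The surviving neighbourhood is therefore systematically biased toward $i$'s own belief: with $x_0\sim U[0,1]$, $\Cov(x_{j0},T^*_{ij})$ is nonzero and has the same sign as $x_{i0}-\mu$ (an agent with $x_{i0}>\tfrac12$ disproportionately keeps listening to agents with $x_{j0}>\tfrac12$). Consequently your identity $x_{it}-\mu=(T^*_{ii})^t(x_{i0}-\mu)$ does not hold on $T^*$, and the variance formula you differentiate in $q$ is not the correct one. Your closing paragraph gestures at neighbourhood representativeness, but only to argue that the cross-sectional mean is preserved; preservation of the aggregate mean does not imply that each agent's \emph{surviving} neighbourhood averages to $\mu$, which is what your one-step reduction needs.

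The paper's proof confronts this directly: it writes $x^*_{ik}=(1-T_{ii}^{*k})\mu+T_{ii}^{*k}x_{i0}+\epsilon_{ik}$ with $\epsilon_{ik}=T^*_{ii}\epsilon_{i,k-1}+\Cov(x_{j,k-1},T^*_{ij})$, and shows by induction that $\epsilon_{ik}$ has the same sign as $x_{i0}-\mu$ and that its magnitude is increasing in $q$. So there are two polarizing forces --- the larger self-weight that you capture, and the selection bias in the surviving neighbourhood that you omit --- and both push variance up. Your conclusion happens to survive only because the omitted term reinforces rather than offsets the effect you do track, but that has to be established; as written, your derivation understates $var(x_t)$ under confirmation bias and the monotonicity argument is applied to the wrong expression. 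To repair the proof you would need to introduce and sign the covariance/error term and show it is monotone in $q$, which is exactly the content of the paper's induction.
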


There are two steps to the proof. First, we show that, under the mean-field assumption, we can characterize the belief for each agent at time $t$ as a weighted average between the initial belief and the average belief $\mu$ in society. Second, we prove that the variance of these beliefs is increasing in the strength of the self-loops, and, therefore, in the amount of confirmation bias.

\subsection{Optimal networks} \label{optimal network}

A benevolent social planner would want to maximize the chance that society converges to the truth. Confirmation bias works against the objective of the social planner in two ways. We examined the first one in section \ref{influence} -- it redistributes the influence across agents so that some initial signals are weighted more than others. The second one is that confirmation bias may break the network into different components, which leads to information loss as the content of some initial signals is not aggregated. Throughout the paper we have ignored information loss by assuming that $T^*$ is strongly connected, but in this section we relax this assumption to investigate the social planner's decision.

A benchmark case is an omniscient social planner who can observe the initial allocation of signals and the level of confirmation bias, and can then engineer the network. Appendix \ref{optimal networks appendix} shows that in this case, if convergence to a consensus is possible, the planner can always guarantee society converges to the truth by constructing an ``octopus'' network -- an agent at the center who only listens to herself, and everyone else listening to the center directly or indirectly, depending on how far their signal is from that of the center.\footnote{In this octopus network, convergence will happen in at most $\frac{1}{q} ( \max_{i} \{x_{i0}\} - \min_{i} \{ x_{i0}\} )$ periods.} Convergence to the truth requires an octopus, rather than a simple star, network to prevent confirmation bias from breaking the network into separate components.

In a more realistic and interesting set-up, the social planner does not know the distribution of initial signals or the level of confirmation bias. We still assume, however, that she can specify the network structure $T$ subject to a budget of $B \equiv nd$ links, where 
$d > 0$. The social planner can, therefore, specify the optimal network, and the following definition formalizes what optimality means.

\begin{defn} \label{optimal}
A network $T'$ is optimal given confirmation bias, $q$, and the budget of links, $B$, if:
\begin{align*}
T' = \argmax\limits_{T \in \mathcal{T}} \{ Pr( T^{*\infty} \cdot x_{i 0} = \overline{x}_{i 0}) \ | \ \#[T_{ij} > 0] \leq B \}
\end{align*}
\end{defn}

An optimal network maximizes the probability that society converges to the truth. The truth is the average of the initial signals, $\overline{x}_{i 0}$, and is exactly the value a single Bayesian agent would reach after aggregating all of the initial signals. To reach the truth, society needs to incorporate \emph{all} signals, and weight them all equally. Maximizing the probability of reaching the truth therefore requires minimizing the chance that one or more agents' initial information is lost because the network breaks into multiple components. Notice that this condition does not depend on the size of the breakaway component(s) -- the objective is to minimize \emph{any} breakaway and not its size. Weighting all of the signals equally can then be achieved by network symmetry.

Before stating the main result of this section, we need one assumption to help with tractability.

\begin{ass}[\textbf{Link independence}] \label{links indep}
$Pr(T_{ij}^{*} \neq T_{ij} \ | \ T_{ik}^{*} \neq T_{ik}) = Pr(T_{ij}^{*} \neq T_{ij})$ for all $k$.
\end{ass}
This assumption means we can ignore correlations among links that are removed due to confirmation bias. In particular, it assumes that the probability a link from $i$ to $j$ is removed does not depend on whether/how many other agents $k$ listening to $j$ have cut links. Clearly this is an approximation because it is more likely that links to an agent with an extreme belief will be removed, but this type of correlations are challenging to handle in a network context and putting them aside allows us to prove the following statement.

\begin{prop}\label{min info loss}
Assume link independence and a budget of $B \equiv nd$ links. Then $T$ is optimal if it is symmetric, has no self-links, and its unweighted equivalent is vertex-transitive with degree $d$.
\end{prop}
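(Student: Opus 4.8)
The plan is to split the event ``society converges to the truth'' into two parts — \emph{(i)} $T^{*}$ remains strongly connected, and \emph{(ii)} the resulting consensus weights all initial signals equally — and to show that symmetry buys \emph{(ii)} for free while the unweighted skeleton of the network governs \emph{(i)}. For any $T$, reaching $\overline{x}_{i0}$ forces $T^{*}$ to be strongly connected (otherwise $T^{*\infty}$ either fails to exist or fails to broadcast a single value), so $Pr(T^{*\infty}\!\cdot x_{i0}=\overline{x}_{i0})\le Pr(T^{*}\text{ strongly connected})$ for every $T$. When $T$ is symmetric this is an equality: by the symmetry of the rule in Definition~\ref{core rule} ($i$ cuts $j$ exactly when $j$ cuts $i$), $T^{*}$ is again symmetric, hence doubly stochastic, so its stationary vector is $\tfrac1n\mathbf 1$; together with aperiodicity — guaranteed because every severed link creates a self-loop, and because the candidate network, being non-bipartite, is itself aperiodic on the positive-probability event that nothing is cut — we get $T^{*\infty}x_{0}=\overline{x}_{i0}\mathbf 1$. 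Thus for symmetric $T$ we have $Pr(\text{truth})=Pr(T^{*}\text{ strongly connected})$, and it suffices to exhibit a symmetric network maximizing the latter over all networks with at most $B=nd$ links.

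Next I would show that restricting to symmetric, self-loop-free networks is without loss. Symmetry: the reallocation in Definition~\ref{core rule} shifts weight $T_{ij}$ from column $j$ into column $i$, so the column sums of $T^{*}$ stay all equal to one on every realization of cuts only if the reverse shift $T_{ji}$ is of equal magnitude whenever the pair $\{i,j\}$ can be cut; for $q>0$ and signals with full support this means $T_{ij}=T_{ji}$ on every carried link, i.e. $T$ symmetric. Since a non-symmetric $T$ therefore fails to attain the upper bound $Pr(T^{*}\text{ s.c.})$ with positive slack, no optimum is lost by confining attention to symmetric networks. Self-loops: a positive $T_{ii}$ is always retained by $T^{*}$ and never contributes to strong connectivity between distinct agents, so deleting it and spending the freed slot on a fresh edge (renormalizing row $i$) can only weakly raise $Pr(T^{*}\text{ s.c.})$. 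Hence an optimal network uses all $nd$ directed slots as $nd/2$ undirected edges and no loops: its skeleton is a graph $G$ on $n$ vertices with $nd/2$ edges.

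The third step recasts the remaining problem as the classical all-terminal reliability problem and identifies the optimum. Under Assumption~\ref{links indep} the events ``edge $e$ of $G$ is cut'' are independent, each of common probability $\rho=Pr(|x_{i0}-x_{j0}|>1-q)$, so $Pr(T^{*}\text{ s.c.})$ equals the reliability of $G$ at edge-reliability $1-\rho$. Regularity: $Pr(G\text{ disconnects})\ge\sum_{v}\rho^{\deg v}-(\text{lower order})$, and since $t\mapsto\rho^{t}$ is convex, $\sum_v\rho^{\deg v}\ge n\rho^{d}$ with equality iff $G$ is $d$-regular; moreover a vertex of degree $<d$ already forces $Pr(\text{disconnects})\ge\rho^{d-1}$, which a well-connected $d$-regular graph beats, so the optimum is $d$-regular. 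Vertex-transitivity: a connected vertex-transitive graph is maximally edge-connected (edge-connectivity $=d$; a classical result of Mader and Watkins), so it has no cut of size below $d$ and its only minimum cuts are the trivial single-vertex ones; this makes the dominant contributions to the disconnection probability as small as the degree constraint permits, identifying the symmetric, loop-free, degree-$d$ vertex-transitive network as a maximizer.

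The main obstacle is this last step — showing that vertex-transitivity, and not merely $d$-regularity with maximal edge-connectivity, truly pins down the optimum, i.e. that a vertex-transitive $d$-regular graph is (uniformly) most reliable among $d$-regular graphs on $n$ vertices. Maximal edge-connectivity only controls the leading term $\sim n\rho^{d}$ of the disconnection probability; ruling out a non-transitive regular graph that does better at the next order requires either a finer orbit-counting argument (the vertex- and edge-transitive action of the automorphism group also minimizes the count of non-trivial minimum cuts) or explicitly leaning on the model's approximations — link independence together with working to the dominant order in $\rho$, which is precisely the regime in which ``minimize \emph{any} breakaway'' is the operative criterion. I would present the edge-connectivity characterization and the convexity bound as the backbone, and then argue that the higher-order cut count is extremized for the same symmetry reason.
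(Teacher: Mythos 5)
Your overall strategy matches the paper's: use symmetry of $T$ (hence of $T^{*}$, hence double stochasticity) to guarantee that any surviving consensus equals $\overline{x}_{i0}$, so that optimality reduces to minimizing the probability of a breakaway component; then use Lemma \ref{severq2} and Assumption \ref{links indep} to treat link cuts as i.i.d.\ events with a common probability, turning the problem into a network-reliability calculation. Up to that point you and the paper are on the same page, and your observations about self-links (always retained, never contributing to connectivity between distinct agents, so better spent on between-agent links) mirror the paper's.

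The genuine gap is in your final step, and you identify it yourself: controlling only the degree sequence (via convexity of $t\mapsto\rho^{t}$) plus maximal edge-connectivity pins down $d$-regularity and the \emph{leading-order} term of the disconnection probability, but does not separate vertex-transitive graphs from other maximally edge-connected $d$-regular graphs. The paper closes this differently, and without any edge-connectivity machinery: it considers, for \emph{every} group size $k$, the event $A_{k}$ that some $k$-subset $g_{k,z}$ disconnects, writes $Pr(A_{k}) = 1-\prod_{z}\bigl[1-f^{d_{g_{k,z}}}\bigr]$ where $d_{g_{k,z}}$ is the number of links crossing the boundary of $g_{k,z}$, and applies the same convexity/Jensen argument to the boundary degrees of \emph{all} subsets of each size simultaneously (the sum $\sum_{z} d_{g_{k,z}}$ is fixed by the link budget for each $k$). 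The minimizer equalizes $d_{g_{k,z}}$ within every size class, and the paper argues that making every group of every size structurally interchangeable is what forces all nodes to be identical in the unweighted skeleton, i.e.\ vertex-transitivity; regularity then falls out as a corollary rather than being derived first. So the missing idea in your write-up is precisely this: extend your single-vertex convexity bound to cuts of every cardinality at once, rather than trying to rank $d$-regular graphs by higher-order terms of the reliability polynomial. Your proposed alternatives (orbit counting, or working only to dominant order in $\rho$) are not what the paper does and would leave the claim weaker than stated.
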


As Remark \ref{equal influence no change} showed, symmetric networks ensure that every initial signal receives equal weight in the learning process. The absence of self-links ensures we allocate all links in the budget to increase the robustness of the network to breakaway components. Vertex transitivity means that the network is completely homogeneous so every agent is identical in the structure of their interactions. Therefore, by exhausting the available budget of links, this minimizes the probability of a single and/or set of agents breaking into a separate component. Notice that the statement does not constrain the distribution of link weights so the set of networks that are optimal is quite large. Once we ignore link weights, however, all these networks are a member of the small class of vertex transitive networks, which is a subset of regular networks.

Definition \ref{optimal} focuses exclusively on the final outcome of getting to the truth, but one may argue that a social planner would also care about achieving this quickly. This would entail characterizing the distribution of link weights that maximize convergence speed in the class of optimal networks. To the best of our knowledge, this is an unsolved problem in the graph theory literature.\footnote{\citet{boyd2004fastest} show that a convex optimization problem can find a solution numerically, but do not provide any common characteristics of the solution.} There are, however, algorithms to obtain an approximate characterization. Appendix \ref{optimal networks appendix} discusses two well-known ones -- the Maximum Degree Heuristic and the Metropolis-Hastings algorithm. Both suggest that, given the constraints imposed by Proposition \ref{min info loss}, unweighted networks are likely to converge quickly. In other words, a social planner that also cares about the speed of convergence would engineer a network that is symmetric, unweighted and vertex transitive.

\section{Shock elections} \label{voting section}

The previous section has shown that confirmation bias affects the process of learning from others by making it slower, more polarized and by redistributing individuals' influence. In the past decade, the advent of social media has arguably increased the visibility and weight that information we learn from others has on our views, and this has become increasingly relevant in the context of elections (see, e.g., \citet{kohut2008social}, \citet{braha2017voting} and \citet{weeks2017online}). In order to examine how the impact of confirmation bias on learning affects electoral outcomes, we embed the learning framework in section \ref{model} into a basic voting model. The objective is to understand whether ``shock'' elections are more likely in a world where learning is affected by confirmation bias.

We assume there are two candidates $Y = \{0,1\}$ whose belief, or \emph{ideology}, is fixed at $x_{Y=0} = 0$ and $x_{Y=1} = 1$. There are $n$ voters and each voter has an initial belief $x_{i 0} \in \{x_{EL}, x_{CL}, x_S, x_{CR}, x_{ER}\}$ with $x_{EL}<x_{CL}<x_S=\frac{1}{2}<x_{CR}<x_{ER}$. Denote by $f_i$ the fraction of voters assigned initial belief $x_i$, with $0<f_i<1$ and $\sum_{i=EL}^{ER}f_i=1$. For expository purposes, we can think of $0$ as the ``Left'' candidate, and $1$ as the ``Right'' candidate. Initially, the spectrum of voters' preferences spans ``Extreme Left'' (EL), ``Center Left'' (CL), ``Swing voters'' (S), ``Center Right'' (CR), and ``Extreme Right'' (ER).

Voters communicate through a network $T$ according to the model in section \ref{model}. Each voter $i$'s utility function $U_{it} = u(|x_{it} - x_{y}|)$ is strictly decreasing in the distance between their own belief (at the time voting takes place) and the belief of the winning candidate. The winner of an election at time $t$ is determined by simple majority -- the candidate with the most votes wins and a tie is resolved by a coin toss. This set-up implies that sincere voting is a weakly dominant strategy by application of the standard Median Voter Theorem result.\footnote{In particular, our set-up satisfies the following conditions: (i) there are only two candidates; (ii) candidates are chosen by a majority vote; (iii) candidates have single peaked preferences; and (iv) voting takes place along a single dimension. See \citet{downs1957economic}, and \citet[Chapter 21.D]{mas1995microeconomic} for a review of the Median Voter Theorem.} Thus, a voter $i$ facing an election at time $t$ casts a vote $v_{i,t}$ according to the following strategy:
\begin{align*}
&v_{i,t}(x_{i,t}) =
	\begin{cases}
	0		& \text{if } x_{i,t} < 0.5 \\
   	1 		& \text{if } x_{i,t} > 0.5 \\
	\xi 	& \text{if } x_{i,t} = 0.5 \hspace{3mm} \text{with} \hspace{3mm} P(\xi = 0) = P(\xi = 1) = 0.5
	\end{cases}
\end{align*}

Without loss of generality, throughout this section we assume that $f_{EL} + f_{CL}> f_{CR} + f_{ER}$ and $\sum_{i=EL}^{ER} x_{i 0} f_{i}< 0.5$. The first assumption guarantees that if an election were to happen at time $t=0$ before any learning takes place then the Left would win it. The second assumption states that a society which correctly aggregates all initial information would in the end vote for the Left as well. In order to focus our attention on the interesting case in which the learning process converges to the truth, we further assume that the society is ``wise'' as defined in Definition \ref{wisdom}.

These assumptions restrict our attention to a society that votes for the same candidate -- the Left one without loss of generality -- before learning takes place and after learning has occurred. The outcome we are interested in is whether the presence of confirmation bias makes the shock electoral results of the Right candidate winning more likely.

\begin{defn}
In a society that exhibits wisdom, a \emph{shock election} occurs if there exists a time $0<t<\infty$ such that the Right wins the election that occurs at $t$.
\end{defn}

The following proposition shows that confirmation bias makes a shock electoral outcome possible even in a society that exhibits wisdom.

\begin{prop}\label{voting result}
In a large society where the mean-field and wisdom assumptions hold, a shock election can occur with confirmation bias, but it never occurs without confirmation bias.
\end{prop}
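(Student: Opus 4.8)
The plan is to establish the two halves of the proposition separately: first, that without confirmation bias a shock election is impossible, and second, that with enough confirmation bias one can be engineered. For the first half I would argue directly from the DeGroot dynamics on the original network $T$. Since the society is wise, each agent's influence $s_i$ is $\mathcal{O}(1/n)$, so the consensus belief is (approximately) $\overline{x}_{i0} = \sum_i s_i x_{i0} \approx \mu = \sum_{i=EL}^{ER} x_{i0} f_i < 0.5$. The key claim is that, under the mean-field assumption, the \emph{entire trajectory} of the average belief is monotone — or at least stays on the Left side of $0.5$ — because each agent's time-$t$ belief can be written (as in the proof of Proposition \ref{polar result}) as a convex combination of its own initial belief and $\mu$. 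Concretely, under mean-field, $x_{i,t} = \alpha_{i,t} x_{i0} + (1 - \alpha_{i,t}) \mu$ for weights $\alpha_{i,t}$ that are decreasing in $t$; summing with frequencies $f_i$, the fraction of the vote going to the Right at time $t$ is the measure of agents with $x_{i,t} > 0.5$, and since $\mu < 0.5$ and the $x_{i,t}$ contract toward $\mu$, the mass of agents on the Right side can only weakly shrink relative to $t=0$. Because the Left already wins at $t=0$ by assumption ($f_{EL} + f_{CL} > f_{CR} + f_{ER}$, and the swing voters at $x_S = 0.5$ split evenly), it continues to win at every $t$. This rules out a shock election.

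For the second half, I would construct an explicit example — or rather an explicit structural mechanism — showing a shock election with confirmation bias. The idea is to pick $q$ large enough that confirmation bias severs the links between the extreme-left bloc (EL) and the rest of society, or more precisely isolates a Left-leaning subpopulation whose opinions would normally pull the center-right and swing voters leftward. Once those links are cut and their weight is redistributed to self-loops (Definition \ref{core rule}), the center-right and swing voters now learn only from a neighborhood that, post-severance, is \emph{not} representative of society: it over-weights Right-leaning beliefs. So the swing voters' beliefs drift above $0.5$ for an intermediate range of $t$, flipping the election to the Right, even though in the long run (once one accounts for the fact that $T^*$ is still strongly connected, per the maintained assumption, or via the information that survives) the society's consensus — or the relevant aggregate — still favors the Left. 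The cleanest route is to exhibit a small parametrized family: choose frequencies $f_i$ and beliefs $x_i$ satisfying both standing assumptions, choose $T$ so that EL voters are listened to predominantly by CL and S voters, and pick $q$ in a range where $|x_{CL} - x_{ER}|, |x_S - x_{EL}|$ etc. cross the $1-q$ threshold appropriately; then track $\text{var}$/mean of the S-voter beliefs in $T^*$ and show they exceed $0.5$ at some finite $t$.

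The main obstacle is the second half: making precise the claim that the \emph{intermediate} belief of the swing voters exceeds $0.5$ while the long-run consensus stays Left. The long-run statement is delicate because if $T^*$ remains strongly connected then the consensus of $T^*$ is generically \emph{not} $\mu$ — it is the $T^*$-influence-weighted average, which confirmation bias has distorted (Section \ref{influence}) — so "wisdom" must be invoked carefully, presumably for $T^*$ as well, to keep the long-run winner on the Left. I would handle this by working in the large-$n$, mean-field regime where the post-severance neighborhoods of each \emph{type} are still representative within Left-leaning vs. Right-leaning sub-blocs, so that one can write down a low-dimensional (five-state, one per belief type) deterministic dynamical system for the type-averages, solve or simulate it, and read off both the transient Right-majority and the Left limit. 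Verifying the intermediate inequality then reduces to a finite eigenvalue computation on a $5 \times 5$ (or smaller) matrix, which is routine once the reduction is set up; the conceptual work is entirely in justifying the mean-field reduction of $T^*$ into type-blocs and in pinning down which links $q$ severs.
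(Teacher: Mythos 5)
Your proposal is correct and follows essentially the same route as the paper: the first half is the mean-field contraction argument (each $x_{i,t}$ is a convex combination of $x_{i0}$ and $\mu$ with weight on $\mu$ increasing in $t$, so beliefs move monotonically toward $\mu<0.5$ and no Left voter ever switches), and the second half is the same constructive counterexample in which $q$ is chosen so that each belief type only retains links to adjacent types, yielding sufficient conditions (e.g. $f_{CR}x_{CR0}+f_{CL}x_{CL0}>0.5$ with $f_{EL}+f_{CL}<0.5$) for the swing voters to cross $0.5$ at $t=1$ while the $t=0$ and long-run winners remain Left by the standing and wisdom assumptions. Your worry about the long-run consensus of $T^*$ is exactly the point the paper resolves by invoking wisdom for $T^*$ as well.
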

As discussed in section \ref{polarization}, the purpose of the mean-field assumption is to isolate the randomness introduced by confirmation bias from the intrinsic stochasticity of the system. Without the mean-field assumption, a shock election may occur even without confirmation bias simply because beliefs have a random fluctuation toward the Right at some point in the learning process before converging to vote for the Left. The proof in Appendix \ref{proofs:shock elections} shows that this does not happen with the mean-field assumption, and this provides a clear benchmark to investigate whether confirmation bias adds additional noise that may cause a shock election. Furthermore, the mean-field assumption is a good approximation in large networks and the simulations in section \ref{simulations results} show that the result in Proposition \ref{voting result} is robust to relaxing this assumption.

We prove that a shock election can occur with confirmation bias with the counterexample in Figures \ref{fig:voting} and \ref{fig:voting2}. Consider a society with $5$ voters connected by network $T$ represented in Figure \ref{fig:voting}. Suppose initial beliefs are captured by $x_0$ below, it is straightforward to compute the learning process.
\begin{gather*}
 T =
   \begin{pmatrix}
	 0.35 & 0.1 & 0.2 & 0.25 & 0.1 \\
	 0.35 & 0.1 & 0.2 & 0.25 & 0.1 \\
	 0.35 & 0.1 & 0.2 & 0.25 & 0.1 \\
	 0.35 & 0.1 & 0.2 & 0.25 & 0.1 \\
	 0.35 & 0.1 & 0.2 & 0.25 & 0.1
   \end{pmatrix}
\quad
x_{0} =
    \begin{pmatrix}
	0.15 \\
	0.3 \\
	0.5 \\
	0.65 \\
	0.75
    \end{pmatrix}
\quad
\implies T \cdot x_{0} =
	\begin{pmatrix}
	0.42 \\
	0.42 \\
	0.42 \\
	0.42 \\
	0.42
	\end{pmatrix}=T^{\infty} \cdot x_{0}
\end{gather*}

\begin{figure}[h]
\caption{Network $T$, without confirmation bias. Notice that links are drawn with a double arrow to avoid cluttering, but the strength of the link differs depending on the direction and it is indicated in \emph{italics} next to the arrow heads.}\label{fig:voting}
\centering
\fbox{	\resizebox{0.8\columnwidth}{!} {
\begin{tikzpicture}[->,>=stealth',shorten >=1pt,auto,node distance=5cm,semithick]
  \tikzstyle{every state}=[fill=white, draw, text=black, circle]
  \node[state, label={right:$x=0.5$} ]
  (C) 								{$S$};
  \node[state, label={below left:$x=0.3$} ]
  (B) [below left=2cm and 3cm of C]  {$CL$};
  \node[state, label={below:$x=0.15$} ]
  (A) [below right=3cm and 1cm of B] {$EL$};
  \node[state, label={below right:$x=0.65$} ]
  (D) [below right=2cm and 3cm of C]	{$CR$};
  \node[state, label={below:$x=0.75$} ]
  (E) [below left= 3cm and 1cm of D] 	{$ER$};

  \path (A) 	edge [loop left]	node[font=\small] {\emph{0.35}} (A)
  			edge            	node[above right, font=\small, pos=0.65] {\emph{0.1}}  (B)
  			edge 			node[left, font=\small, pos=0.8] {\emph{0.2}}  (C)
            edge 			node[below right, font=\small, pos=0.75] {\emph{0.25}} (D)
            edge 			node[above, font=\small, pos=0.65] {\emph{0.1}}  (E)
        (B) 	edge 			node[left, font=\small, pos=0.7] {\emph{0.35}} (A)
  			edge [loop above] node[font=\small] {\emph{0.1}}  (B)
  			edge 			node[above left, font=\small, pos=0.7] {\emph{0.2}}  (C)
            edge 			node[below, font=\small, pos=0.8] {\emph{0.25}} (D)
            edge 			node[above right, font=\small, pos=0.85] {\emph{0.1}}  (E)
        (C) 	edge 			node[left, font=\small, pos=0.8] {\emph{0.35}} (A)
  			edge 			node[above left, font=\small, pos=0.7] {\emph{0.1}}  (B)
  			edge [loop above] node[font=\small] {\emph{0.2}}  (C)
            edge 			node[below left, font=\small, pos=0.75] {\emph{0.25}} (D)
            edge 			node[above right, font=\small, pos=0.9] {\emph{0.1}}  (E)
        (D) 	edge 			node[above left, font=\small, pos=0.8] {\emph{0.35}} (A)
  			edge 			node[above, font=\small, pos=0.8] {\emph{0.1}}  (B)
  			edge 			node[below left, font=\small, pos=0.65] {\emph{0.2}}  (C)
            edge [loop above] node[font=\small] {\emph{0.25}} (D)
            edge 			node[right, font=\small, pos=0.8] {\emph{0.1}}  (E)
        (E) 	edge 			node[above, font=\small, pos=0.65] {\emph{0.35}} (A)
  			edge 			node[above right, font=\small, pos=0.8] {\emph{0.1}}  (B)
  			edge 			node[left, font=\small, pos=0.8] {\emph{0.2}}  (C)
            edge 			node[right, font=\small, pos=0.65] {\emph{0.25}} (D)
            edge [loop right]	 node[font=\small] {\emph{0.1}}  (E);
\end{tikzpicture}  }  }

\end{figure}
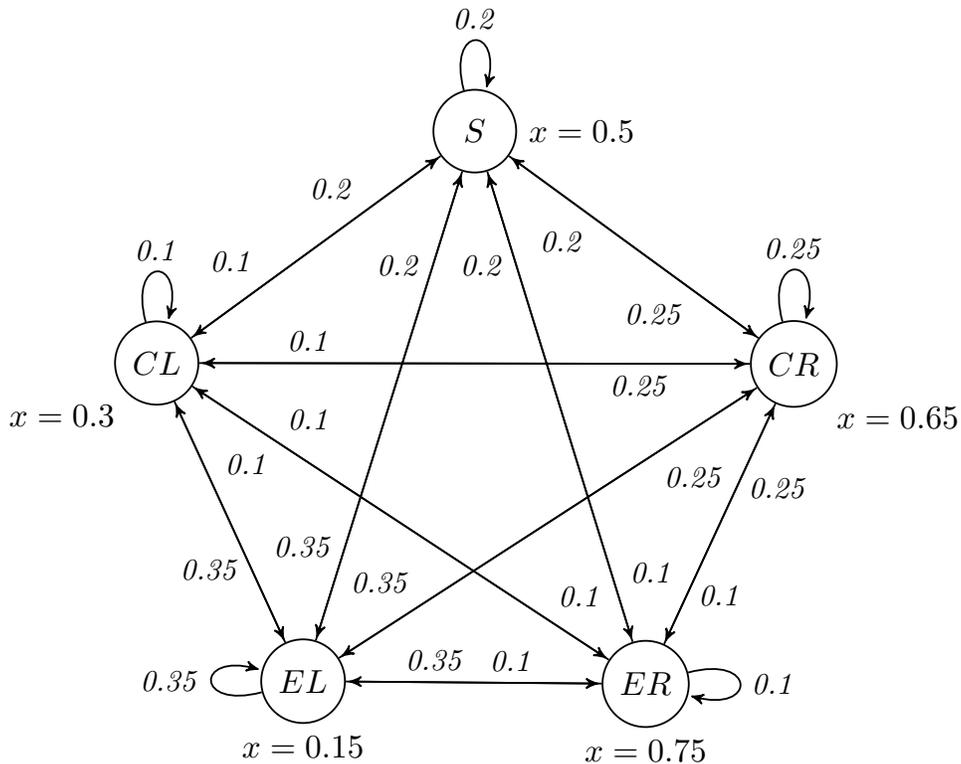

Clearly the society votes for the Left candidate at time 0, time 1, and every period thereafter. Now suppose that the level of confirmation bias is $q=0.78$. Figure \ref{fig:voting2} represents the resulting network $T^*$ after the removal of links. If we compute the learning process then, by the wisdom assumption, it converges to the same outcome as without confirmation bias, but now a shock election occurs at time $t=1$ when the Right would win.

\begin{gather*}
 T^{*} =
   \begin{pmatrix}
    0.9 & 0.1 & 0 & 0 & 0 \\
    0.35 & 0.45 & 0.2 & 0 & 0 \\
    0 & 0.1 & 0.65 & 0.25 & 0 \\
    0 & 0 & 0.2 & 0.7 & 0.1 \\
    0 & 0 & 0 & 0.25 & 0.75
   \end{pmatrix}
\implies T^{*} \cdot x_{0} =
	\begin{pmatrix}
	0.165 \\
	0.2875 \\
	0.5175 \\
	0.63 \\
	0.725
	\end{pmatrix}
\implies ... \implies T^{*\infty} \cdot x_{0} =
	\begin{pmatrix}
	0.42 \\
	0.42 \\
	0.42 \\
	0.42 \\
	0.42
	\end{pmatrix}
\end{gather*}

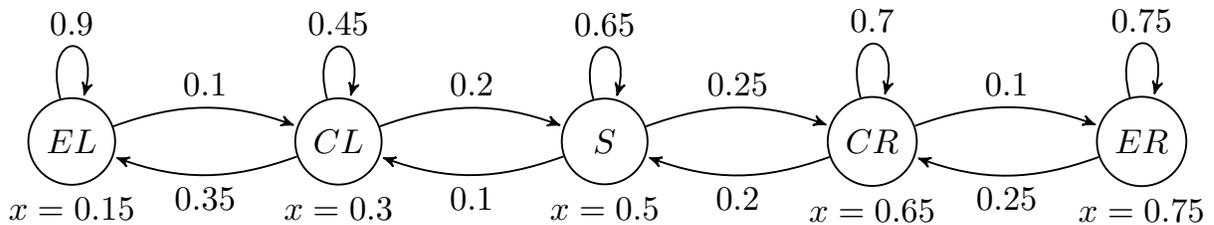
\begin{figure}[h]
\caption{Network $T^*$, with confirmation bias}\label{fig:voting2}
\centering
\fbox{	\resizebox{\columnwidth}{!}{
\begin{tikzpicture}[->,>=stealth',shorten >=1pt,auto,node distance=3.5cm,semithick]
  \tikzstyle{every state}=[fill=white, draw, text=black, circle]
  \node[state, label =below:{$x=0.15$} ] 	(A)  				{$EL$};
  \node[state, label =below:{$x=0.3$} ] 	(B) [right =2cm of A]  	{$CL$};
  \node[state, label =below:{$x=0.5$} ] 	(C) 	[right =2cm of B]	{$S$};
  \node[state, label =below:{$x=0.65$} ] 	(D) [right =2cm of C]	{$CR$};
  \node[state, label =below:{$x=0.75$} ] 	(E) [right =2cm of D]  	{$ER$};

  \path (A) 	edge [loop above]	node {0.9} 	(A)
  			edge [bend left=20] 	node {0.1}  	(B)
        (B) 	edge [bend left=20]	node {0.35} 	(A)
  			edge [loop above]	node {0.45}	(B)
  			edge [bend left=20]	node {0.2}	(C)
        (C) 	edge [bend left=20]	node {0.1}  (B)
  			edge [loop above]	node {0.65}  (C)
            edge [bend left=20]	node {0.25} (D)
        (D) 	edge [bend left=20]	node {0.2}  (C)
            edge [loop above]		node {0.7} (D)
            edge [bend left=20]	node {0.1}  (E)
        (E) 	edge [bend left=20]	node {0.25} (D)
            edge [loop above]		node {0.75}  (E);
\end{tikzpicture} } }
\end{figure}

What the counterexample shows is that a shock election can occur in a society that would otherwise make the correct decision if there was no learning and/or no confirmation bias. In particular, the effect of confirmation bias is that it can push a majority to support the Right candidate in the short-term even if in the long-term the society correctly aggregates the initial signals and votes for the Left candidate. In this counterexample, this happens because of two effects: (i) Swing voters stop listening to the Extreme Left/Right voters due to confirmation bias, and (ii) they put more weight on the views of Center Right compared to Center Left voters. This means that in the short-term the majority of Swing voters vote for the Right candidate who obtains a majority support. In the long-term, learning continues, leading to the correct aggregation of the initial information thanks to the wisdom assumption, but this means that there is a time window when a shock election can occur which would not happen in a world without confirmation bias.

\clearpage
\section{Fringe media organizations} \label{section media}

Changes in technology and the emergence of social media have made it easier to find news and information tailored exactly to your preferences \citep[Chapter 3]{sunstein2018republic}. This has been accompanied by a polarization of the landscape, with outlets characterized by an extreme ideology achieving a large audience (\citet{prior2013media} provides some support for this view). The purpose of this section is to examine the role of confirmation bias in the growth of media organizations with extreme ideologies. To do this, we extend the model in section \ref{model} by introducing media organizations which choose their ideology and care about maximizing their audience, and by having agents form a link with a media organization before learning takes place.

Consider a media organization $m \in M \ \{ 1, ... , M \}$ and assume that $M>5$.\footnote{The $M>5$ assumption allows us to focus on the interesting and relevant case of a competitive media market. Appendix \ref{media appendix} solves the model for the $M\leq 5$ case.} Each media organization has utility $V_{m}(\mu_{m}) = \#\{ i \in N : T_{im} > 0 \}$ -- it only cares about the number of agents who listen to it. At time $t=-1$ each $m$ picks an ideology $\mu_{m}\in [0,1]$ such that $\max_{\mu_{m} \in [0,1]} \ V_{m}$ knowing that the initial belief of agent $i$ is a random draw from $X \sim U [0,1]$. Once chosen, the ideology of each media organization $m$ is fixed, so we can represent each $m$ as a node with a self-link of weight 1.\footnote{This is equivalent to \citet{hotelling1929}'s model with fixed prices where the $[0,1]$ line represents the unidimensional ideology space. \citet{eaton1975} presents results for this model in the absence of confirmation bias. \citet{anand2007information} examine media bias under the same assumption that media organizations care only about profits. In contrast to this model, they assume that agents care about some objective \emph{truth}, as well as ideology.}

As in the set-up in section \ref{model}, at time $t=0$ each agent is endowed with a signal $\theta_i \sim U [0,1]$ which is agent $i$'s initial belief $x_i$.\footnote{Notice that here we assume signals are drawn from the uniform distribution for analytical tractability.} The novel element here is that each agent also forms a link of weight $\epsilon$ with one and only one media organization at zero cost. Both the assumptions that the link weight is $\epsilon$ and it is costless are for simplicity given that the focus of this section is on the choice of ideology by the media organizations.\footnote{Picking a weight of $\epsilon$ allows us to sidestep the issue of how each agent redistributes the weights of their outgoing links to accommodate this new connection. The assumption that each agent forms one and only one link is for analytical tractability and can be justified behaviorally by limited attention (see, e.g., \citet{gabaix2014sparsity}, and \citet{masatlioglu2012revealed} and references therein).} We assume each agent $i$'s payoff is equal to $\ U_{i} = U_{i}(|x_{i0} - \mu_{m} |)$ with $\frac{\delta U_{i} }{\delta |x_{i0} - \mu_{m}| } < 0$, i.e. agents choose to listen to the media organization whose belief is closest to their own. If there is confirmation bias of strength $q$ then an agent $i$ forms a link with a media organization $m$ only if $|x_{i0} - \mu_{m} | \leq 1-q$.

Let $\Psi_{M}^{q}$ be the set of equilibria with $M$ firms and confirmation bias of strength $q$.\footnote{An equilibrium always exists by standard existence results. For $M>5$ there are infinitely many equilibria both with and without confirmation bias.} An equilibrium $\psi=\{ \mu_{1} , ... \mu_{M} \} \in \Psi_{M}^{q}$ is a set of ideologies for the media organizations. The focus of our analysis is the media organization with the most extreme ideology  $\mu_{fr}(q, M) = \min_{\mu_{m} \in \Psi_{M}^{q} } \{ \mu_{m} \}$ which we dub the ``fringe'' media organization. Notice that, by the symmetry of our set-up, it suffices to investigate the most extreme Left ideology because for any set of ideologies that constitutes an equilibrium, its mirror image is also an equilibrium.

Our first result shows that the ideology of the fringe media organization is increasing in the competitiveness of the market.
\begin{prop}\label{media monotone m}
The ideology of the fringe media organization becomes (weakly) more extreme as the number of media organizations $M$ increases.
\end{prop}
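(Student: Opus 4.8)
The plan is to work from an explicit description of the equilibrium set $\Psi_M^q$ and to compare the left-most ideology across $M$; since ``Appendix \ref{media appendix}'' handles $M\le 5$, the relevant regime here is the competitive one, $M>5$. Order the firms in an equilibrium $\psi$ so that $\mu_1\le\mu_2\le\cdots\le\mu_M$. The first step is a \emph{locality} lemma: because each agent links only to the organization nearest to her among those within reach $1-q$, firm $k$'s audience -- and hence its best response -- depends only on $\mu_{k-1}$ and $\mu_{k+1}$ (or on the endpoint $0$, respectively $1$, for the extreme firms) and on $1-q$. This turns the equilibrium condition into a system of local conditions: one ``interior'' condition per neighbouring pair of firms and one ``boundary'' condition for each of the two extreme firms.

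Second, I would pin down the left-most firm's best response. Its audience is $\min\{(\mu_1+\mu_2)/2,\ \mu_1+(1-q)\}-\max\{0,\ \mu_1-(1-q)\}$, and a short case split shows that it is strictly increasing in $\mu_1$ when $\mu_1<1-q$ and weakly decreasing when $\mu_1\ge 1-q$. Hence $\mu_1\ge 1-q$ in every equilibrium, $\mu_1=1-q$ is always a best response, and it is the most extreme one; so $\mu_{fr}(q,M)\ge 1-q$ for all $M$, with equality as soon as the remaining $M-1$ firms can be completed into a mutual equilibrium on $[1-q,1]$. When $q$ is small enough that the reach constraint does not bind for well-placed firms (roughly $q\le\tfrac12$, so that any firm in $[q,1-q]$ already reaches the whole population), the game is essentially the classical Hotelling location game, whose equilibrium characterization gives the left-most achievable ideology as an explicit, non-increasing function of $M$: more firms leave more ``room'' to cover the interior, so the outermost firm can locate further out. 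This is also why the statement is only \emph{weakly} monotone -- when confirmation bias binds, $\mu_{fr}$ is pinned at $1-q$ and does not move with $M$.

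Third, to obtain the monotonicity directly rather than by inspecting a formula, I would argue by \emph{extension}: take an $M$-firm equilibrium $\psi$ attaining $\mu_{fr}(q,M)$ and build an $(M+1)$-firm equilibrium $\psi'$ with the same left-most ideology by inserting the extra firm on the right -- co-located with the right-most firm, or into the right-most ``gap'' -- at a point lying in a flat (indifference) part of the relevant best-response region. By the locality lemma, nothing to the left of the insertion point changes, so $\psi'$ is still an equilibrium and its fringe ideology is $\le\mu_{fr}(q,M)$; induction on $M$ then gives $\mu_{fr}(q,M+1)\le\mu_{fr}(q,M)$. The main obstacle is precisely this verification step: Hotelling-type equilibria are fragile, so one must check both that the \emph{inserted} firm is best responding at its chosen location and that the \emph{old} neighbour it now abuts does not acquire a profitable deviation -- which is exactly why the insertion point must be chosen inside an indifference zone. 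A secondary technical point is that $\Psi_M^q$ is infinite, so one should first check that the infimum defining $\mu_{fr}(q,M)$ is attained; this follows because the set of equilibria, hence the set of achievable left-most ideologies, is closed.
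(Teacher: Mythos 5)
There is a genuine error at the heart of your argument, in the second step. From ``the leftmost firm's audience is strictly increasing in $\mu_1$ when $\mu_1<1-q$'' you conclude that $\mu_1\ge 1-q$ in every equilibrium and hence $\mu_{fr}(q,M)\ge 1-q$. That conclusion is false: it contradicts the paper's own characterization in Appendix \ref{media appendix} ($\mu_{fr}(q,M)=\frac{1}{2M-4}$ whenever $1-q\ge\frac{1}{2M-4}$, e.g.\ $\mu_{fr}=0.25$ for $M=4$ and any $q\le 0.75$), and it also contradicts your own next sentence, where you correctly note that for small $q$ the game reduces to the classical Hotelling/Eaton--Lipsey location game, whose fringe location $\frac{1}{2M-4}$ lies well below $1-q$. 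The slip is that your audience formula $\min\{(\mu_1+\mu_2)/2,\ \mu_1+(1-q)\}-\max\{0,\ \mu_1-(1-q)\}$ is only valid while $\mu_1<\mu_2$; the upward pressure on $\mu_1$ stops not at $1-q$ but at $\min\{1-q,\mu_2\}$, because at $\mu_1=\mu_2$ the sharing rule changes discontinuously (co-located firms split their joint segment). The correct dichotomy is therefore: in equilibrium either $\mu_1\ge 1-q$ \emph{or the two peripheral firms are paired}, and for $1-q\ge\frac{1}{2M-4}$ it is the paired configuration that attains the minimum, $\mu_{fr}(q,M)=\frac{1}{2M-4}$; only when $q>\frac{2M-5}{2M-4}$ does the reach constraint bind and pin $\mu_{fr}=1-q$. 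Monotonicity in $M$ then follows because $\min\{\frac{1}{2M-4},\,1-q\}$ is weakly decreasing in $M$ --- which is exactly the result of \citet[p.~31]{eaton1975} that the paper invokes in lieu of a proof.

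Your third step (insert the $(M+1)$st firm on the right of an $M$-firm equilibrium and argue that nothing to its left changes) is the right kind of idea for a self-contained monotonicity proof, but, as you acknowledge, the verification is not carried out, and it is not innocuous: co-locating a third firm with the rightmost pair cuts each co-located firm's audience to a third of that segment and can make relocation profitable for them or for their interior neighbour, so the insertion point has to be constructed (typically by re-spacing the interior firms) rather than asserted to lie in an ``indifference zone.'' As it stands, the proposal rests on a false intermediate claim and leaves the substitute argument unverified, so it does not establish the proposition.
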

The proof is an application of a result from \citet[p31]{eaton1975}, and is therefore omitted. The intuition is that media organizations try to spread out in ideology space to maximize the number of agents who listen to them, and therefore an increase in the number of outlets in the market pushes the most extreme organization closer to the boundary. This is in agreement with the initial observation that the proliferation of media outlets has led to a polarization of the media landscape. The following statement shows how the presence and strength of confirmation bias affects this polarization.

\begin{prop}\label{media monotone q}
The ideology of the fringe media organization becomes (weakly) more extreme as the strength of confirmation bias $q$ increases.
\end{prop}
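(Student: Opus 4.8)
The plan is to reduce the equilibrium characterization to the Hotelling--Downs-on-a-segment analysis of \citet{eaton1975}, and then track how the confirmation-bias cutoff $1-q$ reshapes the audience of the most Left-leaning (``fringe'') firm. Index the firms in an equilibrium so that $\mu_{(1)}\le\mu_{(2)}\le\dots\le\mu_{(M)}$, and let $\mu_{fr}(0,M)$ denote the fringe location of Proposition~\ref{media monotone m} without confirmation bias. By the mirror symmetry of the model it suffices to study the leftmost firm.

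First I would pin down the fringe firm's audience. Since beliefs are i.i.d.\ $U[0,1]$, firm $m$'s expected audience equals the Lebesgue measure of the beliefs that are (i) closer to $\mu_m$ than to any other firm and (ii) within $1-q$ of $\mu_m$. With $M>5$ firms allocated in $[0,1]$, in any equilibrium there are no large empty intervals (a firm would relocate into one), so (ii) can only bind at the two ends; for the leftmost firm this yields audience $\min\{\mu_{(1)},\,1-q\}+\tfrac12\bigl(\mu_{(2)}-\mu_{(1)}\bigr)$. The only novelty relative to the frictionless model is that the left-hand market is \emph{capped} at $1-q$: beliefs in $[0,\mu_{(1)}-(1-q)]$ follow no one whenever $\mu_{(1)}>1-q$.

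Second, I would analyse where the fringe cluster settles. Holding $\mu_{(2)}$ fixed, that audience is single-peaked in $\mu_{(1)}$ with maximum at $\min\{1-q,\mu_{(2)}\}$: while $\mu_{(1)}<1-q$ a marginal rightward move strictly helps (the secured left tail grows, the center share shrinks only half as fast), whereas while $\mu_{(1)}>1-q$ a marginal leftward move strictly helps (the left market is pinned at $1-q$, so only the center share matters, and it grows). Since, as in \citet{eaton1975}, a rightward move is blocked only when the leftmost firm is coincident with its neighbour, equilibrium pushes the two Left-most firms into a cluster at a common location $a$, and the expression above forces $a\le 1-q$ in every equilibrium. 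Two regimes follow. If $1-q\ge\mu_{fr}(0,M)$ the constraint $a\le 1-q$ is slack and, because the cutoff is then irrelevant near the left end, the Eaton--Lipsey no-relocation conditions carry over verbatim and pin $a=\mu_{fr}(0,M)$ — the frictionless equilibria survive and $\mu_{fr}(q,M)=\mu_{fr}(0,M)$. If $1-q<\mu_{fr}(0,M)$ the constraint binds, the fringe cluster is pushed out to $a=1-q$, and the remaining $M-2$ firms can be placed in $[1-q,1]$ to complete an equilibrium (there is ample room since $q$ is then close to $1$), while no equilibrium can place the fringe cluster strictly below $1-q$ and still remain balanced enough to cover $[0,1]$. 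In both regimes $\mu_{fr}(q,M)=\min\{1-q,\,\mu_{fr}(0,M)\}$, which is weakly decreasing in $q$, which is the claim.

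The hard part will be the second step: one must handle the global (leapfrog) deviations of the Eaton--Lipsey analysis and the edge-case stability conditions of the end-clusters carefully, and in particular verify both that the ``compressed'' configuration with $a=1-q$ is a genuine equilibrium for large $q$ and that no equilibrium admits a strictly more extreme Left cluster. This is where I would lean most heavily on the structural results of \citet{eaton1975}, much as in the (omitted) proof of Proposition~\ref{media monotone m}.
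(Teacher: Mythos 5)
Your overall strategy is the same as the paper's: reduce to the Eaton--Lipsey characterization, use a local deviation argument to show the fringe firm cannot sit more moderately than $1-q$ (your single-peaked audience function $\min\{\mu_{(1)},1-q\}+\tfrac12(\mu_{(2)}-\mu_{(1)})$ is a cleaner packaging of the paper's ``shift leftwards and pick up at least as many listeners as it loses'' step), and then split into the slack regime ($1-q\geq \mu_{fr}(0,M)=\tfrac{1}{2M-4}$, where the frictionless equilibria survive) and the binding regime (where $\mu_{fr}(q,M)=1-q$). You land on exactly the paper's formula $\mu_{fr}(q,M)=\min\{1-q,\tfrac{1}{2M-4}\}$.

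The gap is the step you yourself flag and defer: showing that \emph{no} equilibrium places the fringe strictly below $1-q$ in the binding regime (and strictly below $\tfrac{1}{2M-4}$ in the slack one). Since $\mu_{fr}(q,M)$ is defined as a minimum over the whole equilibrium set, this is where the content of the proposition lives, and ``lean on Eaton--Lipsey'' does not discharge it, because confirmation bias changes payoffs and could in principle create new, more extreme equilibria. The paper closes this with an explicit spacing argument: if a fringe pair sits at $\chi<1-q$, then deterring the inward deviation to $\chi+\epsilon$ requires a firm at $3\chi$, hence one at $5\chi$, and inductively $\mu_{m}=(2m-3)\chi$; running the mirror-image chain from the right end forces $\mu_{M-2}=1-3\chi$, and equating the two gives $\chi=\tfrac{1}{2M-4}$, contradicting $\chi<1-q\leq\tfrac{1}{2M-4}$. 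It then exhibits a configuration with $\mu_{1}=1-q$ and verifies it is an equilibrium. You would need to supply this (or an equivalent) argument for the proof to be complete; everything else in your proposal matches the paper's route.
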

The presence of confirmation bias reduces the incentive of a fringe media organization to moderate its ideology. Absent confirmation bias, the most extreme organization would want to moderate its ideology in order to attract more moderate listeners as this would not cost them any extreme listeners. What limits this moderation is the presence of other more moderate organizations that might want to `leapfrog' the fringe media if it becomes too moderate. In the presence of confirmation bias, however, the fringe media organization does not want to moderate its ideology as much, for fear of losing some of its most extreme listeners.

\section{Simulations} \label{simulations results}

At different points in the paper, we rely on restricting the class of networks and/or a mean-field assumption to obtain analytically tractable results in this paper . For example, Theorem \ref{speed result} only holds for symmetric networks where agents listen mostly to themselves. The proofs to Proposition \ref{polar result} (polarization) and Proposition \ref{voting result} (shock elections) require the mean-field assumption. In this section we run an extensive set of simulations to show that our results largely hold even after relaxing the mean-field assumption and apply to the general class of directed, weighted networks.

The first step of the simulations is to build a network with realistic structural features. We construct a modified version of the algorithm in \citet{jackson2007meeting}, which creates networks that match the main structural characteristics of actual social networks. We start at step $k=0$ from an initial cluster $M$ of $m=40$ nodes such that every node is connected to every other, with the direction of the link randomly determined with $50\%$ probability; i.e. if $T_{ij}=1$ then $T_{ji}=0$ and if $T_{ij}=0$ then $T_{ji}=1$ for all $i,j \in M$. In step $k=1$ a new node is introduced and randomly ``meets'' $m_r=20$ other nodes. In each random meeting there is a $p_r=0.8$ probability of forming a link, with the direction of the link randomly determined with $50\%$ probability. After these connections through random meetings have been formed, the new node meets $m_n=20$ neighbors of her new connections. There is a $p_n=0.8$ probability that she forms a link with each of these connections, and the direction of the link is randomly determined with $50\%$ probability. In step $k=2$ a new node is introduced and the process repeats itself. The network formation process stops at step $K=960$ when we obtain a directed, weighted network formed by 1,000 agents.\footnote{In case the resulting network is not strongly connected, we disregard it and restart the process to construct a new one.} An agent listens to all of their neighbors equally, but the number of neighbors differs across agents. Therefore, link weights depend on the agent who is listening, and so the network is \emph{weighted}.

In each set of simulations, we generate 1,000 different networks in this manner. For each of the networks, we execute the assignment of beliefs 100 times. This gives 100 instances to compare the outcomes of the learning process with and without confirmation bias for each of the 1,000 network structures.

In the first set of simulations, we test the generality and robustness of the results in section \ref{main results}. After the network $T$ is formed, each subject is assigned an initial belief randomly drawn from a uniform distribution $U[0,1]$. Once initial beliefs are assigned, we randomly draw one value of confirmation bias $q$ from the uniform distribution $U[0.05, 0.15]$. Starting from $T$, we remove any link $T_{ij}$ where $\mid x_{i0}-x_{j0} \mid \ > \ 1-q$ to form the network $T^*$. The simulation runs the learning process in our set-up on both $T$ and $T^*$ and we compare outcomes between the two networks.

The histogram in the left panel of Figure \ref{fig:sim1} shows the frequency distribution of convergence time in $T$ (light grey bars) and $T^*$ (dark grey bars). From simple visual inspection, it is clear that the presence of confirmation bias shifts the distribution to the right, in line with Theorem \ref{speed result}. This is despite the fact that Theorem \ref{speed result} assumes that \emph{every} link is symmetricand that $T_{ii} \geq \frac{1}{2}$ for all $i$, while (by construction) \emph{none} of the links in the simulated networks are symmetric and $T_{ii} = 0$ for all $i$ -- the result holds despite choosing the most challenging tests of the assumption made in the theory.\footnote{In $0.38\%$ of cases, simulated convergence is one period faster with confirmation bias than without.}  The mean convergence time without confirmation bias is $6.6$ periods (with $\{min, max\}=[6,8]$), while it is $7.7$ with confirmation bias ($\{min, max\}=[6,26]$).

The right panel of Figure \ref{fig:sim1} shows the evolution over time of the levels of polarization averaged over the learning processes with (dashed line) and without confirmation bias (solid line). Before any learning takes place, the levels of polarization are identical, as the initial allocations of signals are the same. As soon as the learning process begins, however, the average level of polarization with confirmation bias is higher and it remains that way until everyone converges to the same belief.
Given that the time to convergence is longer with confirmation bias, there is a period when the society has positive level of polarization with confirmation bias and a zero level in the absence of the bias. This shows the result in Proposition \ref{polar result}  is robust to relaxing the mean-field assumption.

\begin{figure}[H]
\centering
\caption{Convergence time with and without confirmation bias}\label{fig:sim1}
\begin{subfigure}[t]{0.49\linewidth}
\caption{Convergence time}
\centering
\includegraphics[scale=0.5]{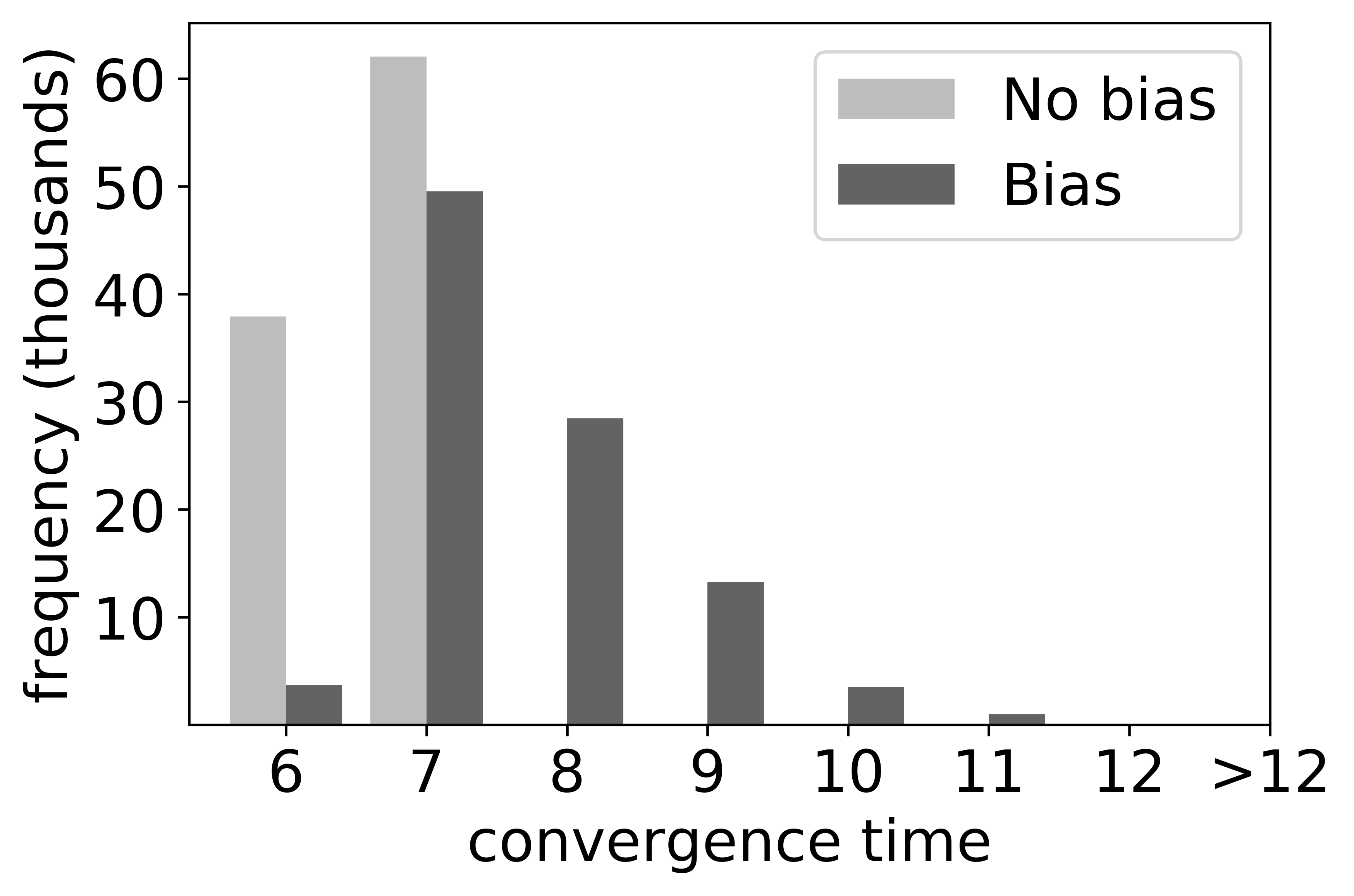}
\end{subfigure}
\hfill
\begin{subfigure}[t]{0.49\linewidth}
\caption{Polarization}
\centering
\includegraphics[scale=0.5]{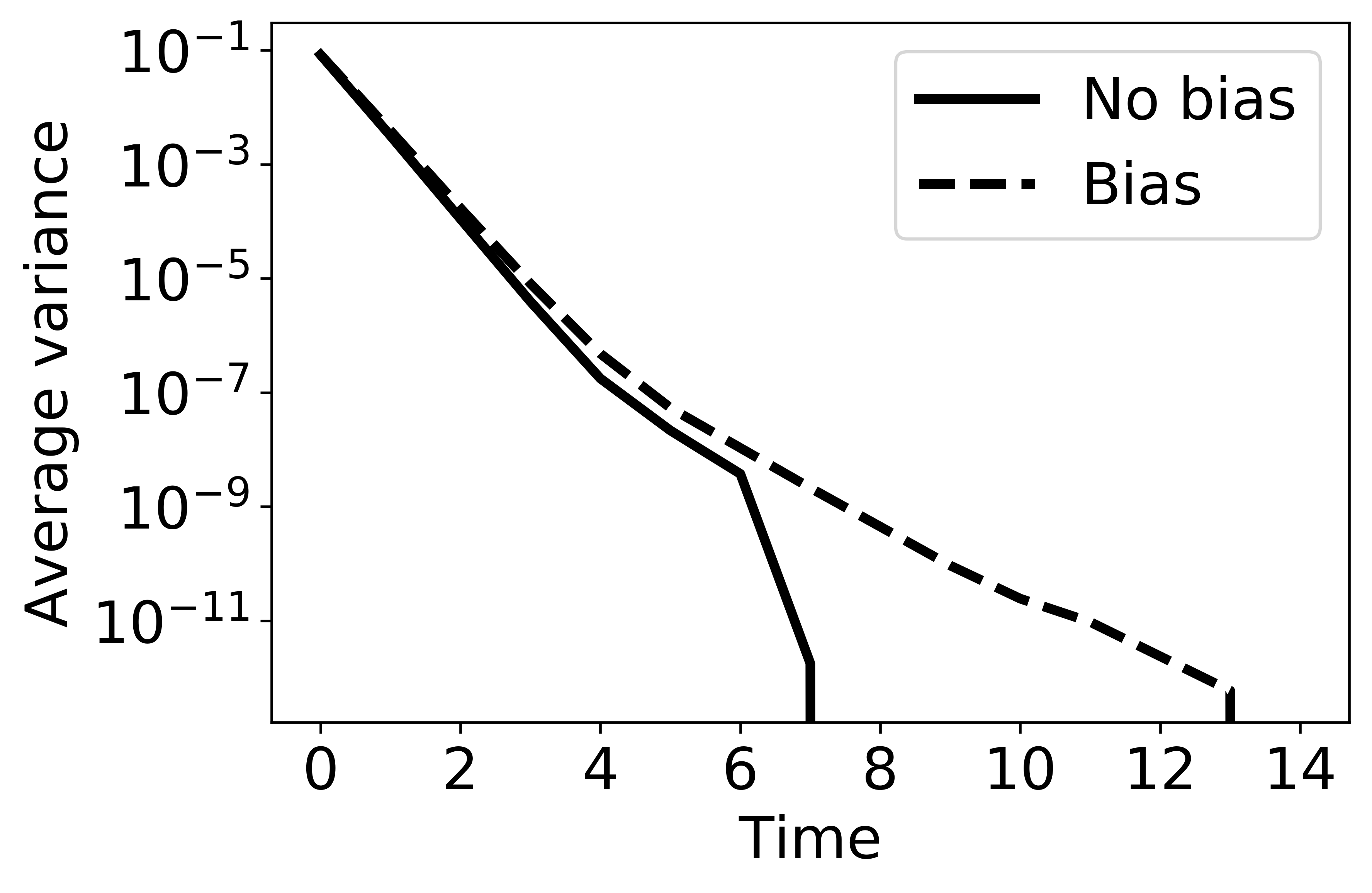}
\end{subfigure}
\end{figure}

In the second set of simulations, we test the robustness of the results on voting in section \ref{voting section} to relaxing the mean-field assumption. Following the set-up in section \ref{voting section}, we discretize the distribution of signals into $x_i \in \{x_{EL}, x_{CL}, x_S, x_{CR}, x_{ER}\}=\{0, 0.25, 0.5, 0.75, 1\}$.\footnote{The beliefs do not need to be equally spaced on the line, but we choose this to aid exposition.} Denote by $f_i$ the fraction of agents assigned initial belief $x_i$. We fix $f_S=0.2$ so that $20\%$ of society is made up by swing voters. The distribution of initial beliefs of the rest of the society is randomly determined with $f_{EL} \sim U[0.1, 0.35]$, $f_{CL}=0.45-w_{EL}$, $f_{CR} \sim U[0.1, 0.25]$, $w_{ER}=0.35-f_{CR}$. Additionally, we keep taking random draws of $f_{CR}$ until the weighted average of initial beliefs is less than $0.5$. This set-up ensures we focus on the interesting case when society would vote for the Left before learning takes place and after learning has finished.

Assuming there is an election at each point in time, Figure \ref{fig:sim2} shows the fraction of shock elections with the Right winning both with (light grey bars) and without (dark grey bars) confirmation bias. The results clearly show that the statement of Proposition \ref{voting result} holds even after relaxing the mean-field and wisdom assumptions. At its peak at time $t=1$, $22.3\%$ of elections with confirmation bias end up with the shock outcome of a win by the Right. In contrast, the maximum fraction of shock outcomes without confirmation bias is a paltry $0.03\%$, again at $t=1$. Simulations show that the tiny fraction of shock elections without confirmation bias happen shortly after learning starts, while shock elections can occur in both the short and medium term with confirmation bias and well after the time of convergence without confirmation bias. This is closely related to the fact that convergence is relatively fast without confirmation bias, but can take a long time with it. In particular, the latest time of a shock election without confirmation bias in the 100,000 iterations was at $t=5$, while, e.g., with confirmation bias at $t=10$ we have that $6.4\%$ of elections result in a shock outcome and there are still a few instances of shock outcomes when $t>100$.

\begin{figure}[H]
\centering
\caption{Wins by Right-wing candidate}\label{fig:sim2}
\includegraphics[width=0.9\textwidth]{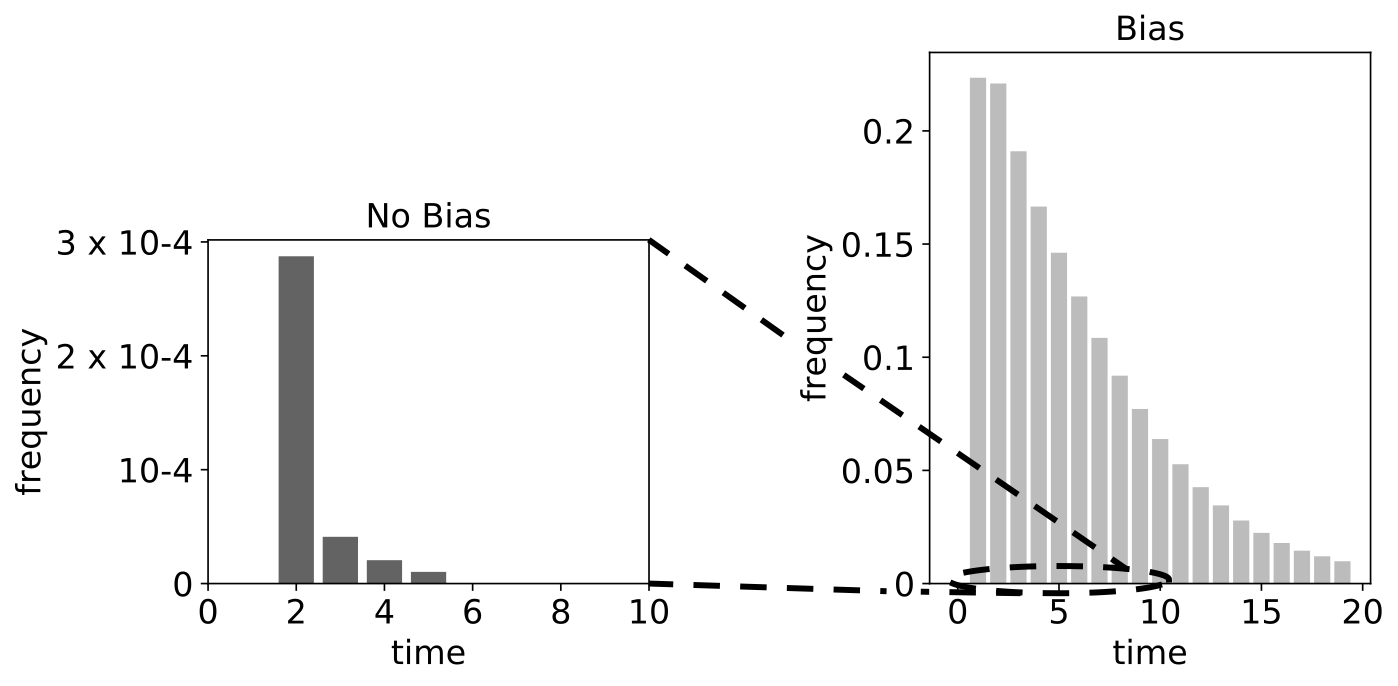}
\end{figure}

\section{Conclusion}\label{conclusion}

The advent of online social networks has dramatically expanded the number of people who shape our views. Another facet of this change is that it has become very easy to stop listening to someone; it suffices to ``unfollow'' them. A large number of studies in psychology show that confirmation bias is a powerful filter in how we process information, and, inevitably, the relevance of this filter increases when it becomes so easy to ignore discordant voices. This paper has incorporated confirmation bias into a model of social learning on a network to investigate its impact on the learning process and the political arena.

Confirmation bias has an unambiguously negative impact. An increase in confirmation bias slows down learning and increases polarization in society. The type of network architectures that minimize these negative effects make the position of every individual interchangeable, and they are, unfortunately, the polar opposite of the networks we find in the real world. In the political context, the presence of confirmation bias makes it possible for the worse candidate (given the available information) to win in a shock election result. Moreover, confirmation bias makes the ideology of fringe media organizations more extreme.

This paper showcases the potential of combining insights from behavioral economics and networks. Aside from a few exceptions,\footnote{See, e.g., \citet{fryer2008categorical} and \citet{dessi2016network}.} there is a dearth of papers combining these two methodologies despite their common objective to incorporate realistic features of, respectively, the psychology of decision-making and social interactions into the standard economic framework. Our hope is that this contribution will encourage further studies at the intersection of these two literatures.


\clearpage
\singlespacing
\bibliographystyle{abbrvnat}
\addcontentsline{toc}{section}{References}
\bibliography{general_bib,maths_bib}

\begin{thebibliography}{50}
\providecommand{\natexlab}[1]{#1}
\providecommand{\url}[1]{\texttt{#1}}
\expandafter\ifx\csname urlstyle\endcsname\relax
  \providecommand{\doi}[1]{doi: #1}\else
  \providecommand{\doi}{doi: \begingroup \urlstyle{rm}\Url}\fi

\bibitem[Acemoglu et~al.(2010)Acemoglu, Ozdaglar, and
  ParandehGheibi]{acemoglu2010spread}
D.~Acemoglu, A.~Ozdaglar, and A.~ParandehGheibi.
\newblock Spread of (mis) information in social networks.
\newblock \emph{Games and Economic Behavior}, 70\penalty0 (2):\penalty0
  194--227, 2010.

\bibitem[Allcott and Gentzkow(2017)]{allcott2017social}
H.~Allcott and M.~Gentzkow.
\newblock Social media and fake news in the 2016 election.
\newblock \emph{Journal of Economic Perspectives}, 31\penalty0 (2):\penalty0
  211--36, 2017.

\bibitem[Anand et~al.(2007)Anand, Di~Tella, and
  Galetovic]{anand2007information}
B.~Anand, R.~Di~Tella, and A.~Galetovic.
\newblock Information or opinion? media bias as product differentiation.
\newblock \emph{Journal of Economics \& Management Strategy}, 16\penalty0
  (3):\penalty0 635--682, 2007.

\bibitem[Bakshy et~al.(2015)Bakshy, Messing, and Adamic]{bakshy2015exposure}
E.~Bakshy, S.~Messing, and L.~A. Adamic.
\newblock Exposure to ideologically diverse news and opinion on facebook.
\newblock \emph{Science}, 348\penalty0 (6239):\penalty0 1130--1132, 2015.

\bibitem[Banerjee(1992)]{banerjee1992simple}
A.~V. Banerjee.
\newblock A simple model of herd behavior.
\newblock \emph{The Quarterly Journal of Economics}, 107\penalty0 (3):\penalty0
  797--817, 1992.

\bibitem[Basu et~al.(2014)Basu, Hermon, and Peres]{basu2014characterization}
R.~Basu, J.~Hermon, and Y.~Peres.
\newblock Characterization of cutoff for reversible markov chains.
\newblock In \emph{Proceedings of the twenty-sixth annual ACM-SIAM symposium on
  Discrete algorithms}, pages 1774--1791. SIAM, 2014.

\bibitem[Bikhchandani et~al.(1992)Bikhchandani, Hirshleifer, and
  Welch]{bikhchandani1992theory}
S.~Bikhchandani, D.~Hirshleifer, and I.~Welch.
\newblock A theory of fads, fashion, custom, and cultural change as
  informational cascades.
\newblock \emph{Journal of Political Economy}, 100\penalty0 (5):\penalty0
  992--1026, 1992.

\bibitem[Bond et~al.(2012)Bond, Fariss, Jones, Kramer, Marlow, Settle, and
  Fowler]{bond201261}
R.~M. Bond, C.~J. Fariss, J.~J. Jones, A.~D. Kramer, C.~Marlow, J.~E. Settle,
  and J.~H. Fowler.
\newblock A 61-million-person experiment in social influence and political
  mobilization.
\newblock \emph{Nature}, 489\penalty0 (7415):\penalty0 295, 2012.

\bibitem[Boyd et~al.(2004)Boyd, Diaconis, and Xiao]{boyd2004fastest}
S.~Boyd, P.~Diaconis, and L.~Xiao.
\newblock Fastest mixing markov chain on a graph.
\newblock \emph{SIAM Review}, 46\penalty0 (4):\penalty0 667--689, 2004.

\bibitem[Braha and De~Aguiar(2017)]{braha2017voting}
D.~Braha and M.~A. De~Aguiar.
\newblock Voting contagion: Modeling and analysis of a century of us
  presidential elections.
\newblock \emph{PloS one}, 12\penalty0 (5):\penalty0 e0177970, 2017.

\bibitem[Chandrasekhar et~al.(2020)Chandrasekhar, Larreguy, and
  Xandri]{chandrasekhar2020testing}
A.~G. Chandrasekhar, H.~Larreguy, and J.~P. Xandri.
\newblock Testing models of social learning on networks: Evidence from two
  experiments.
\newblock \emph{Econometrica}, 88\penalty0 (1):\penalty0 1--32, 2020.

\bibitem[Choi et~al.(2016)Choi, Gallo, and Kariv]{choi2016networks}
S.~Choi, E.~Gallo, and S.~Kariv.
\newblock Networks in the laboratory.
\newblock In Y.~Bramoull{\'e}, A.~Galeotti, and B.~Rogers, editors, \emph{The
  Oxford Handbook of the Economics of Networks}, pages 440--475. Oxford
  University Press, 2016.

\bibitem[Conlisk(1985)]{conlisk1985comparative}
J.~Conlisk.
\newblock Comparative statics for markov chains.
\newblock \emph{Journal of Economic Dynamics and Control}, 9\penalty0
  (2):\penalty0 139--151, 1985.

\bibitem[Corazzini et~al.(2012)Corazzini, Pavesi, Petrovich, and
  Stanca]{corazzini2012influential}
L.~Corazzini, F.~Pavesi, B.~Petrovich, and L.~Stanca.
\newblock Influential listeners: An experiment on persuasion bias in social
  networks.
\newblock \emph{European Economic Review}, 56\penalty0 (6):\penalty0
  1276--1288, 2012.

\bibitem[DeGroot(1974)]{degroot1974}
M.~H. DeGroot.
\newblock Reaching a consensus.
\newblock \emph{Journal of the American Statistical Association}, 69\penalty0
  (345):\penalty0 118--121, 1974.

\bibitem[DeMarzo et~al.(2003)DeMarzo, Vayanos, and
  Zwiebel]{demarzo2003persuasion}
P.~M. DeMarzo, D.~Vayanos, and J.~Zwiebel.
\newblock Persuasion bias, social influence, and unidimensional opinions.
\newblock \emph{The Quarterly Journal of Economics}, 118\penalty0 (3):\penalty0
  909--968, 2003.

\bibitem[Dessi et~al.(2016)Dessi, Gallo, and Goyal]{dessi2016network}
R.~Dessi, E.~Gallo, and S.~Goyal.
\newblock Network cognition.
\newblock \emph{Journal of Economic Behavior \& Organization}, 123:\penalty0
  78--96, 2016.

\bibitem[Downs(1957)]{downs1957economic}
A.~Downs.
\newblock An economic theory of political action in a democracy.
\newblock \emph{Journal of political economy}, 65\penalty0 (2):\penalty0
  135--150, 1957.

\bibitem[Eaton and Lipsey(1975)]{eaton1975}
B.~C. Eaton and R.~G. Lipsey.
\newblock The principle of minimum differentiation reconsidered: Some new
  developments in the theory of spatial competition.
\newblock \emph{The Review of Economic Studies}, 42\penalty0 (1):\penalty0
  27--49, 1975.

\bibitem[Elstein and Bordage(1979)]{elstein1979psychology}
A.~Elstein and G.~Bordage.
\newblock Psychology or clinical reasoning.
\newblock In G.~Stone, F.~Cohen, and N.~Adler, editors, \emph{Health
  psychology: A handbook}, pages 335--367. San Francisco: Jossey-Bass, 1979.

\bibitem[Epstein(2006)]{epstein2006axiomatic}
L.~G. Epstein.
\newblock An axiomatic model of non-bayesian updating.
\newblock \emph{The Review of Economic Studies}, 73\penalty0 (2):\penalty0
  413--436, 2006.

\bibitem[Esteban and Ray(1994)]{esteban1994measurement}
J.-M. Esteban and D.~Ray.
\newblock On the measurement of polarization.
\newblock \emph{Econometrica}, pages 819--851, 1994.

\bibitem[Fryer and Jackson(2008)]{fryer2008categorical}
R.~Fryer and M.~O. Jackson.
\newblock A categorical model of cognition and biased decision making.
\newblock \emph{The BE Journal of Theoretical Economics}, 8\penalty0 (1), 2008.

\bibitem[Gabaix(2014)]{gabaix2014sparsity}
X.~Gabaix.
\newblock A sparsity-based model of bounded rationality.
\newblock \emph{The Quarterly Journal of Economics}, 129\penalty0 (4):\penalty0
  1661--1710, 2014.

\bibitem[Gallo(2014)]{gallo2014social}
E.~Gallo.
\newblock Social learning by chit-chat.
\newblock \emph{Journal of Economic Theory}, 153:\penalty0 313--343, 2014.

\bibitem[Golub and Jackson(2008)]{golub2008naive}
B.~Golub and M.~O. Jackson.
\newblock How homophily affects diffusion and learning in networks.
\newblock Working Paper, 2008.
\newblock URL \url{https://arxiv.org/pdf/0811.4013.pdf}.

\bibitem[Golub and Jackson(2010)]{golub2010naive}
B.~Golub and M.~O. Jackson.
\newblock Naive learning in social networks and the wisdom of crowds.
\newblock \emph{American Economic Journal: Microeconomics}, 2\penalty0
  (1):\penalty0 112--49, 2010.

\bibitem[Golub and Jackson(2012)]{golub2012homophily}
B.~Golub and M.~O. Jackson.
\newblock How homophily affects the speed of learning and best-response
  dynamics.
\newblock \emph{The Quarterly Journal of Economics}, 127\penalty0 (3):\penalty0
  1287--1338, 2012.

\bibitem[Golub and Sadler(2016)]{golub2016networks}
B.~Golub and E.~Sadler.
\newblock Learning in social networks.
\newblock In Y.~Bramoull{\'e}, A.~Galeotti, and B.~Rogers, editors, \emph{The
  Oxford Handbook of the Economics of Networks}, pages 504--542. Oxford
  University Press, 2016.

\bibitem[Gottfried and Shearer(2016)]{gottfried2016news}
J.~Gottfried and E.~Shearer.
\newblock \emph{News Use Across Social Medial Platforms 2016}.
\newblock Pew Research Center, 2016.

\bibitem[Gul and Pesendorfer(2001)]{gul2001temptation}
F.~Gul and W.~Pesendorfer.
\newblock Temptation and self-control.
\newblock \emph{Econometrica}, 69\penalty0 (6):\penalty0 1403--1435, 2001.

\bibitem[Hotelling(1929)]{hotelling1929}
H.~Hotelling.
\newblock Stability in competition.
\newblock \emph{The Economic Journal}, 39\penalty0 (153):\penalty0 41--57,
  1929.

\bibitem[Jackson and Rogers(2007)]{jackson2007meeting}
M.~O. Jackson and B.~W. Rogers.
\newblock Meeting strangers and friends of friends: How random are social
  networks?
\newblock \emph{American Economic Review}, 97\penalty0 (3):\penalty0 890--915,
  2007.

\bibitem[Jadbabaie et~al.(2012)Jadbabaie, Molavi, Sandroni, and
  Tahbaz-Salehi]{jadbabaie2012}
A.~Jadbabaie, P.~Molavi, A.~Sandroni, and A.~Tahbaz-Salehi.
\newblock Non-bayesian social learning.
\newblock \emph{Games and Economic Behavior}, 76\penalty0 (1):\penalty0
  210--225, 2012.

\bibitem[Kohut et~al.(2008)Kohut, Keeter, Doherty, and Dimock]{kohut2008social}
A.~Kohut, S.~Keeter, C.~Doherty, and M.~Dimock.
\newblock Social networking and online videos take off: Internet’s broader
  role in campaign 2008.
\newblock \emph{TPR Center, The PEW research center}, 2008.

\bibitem[Kuhn et~al.(1994)Kuhn, Weinstock, and Flaton]{kuhn1994well}
D.~Kuhn, M.~Weinstock, and R.~Flaton.
\newblock How well do jurors reason? competence dimensions of individual
  variation in a juror reasoning task.
\newblock \emph{Psychological Science}, 5\penalty0 (5):\penalty0 289--296,
  1994.

\bibitem[Levin et~al.(2009)Levin, Peres, and Wilmer]{levin2009}
D.~A. Levin, Y.~Peres, and E.~L. Wilmer.
\newblock \emph{Markov chains and mixing times}.
\newblock American Mathematical Society, 2009.

\bibitem[Mas-Colell et~al.(1995)Mas-Colell, Whinston, Green,
  et~al.]{mas1995microeconomic}
A.~Mas-Colell, M.~D. Whinston, J.~R. Green, et~al.
\newblock \emph{Microeconomic Theory}, volume~1.
\newblock Oxford university press New York, 1995.

\bibitem[Masatlioglu et~al.(2012)Masatlioglu, Nakajima, and
  Ozbay]{masatlioglu2012revealed}
Y.~Masatlioglu, D.~Nakajima, and E.~Y. Ozbay.
\newblock Revealed attention.
\newblock \emph{American Economic Review}, 102\penalty0 (5):\penalty0
  2183--2205, 2012.

\bibitem[McNew(2011)]{mcnew2011eigenvalue}
N.~McNew.
\newblock The eigenvalue gap and mixing time.
\newblock 2011.

\bibitem[Montenegro et~al.(2006)Montenegro, Tetali, et~al.]{montenegro2006}
R.~Montenegro, P.~Tetali, et~al.
\newblock Mathematical aspects of mixing times in markov chains.
\newblock \emph{Foundations and Trends{\textregistered} in Theoretical Computer
  Science}, 1\penalty0 (3):\penalty0 237--354, 2006.

\bibitem[Nickerson(1998)]{nickerson1998confirmation}
R.~S. Nickerson.
\newblock Confirmation bias: A ubiquitous phenomenon in many guises.
\newblock \emph{Review of General Psychology}, 2\penalty0 (2):\penalty0
  175--220, 1998.

\bibitem[Pariser(2011)]{pariser2011filter}
E.~Pariser.
\newblock \emph{The filter bubble: What the Internet is hiding from you}.
\newblock Penguin UK, 2011.

\bibitem[Prior(2013)]{prior2013media}
M.~Prior.
\newblock Media and political polarization.
\newblock \emph{Annual Review of Political Science}, 16:\penalty0 101--127,
  2013.

\bibitem[Rabin and Schrag(1999)]{rabinschrag1999}
M.~Rabin and J.~Schrag.
\newblock First impressions matter: A model of confirmatory bias.
\newblock \emph{The Quarterly Journal of Economics}, 114\penalty0 (1):\penalty0
  37--82, 1999.

\bibitem[Schweitzer(1968)]{schweitzer1968}
P.~J. Schweitzer.
\newblock Perturbation theory and finite markov chains.
\newblock \emph{Journal of Applied Probability}, 5\penalty0 (02):\penalty0
  401--413, 1968.

\bibitem[Souzi(2019)]{newpart3notes}
P.~Souzi.
\newblock Mixing times of markov chains.
\newblock Unpublished, 2019.
\newblock URL \url{http://www.statslab.cam.ac.uk/\~ps422/mixing-notes.pdf}.

\bibitem[Sunstein(2018)]{sunstein2018republic}
C.~R. Sunstein.
\newblock \emph{\#Republic: Divided democracy in the age of social media}.
\newblock Princeton University Press, 2018.

\bibitem[Tuchman(2011)]{tuchman2011march}
B.~W. Tuchman.
\newblock \emph{The March of Folly: From Troy to Vietnam}.
\newblock Random House, 2011.

\bibitem[Weeks et~al.(2017)Weeks, Ard{\`e}vol-Abreu, and Gil~de
  Z{\'u}{\~n}iga]{weeks2017online}
B.~E. Weeks, A.~Ard{\`e}vol-Abreu, and H.~Gil~de Z{\'u}{\~n}iga.
\newblock Online influence? social media use, opinion leadership, and political
  persuasion.
\newblock \emph{International Journal of Public Opinion Research}, 29\penalty0
  (2):\penalty0 214--239, 2017.

\end{thebibliography}
\cleardoublepage
\clearpage
\onehalfspacing
\appendix
\numberwithin{equation}{section}
\numberwithin{thm}{section}
\numberwithin{prop}{section}
\numberwithin{defn}{section}
\numberwithin{lem}{section}

\section{Appendix: Extensions} \label{extensions}
This appendix contains generalizations and extensions to the main model.

\subsection{Generalized model}\label{generalised model}
The model in Section \ref{model} assumes that the strength of confirmation bias is identical for all agents, and that confirmation bias causes agents to cut links \emph{completely}. Additionally, it assumes that agents \emph{only} cut links before the learning process begins. This section shows that the main result, Theorem \ref{speed result} is robust to relaxing all three of these assumptions.

Relaxing these assumptions yields an updated definition of confirmation bias. First, the strength of confirmation bias, $q$, now depends on the agent, $i$. It allows arbitrary heterogeneity in the strength of confirmation bias.
Second, confirmation bias no longer reduces link weights to zero, but rather reduces them by a link-specific factor $(1 - \alpha_{ij})$. There are no restrictions on the distribution of the $\alpha_{ij}$'s (except of course that they are bounded between $0$ and $1$). Finally, the network can be changed every period, rather than only once -- so there is now a sequence of networks $T^{t*}$.

The assumptions from Section \ref{model} that agents never reinstate links, and that they redistribute the weight from cut/weakened links to themselves remain unchanged. \\

\begin{defn}\label{generalized core rule 2}
A society on a network $T$ in which agents have (agent-specific) confirmation bias $q_{i} \in (0,1]$ communicates according to a network $T^{t *}$ such that:
\begin{align*}
  \text{if} \quad |{x_{i t} - x_{j t}}| > (1-q_{i}) \quad &\text{then} & & T^{t*}_{i j}  = ( 1 - \alpha_{ij} ) T^{t-1 *}_{i j} & & T^{t*}_{i i} = T^{t-1 *}_{i i} + \alpha_{ij} T^{t-1 *}_{i j} \\
   &\text{otherwise}  & & T^{t*}_{i j}  = T^{t-1 *}_{i j}  	& & T^{t*}_{i i} = T^{t-1 *}_{i i}
\end{align*}
Where $q_{i} \in [0,1] \quad \alpha_{ij} \in [0,1]$, and $T^{-1 *}_{ij} \equiv T_{ij}$ for all $i$ and $j$.
\end{defn}

Using this generalized definition, it is no longer immediately obvious what it means for confirmation bias to increase. We say that confirmation bias has increased if $q'_{i} \geq q_{i}$ and $\alpha'_{ij} \geq \alpha_{ij}$ for all $i$ and $j$, with at least one strict inequality.

The proof to Theorem \ref{speed result} shows that the \emph{average convergence time}, $\tau$, increases when link weight is moved from the between-agent links to agents' self-links (mathematically, this is moving weight from the off-diagonal elements to the diagonal elements of the matrix $T$) and there are \emph{no other changes} to the network. It does not rely on any other structure to the changes to the network. Theorem \ref{speed result} is therefore effectively proved as a special case, and the result holds under the generalization here without any change to the proof. This intuition is also unchanged -- society converges to a consensus more slowly when agents are less willing to pay attention to those with different beliefs, and instead pay more attention to themselves.

In contrast, the influence, polarization and optimal networks results do not extend to this generalized setting.\footnote{The exception to this is Remark \ref{single agent influence}. This is because it considers a case where only one agent is affected by confirmation bias, so the heterogeneity in $q$ and in $\alpha$ does not matter -- as it has no effect.}
This is because they rely on the symmetric application of confirmation bias -- if $i$ cuts a link with $j$, then $j$ must also cut a link with $i$ (if a link is present in both directions). Both the heterogeneity in the strength of confirmation bias, $q$, and the heterogeneity in the down-weighting of links violate this requirement.

\subsection{Speed of Learning}\label{aperiodicity appendix}
Theorem \ref{speed result} uses a definition of convergence time, \emph{average convergence time}, that does not explicitly involve the initial assignment of beliefs. However, it is robust to using an alternative measure; \emph{consensus time} -- which considers the time to converge to a consensus from the \emph{worst-case} initial assignment of beliefs.

\begin{defn}\citep{golub2010naive}\label{defn convergence speed}
The consensus time of the network $T$ is:
\begin{equation*}
CT(\epsilon) = \sup\limits_{ x_{0} \in [0,1]^{n} } \min \{ t : || T^{t} x_{0} - T^{\infty} x_{0} ||^{2}_{s(i)} < \epsilon \} \\
\end{equation*}
where $||.||_{s(i)}$ is a weighted version of the $\boldsymbol\ell_{2}$ norm.
\end{defn}

The consensus time is the minimum time $t$ at which all agents have beliefs within $\epsilon$ of each other in the \emph{worst-case} distribution of initial signals. The difference is weighted by the influence, $s_i$, each agent has on the consensus. We can characterize confirmation bias' effect on consensus time for symmetric networks.

\begin{thm} \label{convergence}
If $T_{ii} \geq \frac{1}{2}$ for all $i$ and $T$ is symmetric, then the convergence time of $T^{*}$ is weakly monotonically increasing in the amount of confirmation bias $q$.
\end{thm}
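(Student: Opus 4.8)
The plan is to follow the two-step template of the proof of Theorem~\ref{speed result} — a spectral comparison followed by a Markov-chain step — but now tracking the \emph{worst-case} residual that enters Definition~\ref{defn convergence speed} rather than the average one. I would first record what the symmetry hypothesis buys. Since the strength of confirmation bias $q$ is common to all agents and the cutting criterion $|x_{i0}-x_{j0}|>1-q$ is symmetric in $i$ and $j$, agent $i$ severs the link to $j$ precisely when $j$ severs the link to $i$; hence $T^*$ remains symmetric. It follows that the influence vector is uniform ($s_i=1/n$) for both $T$ and $T^*$, so $\|\cdot\|_{s(i)}$ is a fixed multiple of the Euclidean norm and $T^{\infty}=(T^{*})^{\infty}=\tfrac1n\mathbf 1\mathbf 1^{\top}$. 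Furthermore $T_{ii}\ge\tfrac12$ places, by Gershgorin, the whole spectrum of $T$ in $[0,1]$, and since cutting links only raises diagonal entries, $T^*$ inherits $T^*_{ii}\ge\tfrac12$ and positive semidefiniteness; both matrices thus have a simple eigenvalue $1$ and all remaining eigenvalues in $[0,1)$.

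For the spectral step I would write the cut-and-redistribute operation in matrix form,
\[
T^{*} \;=\; T \;+\; \sum_{(i,j)\,\mathrm{cut}} T_{ij}\,(e_i-e_j)(e_i-e_j)^{\top},
\]
and observe that every summand is positive semidefinite, so $T^{*}\succeq T$ in the Loewner order; raising $q$ only enlarges the set of cut pairs, adding more such terms. Weyl's monotonicity theorem then gives $\lambda_k(T^{*})\ge\lambda_k(T)$ for every $k$, and in particular the second eigenvalue $\lambda_2$ — which, since all eigenvalues are nonnegative, governs the slowest-decaying deviation from consensus — is weakly increasing in $q$. This is the matrix analogue of the Dirichlet-energy comparison used for Theorem~\ref{speed result}.

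It remains to transfer $\lambda_2(T^{*})\ge\lambda_2(T)$ to the consensus time. Writing $x_0=\tfrac12\mathbf 1+y_0$ with $y_0\in[-\tfrac12,\tfrac12]^{n}$ annihilates the component along $\mathbf 1$, so the weighted residual at time $t$ is proportional to $\sum_{k\ge 2}\lambda_k^{2t}\langle v_k,y_0\rangle^{2}$; following \citet{golub2010naive}, one sandwiches its supremum over $y_0$ between increasing functions of $\lambda_2^{t}$ alone — the upper bound $\tfrac14\lambda_2^{2t}\max_{x_0}\|x_0-\bar x_0\mathbf 1\|_2^{2}$ is immediate, and a matching lower bound follows by testing on $x_0=\tfrac12(\operatorname{sign}(v_2)+\mathbf 1)$ and using $\langle v_2,\operatorname{sign}(v_2)\rangle=\|v_2\|_1\ge\|v_2\|_2=1$. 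Hence $CT(\epsilon)$ is a weakly increasing function of $\lambda_2$, which combined with the spectral step proves the claim. I expect this last step to be the main obstacle: $\lambda_2(T^{*})\ge\lambda_2(T)$ does not by itself yield a pointwise-in-$t$ domination of the worst-case residual, because $T$ and $T^{*}$ have different eigenvectors and the $\ell_{\infty}$-constrained supremum is eigenvector-sensitive (matrix powers are not operator monotone, so one cannot simply conclude $(T^{*})^{t}\succeq T^{t}$). The argument must therefore route through the eigenvalue-only bounds of \citet{golub2010naive} — effectively adopting their convention that consensus time is governed by the second eigenvalue — and treat with care the boundary case $\lambda_2(T^{*})=\lambda_2(T)$ (and, if one insists on arbitrary $\epsilon$, the small-$t$ regime), checking that changes in the faster-decaying modes and in the eigenvectors cannot shorten $CT$.
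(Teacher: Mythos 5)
Your proposal is correct and follows the same two-step architecture as the paper's proof: establish that all eigenvalues of $T^{*}$ are nonnegative and weakly increasing in $q$, then convert eigenvalue monotonicity into consensus-time monotonicity via the second-eigenvalue bounds of Lemma \ref{consensus time SLEM}. The one place you genuinely depart is the spectral step. The paper imports the Dirichlet-energy comparison from the proof of Theorem \ref{speed result} (a Courant--Fischer argument), whereas you write the cut-and-redistribute operation explicitly as $T^{*}=T+\sum_{(i,j)\,\mathrm{cut}}T_{ij}(e_i-e_j)(e_i-e_j)^{\top}$ and invoke Weyl monotonicity from the Loewner ordering. These are two faces of the same variational fact, but your decomposition is self-contained, makes the monotonicity in $q$ transparent (larger $q$ simply adds more PSD rank-one terms), and verifies directly that the perturbation is exactly the one prescribed by Definition \ref{core rule} under symmetry. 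Your handling of nonnegativity via Gershgorin is also fine and equivalent to the paper's $T=\tfrac12(I+\tilde T)$ trick.

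You are also right to flag the final step as the weak point, and here you are in fact more careful than the paper. Lemma \ref{consensus time SLEM} only \emph{sandwiches} $CT(\epsilon)$ between two increasing functions of $|\lambda_2|$ whose gap is of order $\log(1/\underline{s})/\log(1/|\lambda_2|)$; an increase in $\lambda_2$ therefore forces an increase in $CT$ only once the new lower bound clears the old upper bound, not pointwise. The paper's proof asserts the implication directly and inherits exactly the gap you describe; your explicit upper bound $\tfrac14\lambda_2^{2t}$ and matching lower bound via the test vector $x_0=\tfrac12(\operatorname{sign}(v_2)+\mathbf 1)$ (which is admissible and gives $\langle v_2,x_0\rangle\ge\tfrac12$) reproduce those bounds from scratch but do not close it either. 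So your proof is at least as rigorous as the paper's own, and your concluding caveat --- that the result holds in the sense that consensus time is ``governed by'' $\lambda_2$, up to the slack in the sandwich --- is an accurate description of what both arguments actually deliver.
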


Before proving this result, it is helpful to state an existing result that links the consensus time to the second largest eigenvalue modulus (SLEM) of $T$. We use a result from \cite{golub2008naive}, although others exist.

\begin{lem}\citep{golub2008naive}\label{consensus time SLEM}
Assume $T$ is connected, let $\lambda_{2}$ be the second largest eigenvalue of $T$ and let $s$ be the vector of influences with $\min_{i} s_{i} = \underline{s}$. If $\lambda_{2} \neq 0$ then for any $0 < \epsilon \leq 1$:
	\begin{align*}
	\bigg\lfloor \frac{ \log (1/4\epsilon) - \log (1/\underline{s}) }{2 \log (1/|\lambda_{2}|)} \bigg\rfloor
	\leq CT(\epsilon) \leq
	\bigg\lceil \frac{ \log (1/\epsilon) }{2 \log (1/|\lambda_{2}|)}  \bigg\rceil
	\end{align*}
\end{lem}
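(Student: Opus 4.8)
The plan is to reduce the statement to the behavior of the second largest eigenvalue and then read off the conclusion from Lemma~\ref{consensus time SLEM}. First I would record the structural consequences of symmetry. Since the cut condition $|x_{i0}-x_{j0}|>1-q$ is symmetric in $i$ and $j$, whenever $i$ severs its link to $j$ the agent $j$ simultaneously severs its link to $i$, so $T^{*}$ inherits symmetry from $T$. A symmetric row-stochastic matrix is doubly stochastic, hence its unique invariant distribution is uniform and $s_i=s_i^{*}=1/n$ for every value of $q$; in particular the quantity $\underline{s}=\min_i s_i=1/n$ appearing in Lemma~\ref{consensus time SLEM} does not move with $q$. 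Moreover each cut only adds weight to a diagonal entry, so $T^{*}_{ii}\geq T_{ii}\geq \tfrac12$, and the standard Gershgorin argument makes $T^{*}$ positive semi-definite; thus $|\lambda_{2}(T^{*})|=\lambda_{2}(T^{*})\geq 0$ and we never enter the ``nearly periodic'' regime.

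The central step is to show that $\lambda_{2}(T^{*})$ is weakly increasing in $q$. Raising $q$ lowers the threshold $1-q$, so the set of severed pairs grows monotonically; hence for $q<q'$ the network $T^{*}(q')$ is obtained from $T^{*}(q)$ by cutting an additional collection of (symmetric) links and reassigning their weight to the diagonal. Writing $w_{ij}$ for the weight of such an additionally cut link and letting $E_{ij}$ denote the matrix with a one in position $(i,j)$ and zeros elsewhere, the difference is
\begin{equation*}
T^{*}(q')-T^{*}(q)=\sum_{(i,j)} w_{ij}\bigl(E_{ii}+E_{jj}-E_{ij}-E_{ji}\bigr),
\end{equation*}
so that for any vector $f$ the associated quadratic form equals $\sum_{(i,j)} w_{ij}(f_i-f_j)^{2}\geq 0$. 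Thus $T^{*}(q')\succeq T^{*}(q)$ in the Loewner order, and Weyl's monotonicity theorem gives $\lambda_k(T^{*}(q'))\geq \lambda_k(T^{*}(q))$ for every $k$, in particular for the SLEM at $k=2$. This is the same Dirichlet-energy mechanism underlying Theorem~\ref{speed result}, now applied at the level of an individual eigenvalue rather than the full sum.

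Finally I would feed these two facts into Lemma~\ref{consensus time SLEM}. With $\underline{s}=1/n$ held fixed, both the lower and the upper bound on $CT(\epsilon)$ depend on $q$ only through the factor $1/\log(1/|\lambda_{2}|)$ in their denominators; since $|\lambda_{2}(T^{*})|$ is weakly increasing in $q$, the quantity $\log(1/|\lambda_{2}|)$ is weakly decreasing, and both bounds are weakly increasing in $q$. Because the lemma sandwiches the consensus time between two expressions that move together and are controlled by the single monotone quantity $|\lambda_{2}|$, the consensus time of $T^{*}$ is weakly monotonically increasing in $q$, as claimed.

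I expect the genuine obstacle to lie in this last step rather than in any computation. The lemma only brackets $CT(\epsilon)$, so strictly speaking monotone bounds do not by themselves force the integer-valued $CT$ to be pointwise monotone. A direct spectral comparison that would close the gap is unavailable: even though $T^{*}(q')\succeq T^{*}(q)$, matrix powers are not operator monotone, so one cannot conclude that the time-$t$ disagreement $\|T^{*t}x_0-T^{*\infty}x_0\|_{s(i)}$ is pointwise larger for the higher-$q$ network, and the clean ``error domination'' route fails. I would therefore state the conclusion in the precise sense delivered by Lemma~\ref{consensus time SLEM} -- consensus time tracks $1/\log(1/|\lambda_{2}|)$, which is monotone in $q$ by the two steps above -- with the floor/ceiling discreteness as the only slack, exactly as in the comparable statements of \citet{golub2010naive}.
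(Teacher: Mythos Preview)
You have proved the wrong statement. Lemma~\ref{consensus time SLEM} is the sandwich bound
\[
\bigg\lfloor \frac{ \log (1/4\epsilon) - \log (1/\underline{s}) }{2 \log (1/|\lambda_{2}|)} \bigg\rfloor
\leq CT(\epsilon) \leq
\bigg\lceil \frac{ \log (1/\epsilon) }{2 \log (1/|\lambda_{2}|)}  \bigg\rceil,
\]
which relates consensus time to the SLEM of a \emph{fixed} connected $T$. There is no confirmation-bias parameter $q$, no $T^{*}$, and no monotonicity claim in it. The paper does not prove this lemma at all: it is quoted verbatim from \citet{golub2008naive} and used as a black box. What you have written is instead a proof of Theorem~\ref{convergence} in the appendix---the statement that consensus time of $T^{*}$ is weakly increasing in $q$---which \emph{invokes} Lemma~\ref{consensus time SLEM} as a tool. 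Your own final sentence (``the consensus time of $T^{*}$ is weakly monotonically increasing in $q$, as claimed'') confirms the mix-up.

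Read as an attempt at Theorem~\ref{convergence}, your argument is essentially the paper's own proof. The paper shows all eigenvalues are nonnegative and weakly increasing in $q$ by pointing back to the Dirichlet-energy computation in the proof of Theorem~\ref{speed result}, then feeds the resulting monotonicity of $|\lambda_{2}|$ into Lemma~\ref{consensus time SLEM}. You obtain the same eigenvalue monotonicity via the Loewner order and Weyl's theorem, which is an equivalent packaging of the same quadratic-form inequality. You are also right to flag that monotone floor/ceiling bounds do not literally force pointwise monotonicity of the integer $CT(\epsilon)$; the paper glosses over this and simply asserts that the tight bounds ``show that an increase in $|\lambda_{2}|$ weakly increases consensus time,'' so your caveat is more careful than the original on that point.
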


\begin{proof}[\emph{\textbf{Proof of Theorem \ref{convergence}.}}]
The proof to Theorem \ref{speed result} shows that all eigenvalues of the network are weakly positive and are weakly increasing in the strength of confirmation bias (in particular, see the final paragraph of the proof). Therefore the second largest eigenvalue must also be the SLEM. Therefore, a weak increase in $\lambda_{2}$ implies a weak increase in the SLEM.

Lemma \ref{consensus time SLEM} provides tight bounds for consensus time in terms $|\lambda_{2}|$ (the SLEM), and therefore shows that an increase in $|\lambda_{2}|$ weakly increases consensus time.

Taken together, stronger confirmation bias (weakly) increases $|\lambda_{2}|$, which in turn (weakly) increases the consensus time.
\end{proof}

Notice that this uses the proof to Theorem \ref{speed result} directly, and so relies on the same structure of changes to the network. Therefore, it also holds under the generalized model in Definition \ref{generalized core rule 2}.

Markov chain convergence is governed by the spectra of eigenvalues, and is often well approximated by the second largest eigenvalue modulus. In Theorem \ref{speed result} we use the full spectra of eigenvalues, but use a metric that does not include the initial signals. This alternative result uses only the second eigenvalue as an approximation, but does include the initial signals. However, it only uses the worse-case set of signals. This has two drawbacks. First, we would like to know the convergence time for a given allocation of initial signals, $x_0$, rather than only the worst-case allocation. Second, the allocation that constitutes the worst-case scenario for $T$ may not also be the worst-case scenario for $T^{*}$. Therefore, this result does not guarantee that, for a given $x_0$, consensus is reached more quickly under $T$ than under $T^{*}$.

\subsection{Optimal networks} \label{optimal networks appendix}
This section first presents the benchmark case where the social planner is able to observe both the initial allocation of signals, $x_{0}$, and the level of confirmation bias, $q$. It then introduces the Maximum-Degree Heuristic and the Metropolis-Hastings Heuristic -- well-known heuristics for finding fast-converging Markov chains, and shows that they suggest unweighted networks as likely candidates for achieving fast convergence. \\

\noindent \textbf{Optimal network with full information.}
We focus on the case where convergence to a consensus is possible. That is, where there exists $T \in \mathcal{T}$ such that $T^{*}$ is strongly connected. This rules out the existence of a group $A \subset N$ where no agent $i \in X$ is willing to listen to any $j \in A^{c}$.

Since the social planner observes $x_{0}$ and $q$, she can then control the network $T^{*}$ directly -- as she knows exactly which links would, if created, be cut due to confirmation bias. Here, characterizing the optimal network is straightforward. It is a version of a `long-armed' star network, which we call an \emph{octopus} network. It guarantees convergence to the truth in at most $\frac{1}{q} (\max_{i} \{ x_{i0} \} - \min_{i} \{ x_{i0} \} )$ periods.

To construct an octopus network, first identify an agent whose initial opinion is equal to the truth. Denote this agent $a$, and set $T_{aa} = 1$. If there does not exist such an agent, then identify a pair of agents, $a$ and $b$ where; (1) $|x_{a0} - x_{b0}| < 1-q$ (they are willing to listen to one another), and (2) $\gamma x_{a0} + (1 - \gamma) x_{b0} = \overline{x}_{i0}$ for some $\gamma \in (0,1)$ (some linear combination of their initial signals is equal to the truth). Then choose $T_{aa}, T_{ab}, T_{ba}, T_{bb}$ such that $x_{a1}, x_{b1}$ exactly equal the truth and $T_{aa} + T_{ab} = T_{ba} + T_{bb} = 1$.

Second, identify all agents $i$ such that $0 \leq |x_{i0} - x_{a0}| < (1-q)$, and set $T_{ia} = 1$. Denote this set of agents $\mathcal{A}_{1}$. Next, identify all agents $j$ such that $(1-q) \leq |x_{j0} - x_{a0}| < 2(1-q)$. Denote this set of agents $\mathcal{A}_{2}$. Choose $T_{ji}$ such that $\sum_{i \in \mathcal{A}_{1} } T_{ji} = 1$ and $\sum_{i \notin \mathcal{A}_{1} } T_{ji} = 0$. Repeat this step until $a \cup \bigcup\limits_{k=1}^{K} \mathcal{A}_{k} = N$.
Clearly $K \leq \frac{1}{q} (x_{max 0} - x_{min 0})$. So convergence to the truth is guaranteed within $\frac{1}{q} (x_{max 0} - x_{min 0})$ periods. Note that this process does not lead to a unique network -- any network in the class of \emph{octopus} networks constructed by this process is optimal. \\

\noindent \textbf{Heuristics for link weights.}
Allocating link weights to minimize the time taken to reach a consensus is a well-studied problem in the mathematics literature.\footnote{It is known as the Fastest Mixing Markov Chain problem, and the terminology used is rather different. Nevertheless, the core question is the same} \citet{boyd2004fastest} present a series of methods to solve this computationally for an individual network, but in the absence of general results we instead use heuristics to suggest link weights for optimal networks.

The Maximum Degree Heuristic assigns equal weight to \emph{all} between-agent links in the network, chooses the maximum feasible link weight. All remaining weight is placed on the self-link. This means that the agent with the highest degree has no self-link.

\begin{defn}[Maximum Degree Heuristic]
The maximum-degree transition probability matrix $T^{md}$ is given by
\begin{align*}
T_{ij}^{md} =
	\begin{cases}
	1/d_{max} \ 			&\text{ if } (i,j) \in E \ \text{ and } i \neq j \\
	1 - d_{i}/d_{max} \ 	&\text{ if } i = j \\
	0 \ 					&\text{ if } (i,j) \notin E
	\end{cases}
\end{align*}
Where $E$ is the set of links, and $d_{max} = \max_{i} \{ d_{i} \}$.
\end{defn}

The Metropolis-Hastings Heuristic is somewhat more nuanced. For any pair of agents $i$ and $j$, without loss of generality assume that $i$ has a weakly higher degree. Then a link between $i$ and $j$ is assigned a weight equal to the \emph{reciprocal of $i$'s degree}. This happens for all between-agent links. Self-links are then allocated to ensure that $\sum_{j} T_{ij} = 1$ for all $i$.

\begin{defn}[Metropolis-Hastings Heuristic]
For a symmetric Markov chain, the Metropolis-Hastings transition probability matrix $T^{mh}$ is given by
\begin{align*}
T_{ij}^{mh} =
	\begin{cases}
	\min \{ \frac{1}{d_{i}} , \frac{1}{d_{j}} \} \ 	&\text{ if } (i,j) \in E  \text{ and } i \neq j \\
	\sum_{(i,k) \in E} \max \{ 0 , \frac{1}{d_{i}} - \frac{1}{d_{k}} \} \ 	&\text{ if } i = j \\
	0 \ 					&\text{ if } (i,j) \notin E
	\end{cases}
\end{align*}
Where $E$ is the set of links.
\end{defn}

Both of these definitions are taken from \cite{boyd2004fastest}. Due to the restrictions imposed by Proposition \ref{min info loss}, we are only considering regular networks -- implying that $d_{i} = d_{j} = d_{max}$ for all $i$. Therefore, under both the Maximum Degree and Metropolis-Hastings heuristics, all links in the network are assigned the same weight \emph{and} there are no self-links.

\subsection{Media}\label{media appendix}
In this section we find the exact ideology of the fringe media organization for each value of $M \leq 5$, and for all values of $q \in [0,1]$.

\textbf{M=1:} $\mu_{fr}(q,M=1) \in [\min(q, 1-q) , \max(q, 1-q)]$. Where there is a monopoly media organization, it chooses any ideology that will capture the largest possible audience (when $q \leq 0.5$ this encompasses all agents).

\textbf{M=2:} $\mu_{fr}(q\leq 0.5, M=2) = 0.5$ and $\mu_{fr}(q > 0.5, M=2) = 1-q$. When $q \leq 0.5$, all agents are still willing to listen to an ideology of $\mu =0.5$, and so the duopoly equilibrium is unaffected by confirmation bias in this range (see \citet{eaton1975} for a proof of the equilibrium). When $q > 0.5$ then the fringe media organization is willing to choose a sufficiently extreme ideology to attract the most extreme agent as a listener, \emph{but no more}. Clearly a more extreme ideology would not be optimal, as the media organization could choose a more centrist ideology, and not lose any listeners on the fringe, but would gain some listeners to the center.

\textbf{M=3:} When $q \leq 0.75$ and there are three media organizations, an equilibrium does not exist. Therefore $\mu_{fr}(q \leq 0.75, M=3)$ does not exist. $\mu_{fr}(q > 0.75, M=3) = 1-q$. With 3 media organizations and low confirmation bias, there is no equilibrium -- peripheral organizations want to move inwards (towards the center), which increases their audience, but this causes the organization in the middle to want to change its ideology discontinuously to become a peripheral organization. This cycle never ends; preventing equilibrium existence. When $q>0.75$ however, the peripheral organizations no longer want to move inwards past $\mu = 1-q$ (or $\mu = q$), as they would lose very extreme agents from their audience by doing so. Given this, the organization in the middle no longer wants to switch its ideology to become a peripheral organization.

\textbf{M=4 and M=5:} $\mu_{fr}(q \leq 0.75, M=4) = 0.25$ and $\mu_{fr}(q > 0.75, M=5) = 1-q$.  $\mu_{fr}(q \leq \frac{5}{6}, M=5) = \frac{1}{6}$ and $\mu_{fr}(q > \frac{5}{6}, M=5) = 1-q$ The logic here is identical to the two-organization case above.

We can see from this that Propositions \ref{media monotone m} and \ref{media monotone q} hold for two or more media organizations (aside from the existence issue when $M=3$), but that neither hold for $M=1$. At least some competition is required to sustain our results.

\subsection{Simulations}\label{simulations appendix}
In Section \ref{simulations results} we conducted two sets of simulations. In the first set, initial signals were drawn from a uniform distribution $U[0,1]$, and confirmation bias was relatively low (between $0.05$ and $0.15$). This was used to test the robustness of Theorem \ref{speed result} and Proposition \ref{polar result}. The second set of simulations, which used a discrete distribution of signals and a significantly higher value of confirmation bias, was used to test the robustness of Proposition \ref{voting result}.

This section uses the second set of simulations to further test the robustness of Theorem \ref{speed result} and Proposition \ref{polar result} -- showing that they hold even more starkly under a higher level of confirmation bias. Note that it is not possible to use the first set of simulations to further test Proposition \ref{voting result}, as Proposition \ref{voting result} requires restrictions on the distribution of initial signals not present in the first set of simulations.

Figure \ref{fig:sim3} shows convergence times with and without confirmation bias. It is analogous to the left panel of Figure \ref{fig:sim1}. Again, it is clear from visual inspection that confirmation bias shifts the distribution significantly to the right; in line with Theorem 1.\footnote{Recall that this is despite the fact that Theorem 1 assumes that every link is symmetric, while (by construction) none of the links in the simulated networks are symmetric – the result holds despite choosing the most challenging test of the assumption made in the theory.}
The mean convergence time without confirmation bias is is 6.9 periods (with $\{min, max\} = [6, 8]$), while it is 139.3 with confirmation bias ($\{min, max\} = [49, 365]$). Figure \ref{fig:sim4} then shows polarization with and without confirmation bias. This is analogous to the right panel of Figure \ref{fig:sim1}.

\begin{figure}[H]
\centering
\caption{Convergence Speed}\label{fig:sim3}
\begin{subfigure}[t]{0.49\linewidth}
\caption{Without Bias}
\centering
\vspace{7mm}
\includegraphics[width=7cm]{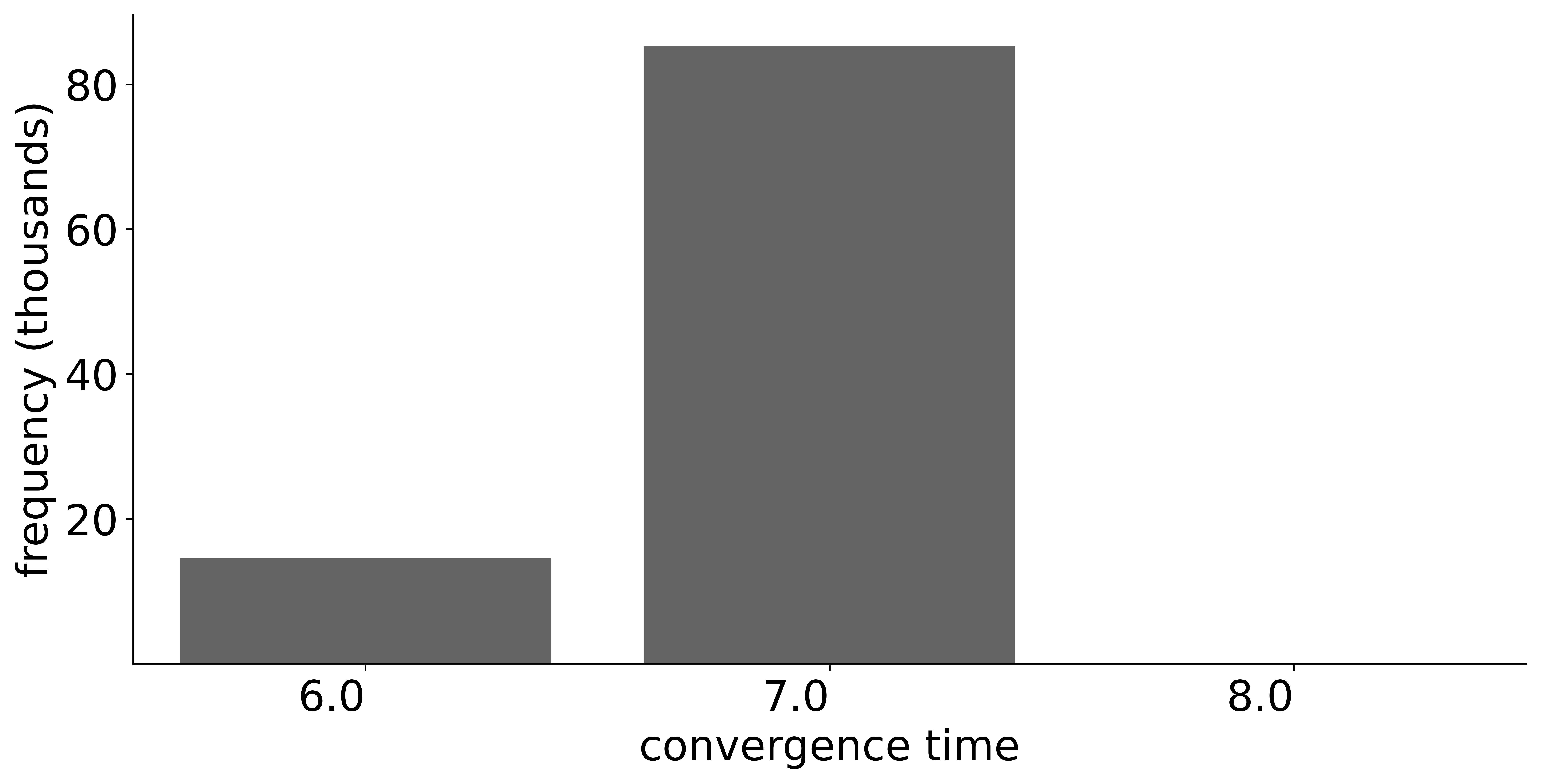}
\end{subfigure}
\hfill
\begin{subfigure}[t]{0.49\linewidth}
\caption{With Bias}
\centering
\includegraphics[width=7cm]{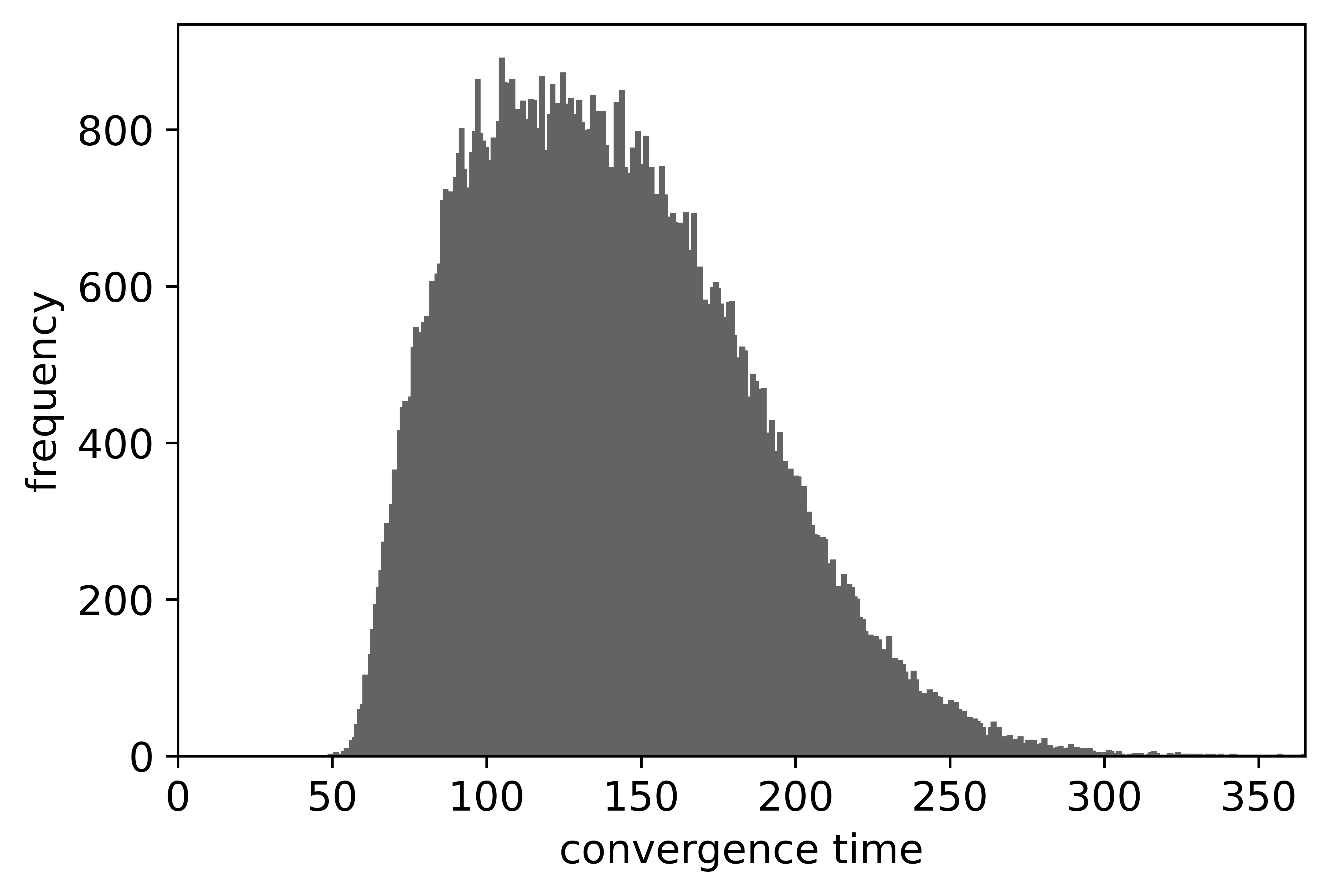}
\end{subfigure}
\end{figure}

\begin{figure}[H]
\centering
\caption{Polarization}\label{fig:sim4}
\includegraphics[scale=0.5]{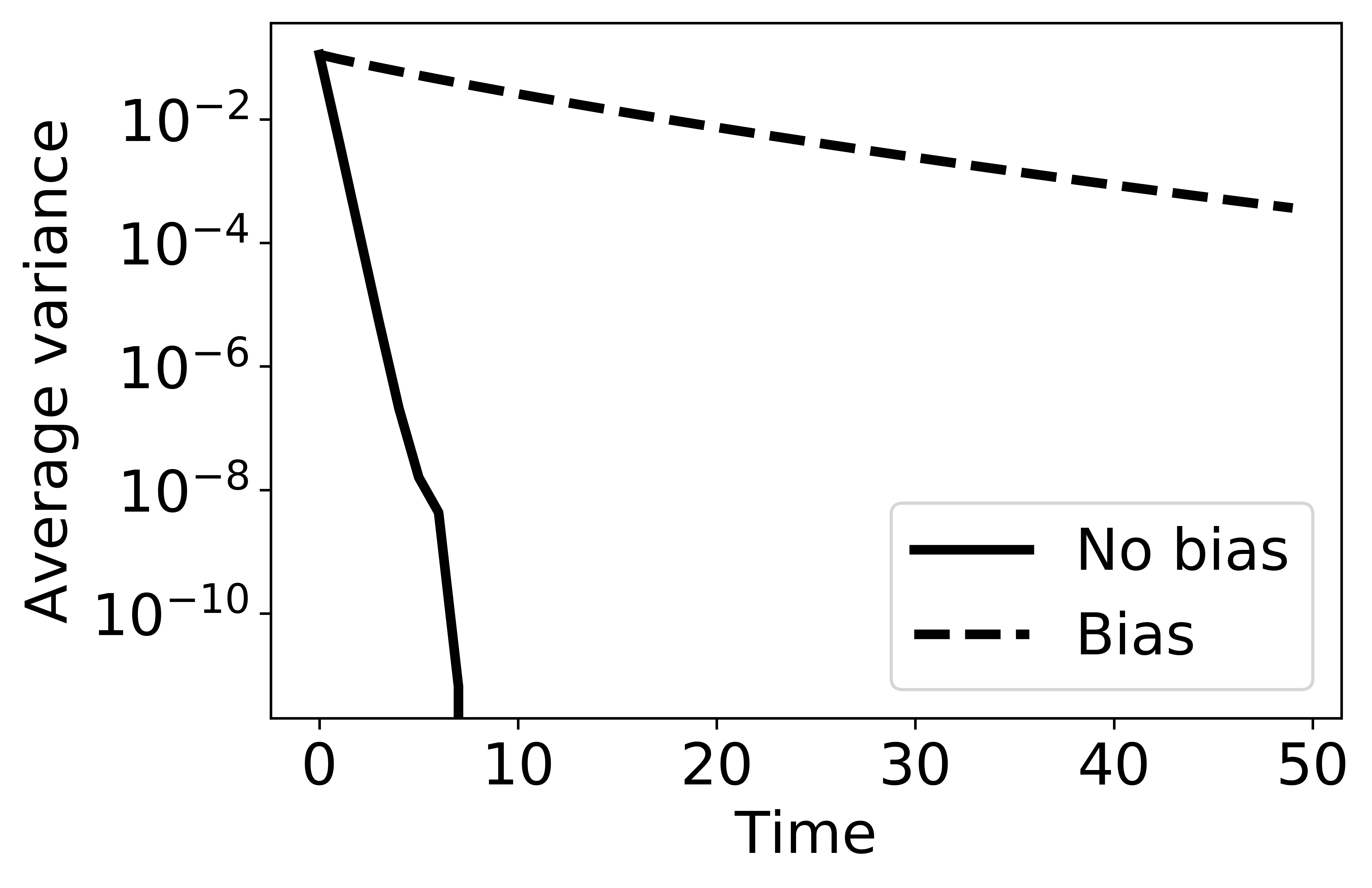}
\end{figure}

These confirm that Theorem \ref{speed result} and Proposition \ref{polar result} are robust to relaxing the network symmetry requirement and mean-field assumptions respectively. Additionally, they suggest that the results are in fact even more stark when confirmation bias is stronger.

Additionally, the simulations show an interesting feature of the changes in consensus values. Changes in consensus value (due to confirmation bias) appear to be symmetrically distribution about zero, and take on an approximate bell curve. Further, when confirmation bias is stronger, the changes are much more widely dispersed, but still centered around zero (see Fig \ref{fig:sim5}).

Analytic results regarding the consensus value are intractable except in the special case of symmetric or wise networks, because it relies on the entirety of the influence vector and on the full distribution of initial signals. Relatively little is known about how elements of the influence vector change in response to changes in the network.

\begin{figure}[H]
\centering
\caption{Consensus Change}\label{fig:sim5}
\begin{subfigure}[t]{0.49\linewidth}
\caption{First set}
\centering
\includegraphics[scale=0.5]{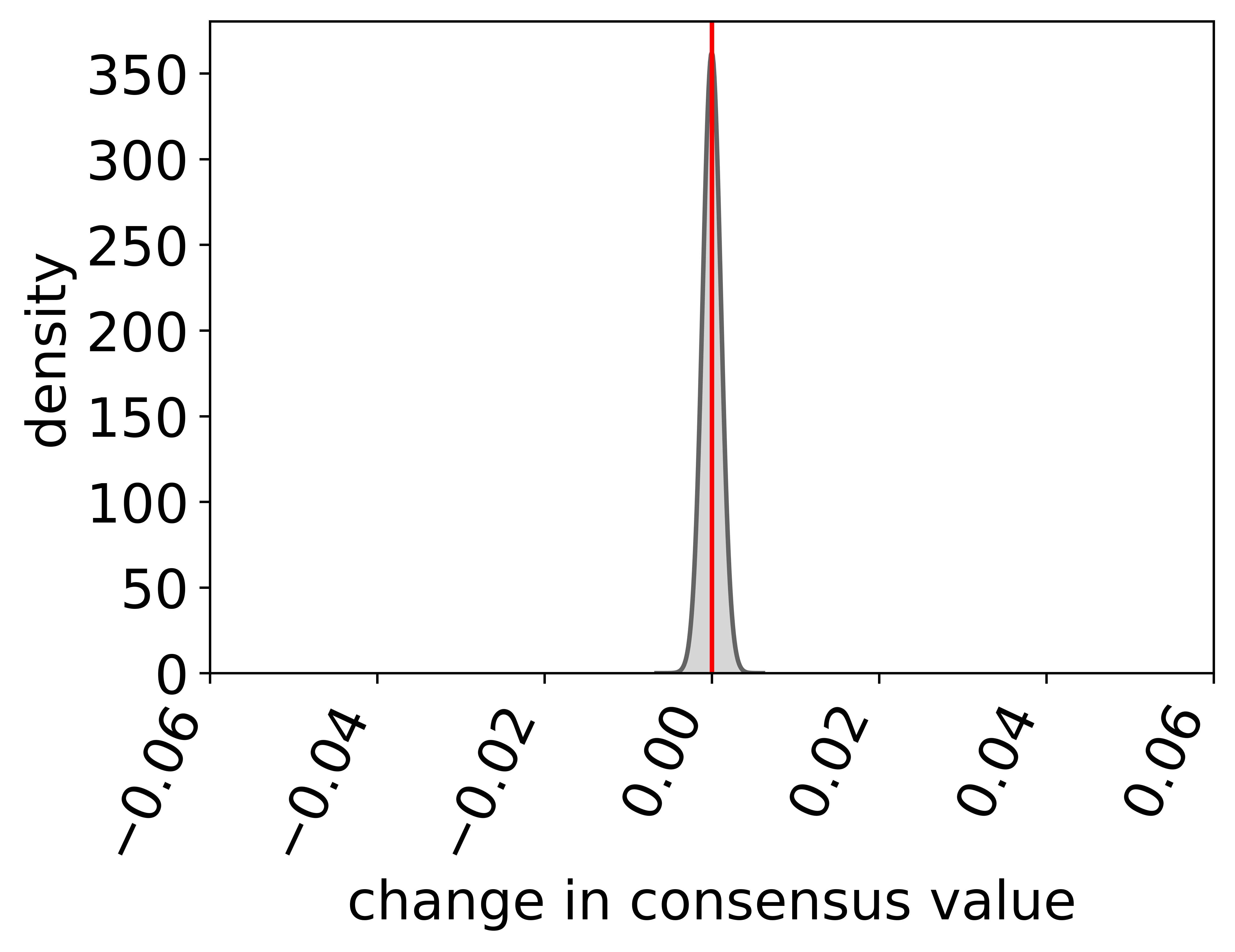}
\end{subfigure}
\hfill
\begin{subfigure}[t]{0.49\linewidth}
\caption{Second set}
\centering
\includegraphics[scale=0.5]{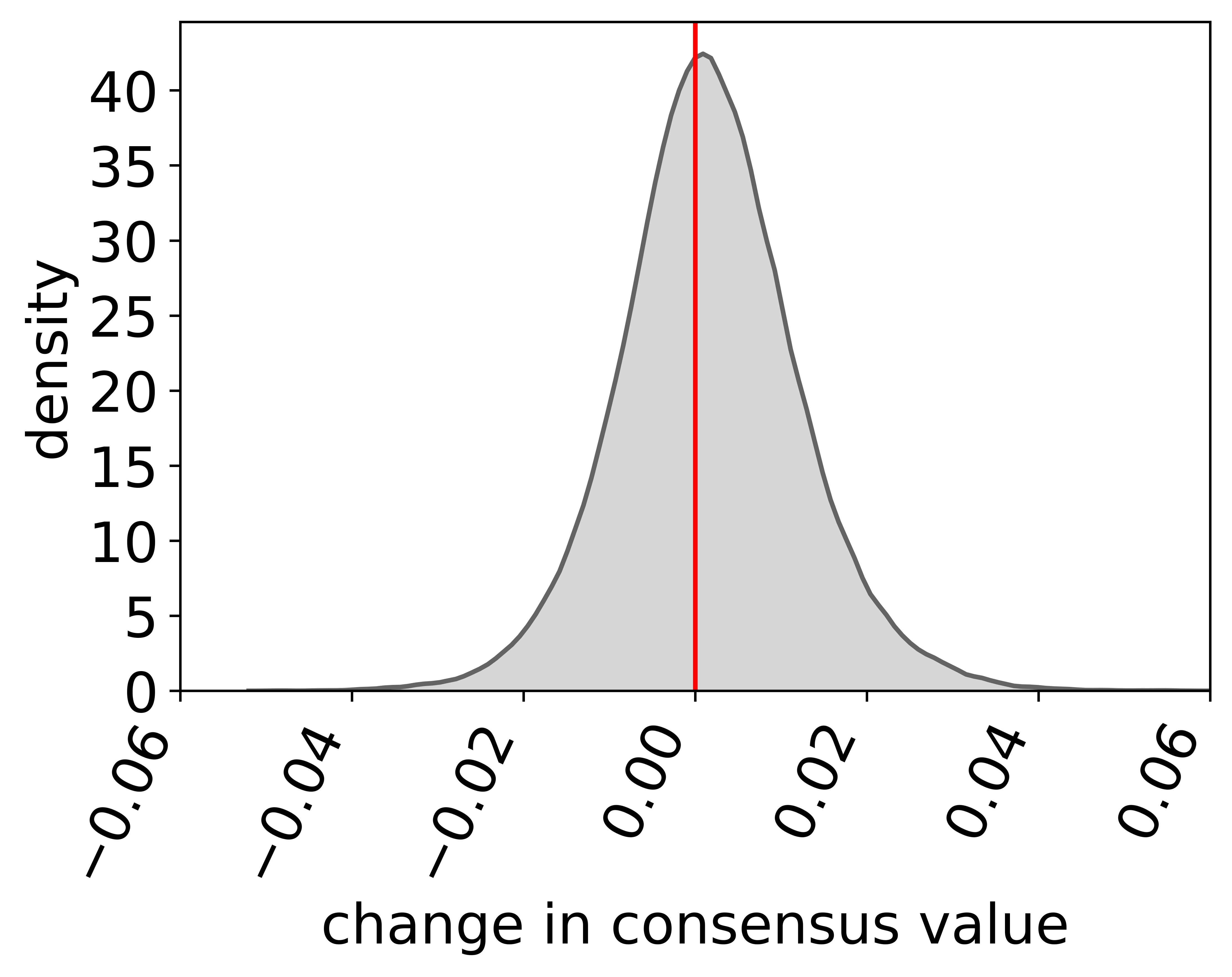}
\end{subfigure}
\end{figure}

\newpage
\section{Proofs}\label{proofs}
We begin by setting out some notation. $\lambda_{k}$ is the $k$th eigenvalue of the Markov chain $T$, where eigenvalues have been ordered such that $\lambda_{1} > \lambda_{2} > ... > \lambda_{n}$. Similarly, $\lambda_{k}^{*}$ is the $k$th eigenvalue of the Markov chain $T^{*}$. $v_{k}$ is the right-hand eigenvector of the Markov chain $T$ that corresponds to the eigenvalue $\lambda_{k}$. $\mathcal{E}(f, T)$ is the Dirichlet energy functional of $T$. $s$ is the left-hand eigenvector that corresponds to the eigenvalue $\lambda_1 \equiv 1$, with $i$th element $s_i$. It is also called the \emph{influence} vector, and $s_i$, the influence of agent $i$. $f$ is an \emph{arbitrary} vector of length $n$

\subsection{Speed of Learning}
\begin{defn}\citep[Definition 4.1]{newpart3notes}\label{dirichlet energy}
The Dirichlet energy function of a network $T$ is
	\begin{align}
	\mathcal{E}(f,T) = \frac{1}{2} \sum_{i,j} ( f_{i} - f_{j} )^{2} s_{i} T_{ij}
	\end{align}
\end{defn}

\begin{thm}\citep[Theorem 4.6]{newpart3notes}\label{dirichlet eigenvalues}
	\begin{align}
	1 - \lambda_{j} = \min_{f} \{ \mathcal{E}(f, T) : ||f||_{2} = 1 , f \bot v_{1} , ... , v_{j-1} \}  \ \text{ for all } j \in \{ 1,...,n \}
	\end{align}
\end{thm}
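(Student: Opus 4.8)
The plan is to recognize Theorem \ref{dirichlet eigenvalues} as the Courant--Fischer (Rayleigh--Ritz) variational characterization of eigenvalues, applied to the operator $I - T$ regarded as self-adjoint on $\mathbb{R}^n$ equipped with the $s$-weighted inner product $\langle f, g \rangle_s = \sum_i s_i f_i g_i$. Throughout, $\|\cdot\|_2$ and $\bot$ are to be read in this inner product; in the symmetric case relevant to the paper $s_i = 1/n$, so this is merely a rescaling of the Euclidean structure. The first step is to record that, because $T$ is reversible with respect to its stationary distribution $s$ --- i.e. $s_i T_{ij} = s_j T_{ji}$, which holds automatically when $T$ is symmetric --- the matrix $T$ is self-adjoint on $(\mathbb{R}^n, \langle\cdot,\cdot\rangle_s)$. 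By the spectral theorem it then admits real eigenvalues, ordered $\lambda_1 \geq \lambda_2 \geq \cdots \geq \lambda_n$, with eigenvectors $v_1, \dots, v_n$ that are orthonormal in $\langle\cdot,\cdot\rangle_s$; the same vectors diagonalize $I - T$, whose eigenvalues are $1 - \lambda_1 \leq \cdots \leq 1 - \lambda_n$.

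The crux is the algebraic identity $\mathcal{E}(f, T) = \langle f, (I - T) f \rangle_s$. I would prove it by expanding the square in Definition \ref{dirichlet energy}:
\[
\mathcal{E}(f,T) = \tfrac{1}{2}\sum_{i,j} s_i T_{ij}\, f_i^2 - \sum_{i,j} s_i T_{ij}\, f_i f_j + \tfrac{1}{2}\sum_{i,j} s_i T_{ij}\, f_j^2 .
\]
Row-stochasticity ($\sum_j T_{ij} = 1$) collapses the first sum to $\tfrac{1}{2}\sum_i s_i f_i^2$; reversibility together with row-stochasticity collapses the third sum to $\tfrac{1}{2}\sum_j s_j f_j^2$; and the middle sum is exactly $\langle f, T f\rangle_s$. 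Combining gives $\mathcal{E}(f,T) = \sum_i s_i f_i^2 - \langle f, T f\rangle_s = \langle f, (I-T)f\rangle_s$, as claimed.

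With the identity in hand, the statement reduces to the standard variational principle. Any feasible $f$ --- unit-norm and orthogonal to $v_1, \dots, v_{j-1}$ --- expands as $f = \sum_{k \geq j} c_k v_k$ with $\sum_{k\geq j} c_k^2 = 1$, so that $\langle f, (I-T) f\rangle_s = \sum_{k \geq j} c_k^2 (1 - \lambda_k) \geq (1 - \lambda_j) \sum_{k\geq j} c_k^2 = 1 - \lambda_j$, using $1 - \lambda_k \geq 1 - \lambda_j$ for $k \geq j$. The bound is attained at $f = v_j$, which is feasible, so the minimum equals $1 - \lambda_j$. Substituting the Dirichlet-energy identity yields the theorem.

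The main obstacle is conceptual rather than computational: one must work in the correct ($s$-weighted) geometry, and the only place the structure of $T$ enters is the reversibility step that makes $T$ self-adjoint and that collapses the third sum in the identity. Everything downstream is the textbook Rayleigh--Ritz argument; the sole care needed is to confirm that the eigenvectors are orthonormal in $\langle\cdot,\cdot\rangle_s$ (not the Euclidean inner product), so that the constraint $f \bot v_1,\dots,v_{j-1}$ appearing in the theorem is precisely the one used in the eigenbasis expansion.
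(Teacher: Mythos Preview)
Your argument is correct and is the standard textbook derivation. Note, however, that the paper does not supply its own proof of this statement: it is quoted verbatim as Theorem~4.6 from the cited lecture notes and used as a black box in the proof of Theorem~\ref{speed result}. So there is no in-paper proof to compare against; your proposal simply fills in the omitted classical argument, and does so in exactly the way one would expect---identifying $\mathcal{E}(f,T)=\langle f,(I-T)f\rangle_s$ via reversibility and row-stochasticity, and then invoking Rayleigh--Ritz for the self-adjoint operator $I-T$ on $(\mathbb{R}^n,\langle\cdot,\cdot\rangle_s)$.
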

This states that 1 minus the $j$th largest eigenvalue equals the minimum of the Dirichlet energy (minimized w.r.t the argument $f$), subject to the constraint that the argument $f$ is orthonormal to the $1^{st}, 2^{nd},... (j-1)^{th}$ eigenvectors of the matrix. 

\begin{lem}\citep[Lemma 3.5]{newpart3notes}\label{lemma3.5}
	\begin{align}
	|| T^{t}(i, \cdot) - s ) ||_{2}^{2} = \sum_{k = 2}^{n} v_{k}(i)^{2} \lambda_{k}^{2t}
	\end{align}
\end{lem}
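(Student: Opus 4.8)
The plan is to prove this by direct spectral decomposition, exploiting the fact that the network $T$ here is symmetric (the standing assumption in this section), so the spectral theorem supplies an orthonormal eigenbasis $v_{1}, \dots, v_{n}$ with real eigenvalues $1 = \lambda_{1} > \lambda_{2} \geq \dots \geq \lambda_{n}$. First I would record the spectral expansion $T^{t} = \sum_{k=1}^{n} \lambda_{k}^{t} v_{k} v_{k}^{\top}$, so that the $j$th entry of the $i$th row is $T^{t}(i,j) = \sum_{k=1}^{n} \lambda_{k}^{t} v_{k}(i) v_{k}(j)$.

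Next I would isolate the $k=1$ term. Since $T$ is symmetric and row-stochastic it is doubly stochastic, so its stationary row is uniform, $s_{j} = 1/n$ for all $j$, and the top eigenvector is $v_{1} = \tfrac{1}{\sqrt{n}} \mathbf{1}$. Because $\lambda_{1} = 1$, the $k=1$ contribution to $T^{t}(i,j)$ is exactly $v_{1}(i) v_{1}(j) = 1/n = s_{j}$, independent of $t$ and of $i$. Subtracting it leaves $T^{t}(i,j) - s_{j} = \sum_{k=2}^{n} \lambda_{k}^{t} v_{k}(i) v_{k}(j)$. Then I would expand the squared norm and swap the order of summation:
\[
\| T^{t}(i,\cdot) - s \|_{2}^{2} = \sum_{j} \Big( \sum_{k=2}^{n} \lambda_{k}^{t} v_{k}(i) v_{k}(j) \Big)^{2} = \sum_{k=2}^{n} \sum_{l=2}^{n} \lambda_{k}^{t} \lambda_{l}^{t} v_{k}(i) v_{l}(i) \sum_{j} v_{k}(j) v_{l}(j).
\]
Orthonormality of the eigenbasis gives $\sum_{j} v_{k}(j) v_{l}(j) = \delta_{kl}$, which collapses the double sum to $\sum_{k=2}^{n} v_{k}(i)^{2} \lambda_{k}^{2t}$, as claimed. (An equivalent slightly slicker route: since $T$ is symmetric, $\| T^{t}(i,\cdot) \|_{2}^{2} = (T^{2t})_{ii} = \sum_{k} \lambda_{k}^{2t} v_{k}(i)^{2}$, while $\langle T^{t}(i,\cdot), s \rangle = 1/n = \| s \|_{2}^{2}$ because $T^{t}$ is row-stochastic; subtracting and peeling off the $k=1$ term yields the same identity.)

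There is no genuine obstacle here — it is a routine computation — but the one point that needs care is the identification of the stationary row $s$ with the rank-one projection $v_{1} v_{1}^{\top}$, which is precisely where symmetry is used: it is what makes the stationary distribution uniform and lets us work with the ordinary (rather than an $s$-weighted) $\ell_{2}$ inner product in which the $v_{k}$ are orthonormal. For a general reversible chain the statement would have to be phrased with the $s$-weighted norm and a correspondingly weighted orthonormality, so restricting to symmetric $T$ sidesteps that bookkeeping entirely.
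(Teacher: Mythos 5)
Your proof is correct. Note that the paper does not prove this lemma itself --- it imports it verbatim from the cited lecture notes --- so there is no in-paper argument to compare against; your spectral-decomposition computation is exactly the standard proof of that result, and your closing remark correctly identifies the one substantive point, namely that symmetry makes the stationary distribution uniform so the ordinary $\ell_{2}$ inner product (rather than the $s$-weighted one used for general reversible chains in the source) is the right one in which the eigenvectors are orthonormal.
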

To avoid confusion, we use $v_{k}(i)$ to denote the $i$th element of the eigenvector $v_{k}$ (which is itself the $k$th eigenvector of the matrix), and $T^{t}(i, \cdot)$ to denote the $i$th row of the matrix $(T)^{t}$. \\

\textbf{Disambiguation:} in this proof, we use subscripts to denote two different things. Subscripts on $T$, $s$, and $f$ (i.e. $T_{ij}$, $s_{i}$, $f_{i}$) denote the $i$th (or $ij$th) element of the vector/matrix. Subscripts on $v$ and $\lambda$ (i.e. $v_{k}$, $\lambda_{k}$) denote the $k$th ordered eigenvector / eigenvalue. So $v_{k}$ refers to a whole eigenvector (an $n \times 1$ vector), \textit{not} a single element of it.

\begin{proof}[\emph{\textbf{Proof of Theorem \ref{speed result}.}}]
Using Definition \ref{dirichlet energy} we have; $\mathcal{E}(f,T^{*}) = \frac{1}{2} \sum_{i,j} ( f_{i} - f_{j} )^{2} s_{i}^{*} T_{ij}^{*}$. Since $T$ is symmetric, then $s_{i}^{*} = s_{i}$ for all $i$, so; $\mathcal{E}(f,T^{*}) = \frac{1}{2} \sum_{i,j} ( f_{i} - f_{j} )^{2} s_{i} T_{ij}^{*}$. By Definition \ref{core rule}; $T_{ij}^{*} \leq T_{ij}$ for all $i$, $j$ with $i \neq j$. Terms where $i=j$ drop out of the summation, as $(f_{i} - f_{j}) = (f_{i} - f_{i}) = 0$.
Therefore, $\mathcal{E}(f, T^{*}) \leq \mathcal{E}(f,T)$ for all $f$ and any symmetric $T$ and associated $T^{*}$.

Since $\mathcal{E}(f, T^{*}) \leq \mathcal{E}(f,T)$ for all $f$, then $\min_{f} \{ \mathcal{E}(f, T^{*}) \} \leq \min_{f} \{ \mathcal{E}(f,T) \}$ subject to any constraints on $f$. Therefore, it follows from Theorem \ref{dirichlet eigenvalues} that $1 - \lambda_{j}^{*} < 1 - \lambda_{j}$ for all $j$. This establishes that all eigenvalues are weakly monotonically increasing in $q$.

Simple manipulation of Lemma \ref{lemma3.5} yields $\frac{1}{n} \sum_{i} || T^{t}(i, \cdot) - s ) ||_{2}^{2} = \frac{1}{n} \sum_{k = 2}^{n}  \sum_{i} v_{k}(i)^{2} \lambda_{k}^{2t}$. By our normalizations, the squared elements of each eigenvector sum to 1; $(\sum_{i} v_{k}(i))^{2} = 1$ for all $k$. Therefore $\frac{1}{n} \sum_{i} || T^{t}(i, \cdot) - s ) ||_{2}^{2} = \frac{1}{n} \sum_{k = 2}^{n} \lambda_{k}^{2t}$. So we can write our convergence metric $\tau$ as;
	\begin{align}
\tau = \min \left\{ t > 0 : \frac{1}{n} \sum_{k = 2}^{n} \lambda_{k}^{2t} < \epsilon \right\}
	\end{align}
It is clearly the case that $\frac{1}{n} \sum_{k = 2}^{n} \lambda_{k}^{2t}$ is decreasing in $t$ and increasing in $\lambda_{k}$ for all $k \in \{ 2 , ... , n \}$. We know from earlier that $q' > q \implies \lambda_{k}^{'} \geq \lambda_{k}$ for all $k \in \{ 2 , ... , n \}$. Therefore $q' > q \implies \frac{1}{n} \sum_{k = 2}^{n} (\lambda_{k}^{'})^{2t}  \geq \frac{1}{n} \sum_{k = 2}^{n} \lambda_{k}^{2t}$. This follows from the fact that all eigenvalues are weakly positive. To see this, note that if $T_{ii} \geq \frac{1}{2}$ for all $i$, then there exists another Markov chain $\tilde{T}$ such that $T = \frac{1}{2} (I + \tilde{T})$, where $I$ is the identity matrix. All eigenvalues of $I$ are equal to $1$, and since $\tilde{T}$ is a Markov chain, all of its eigenvalues are weakly greater than $-1$.

So the minimum time taken to achieve $\frac{1}{n} \sum_{k = 2}^{n} \lambda_{k}^{2t} < \epsilon$ is weakly monotonically increasing in $q$ for any fixed $\epsilon > 0$.
\end{proof}

Notice that this proof only relies on some off-diagonal elements of the matrix decreasing, and some corresponding diagonal elements increasing (and there being no other changes). Therefore, this proof clearly holds under the generalised model (set out in Definition \ref{generalized core rule 2}).

\subsection{Influence}
\begin{proof}[\emph{\textbf{Proof of Remark \ref{equal influence no change}.}}]
First, notice that if an $n \times n$ Markov chain, $T$, is symmetric, then is is doubly stochastic. This means that all rows and all columns sum to 1. Therefore, $\frac{1}{n} \cdot \mathbf{1} \cdot T = \frac{1}{n} \cdot \mathbf{1}$, where $\mathbf{1}$ is an $1 \times n$ vector of ones. Since $\lambda_{1} \equiv 1$, then $\frac{1}{n} \cdot \mathbf{1}$ is the first left-hand eigenvector of $T$, which by definition is the influence vector.

Second, since $|{x_{i 0} - x_{j 0}}| \iff |{x_{j 0} - x_{i 0}}|$, in a symmetric network $T$, $i$ cuts a link with $j$ if and only if $j$ cuts a link with $i$. This is due to homogeneity of $q$. Therefore, if $T$ is symmetric, then $T^{*}$ is also symmetric.

Taking these two observations together implies that $\frac{1}{n} \cdot \mathbf{1}$ is also the first left-hand eigenvector of $T^{*}$. Clearly, influences remain unchanged.
\end{proof}

\begin{proof}[\emph{\textbf{Proof of Remark \ref{single agent influence}.}}]
\cite[Section 3, equation 8]{schweitzer1968} provides an equation for the influence of an agent changes following a perturbation in a single row: $s_{i}^{*} = s_{i} ( 1 + \frac{U_{ii}}{1 - U_{ii}}$, where $U = (T^{*} - T) Z$, and $Z = (I - T - \mathbf{1}s)^{-1}$ is the fundamental matrix of the Markov chain $T$.

When only agent $i$ cuts links, then $T^{*}$ differs from $T$ only in the $i$th row, and so we can apply this formula. Further, when $T^{*}$ differs from $T$ only by entries in the $i$th row, then $U_{ii} = (T^{*}_{ii} - T_{ii}) \sum_{j \neq i} (T_{ij}^{*} T_{ij}) Z_{ji}$. By construction; $\sum_{j} (T_{ij}^{*} - T_{ij}) = - (T_{ii}^{*} - T_{ii})$ and $(T_{ij}^{*} - T_{ij}) \leq 0$ for all $j \neq i$.

\cite[Section 2, equation 8]{conlisk1985comparative} shows that $Z_{ii} > Z_{ji}$ for all $j \neq i$. This implies that, $\sum_{j \neq i} (T_{ij}^{*} T_{ij}) Z_{ji} > 0$. Therefore $U_{ii} > 0$, and $s_{i}^{*} > s_{i}$.

Alternatively, we can see the same result by repeated application of the elementary perturbation result from \cite[Section 4]{conlisk1985comparative}.
\end{proof}

\begin{proof}[\emph{\textbf{Proof of Proposition \ref{influencer listener result}.}}]
We prove the result for an \emph{influencer}, $i$. Denote the set of influencers $\Theta$, and the set of agents that influencers listen to $\widehat{\Theta}$.

Taking the definition of influence for agent $i$; $(1 - T_{ii}) s_i = \sum_{ j \in \mathbb{N}(i) } T_{ji} s_j$, and for agent $j$; $(1 - T_{jj}) s_j = \sum_{ k \in \mathbb{N}(j) \backslash \{ i \} } T_{kj} s_k + T_{ij} s_i$, and substituting the second into the first, yields;
	\begin{align*}
	(1 - T_{ii}) s_i	= \sum_{ j \in \mathbb{N}(i) } T_{ji} \left( \left[ \sum_{ k \in \mathbb{N}(j) \backslash \{ i \} } T_{kj} s_k T_{ij} s_i \right] (1 - T_{jj})^{-1} \right)
\intertext{Rearranging yields; }
	(1 - T_{ii} - [ \sum_{ j \in \mathbb{N}(i) } T_{ji} T_{ij} (1 - T_{jj})^{-1} ] ) s_i = 		\sum_{ j \in \mathbb{N}(i) } \sum_{ k \in \mathbb{N}(j) \backslash \{ i \} } T_{ji} T_{kj} s_k (1 - T_{jj})^{-1}
	\end{align*}
And by identical logic we can get an equivalent equation for $s_{i}^{*}$.
By definition; (a) $i \in \Theta, \ T_{ji} = T_{ji}^{*}$ for all $j \neq i \in \mathbb{N}(i)$, and (b) $j \in \widehat{\Theta}, \ T_{kj} = T_{kj}^{*}$ for all $k \neq j,i \in \mathbb{N}(j) \backslash i$. Now let (c) $T_{ii}^{*} = T_{ii} + \Delta_{i}, \ \Delta_{i} > 0$, and (d) $T_{jj}^{*} = T_{jj} + \Delta_{j}, \ \Delta_{j} > 0$.
Substituting (a)-(d) into the equivalent equation for $s_{i}^{*}$, and then subtracting the equation for $s_i$ yields;
	\begin{align*}
\left(1 - T_{ii} - \Delta_{i} - \big[ \sum_{ j \in \mathbb{N}(i) }  T_{ji} T_{ij}^{*}  (1 - T_{jj} - \Delta_{j})^{-1}  \big] 	\right) 	s_{i}^{*}  		
		-
\left( 1 - T_{ii} - \big[ \sum_{ j \in \mathbb{N}(i) } T_{ji} T_{ij}  (1 - T_{jj})^{-1} \big] \right) s_{i} 		
		=								\\		
\left( \sum_{ j \in \mathbb{N}(i) } \sum_{ k \in \mathbb{N}(j) \backslash \{ i \} } T_{ji} T_{kj} s_{k}^{*} (1 - T_{jj} - \Delta_{j})^{-1} \right)		
		-
\left( \sum_{ j \in \mathbb{N}(i) } \sum_{ k \in \mathbb{N}(j) \backslash \{ i \} } T_{ji} T_{kj} s_k (1 - T_{jj})^{-1} \right)		
	\end{align*}
Rearranging the Right Hand Side (RHS) yields;
	\begin{align*}
	\text{RHS} = \sum_{ j \in \mathbb{N}(i) } \sum_{ k \in \mathbb{N}(j) \backslash \{ i \} }
	\frac{T_{ji} T_{kj} (s_{k}^{*} - s_{k})}{1 - T_{jj} - \Delta_{j} }
	+
	\sum_{ j \in \mathbb{N}(i) } \sum_{  k \in \mathbb{N}(j) \backslash \{ i \}  }
	\frac{\Delta_{j} T_{ji} T_{kj} s_{k} }{(1 - T_{jj})(1 - T_{jj} - \Delta_{j})}
	\end{align*}
We assume that that third round effects are small; $T_{ji} T_{kj} (s_{k}^{*} - s_{k} ) \approx 0$. This yields
	\begin{align*}
	\text{RHS} = \sum_{ j \in \mathbb{N}(i) } \sum_{  k \in \mathbb{N}(j) \backslash \{ i \}  }
	\frac{\Delta_{j} T_{ji} T_{kj} s_{k} }{(1 - T_{jj})(1 - T_{jj} - \Delta_{j})}
	\end{align*}
By the definition of an \emph{influencer}, if $i$ cuts a link with $j$ ($T_{ij}^{*} < T_{ij}$) then $j$ cannot listen to $i$ ($T_{ji} = 0$). Otherwise, $j$ would cut her link with $i$ -- due to the homogeneity of confirmation bias -- which is not permitted by the definition of an influencer. Therefore, $T_{ij} T_{ji} = T_{ij}^{*} T_{ji}$.

 any agent who $i$ cuts We know that in any instance where $T_{ij} \neq T_{ij}^{*} \ , \ T_{ji} = 0$, and so that component of the summation is zero. Therefore, it is appropriate to equate $T_{ij}^{*}$ to $T_{ij}$ in this summation, as any instances where they differed would be irrelevant (the summation component would be zero). Using this result, we can rearrange the Left Hand Side (LHS) to find;
	\begin{align*}
	\text{LHS} =	 \left(1 - T_{ii} - \sum_{ j \in \mathbb{N}(i) }
	\frac{T_{ji} T_{ij}}{1 - T_{jj} - \Delta_{j}} \right)	(s_{i}^{*} - s_{i})
	- \Delta_{i} s_{i}^{*}
	- \sum_{ j \in \mathbb{N}(i) } s_i \frac{  (T_{ji} T_{ij}) \Delta_{j} }{(1 - T_{jj}) (1 - T_{jj} - \Delta_{j})  }
	\end{align*}

Aside: show that for $i \in \Theta$; $ 1 - T_{ii} - \sum_{ j \in \mathbb{N}(i) }  \frac{  T_{ji} T_{ij}  }{  1 - T_{jj} - \Delta_{j}  }  > 0 $. First, note that $T_{jj} + \Delta_{j} + \sum_{i} T_{jk}^{*} = 1$ by definition, and that $T_{ji}^{*} = T_{ji}$ for all $j \neq i \in \mathbb{N}(i)$. This implies that; $ 1- T_{jj} - \Delta_{j} - \sum_{k \neq i  \in \mathbb{N}(j)} T_{jk}^{*} = T_{ji} $. Since $\sum_{k \neq i  \in \mathbb{N}(j)} T_{jk}^{*} \geq 0$, then; $ 1- T_{jj} - \Delta_{j} \geq T_{ji}$. This rearranges to; $\frac{ T_{ji} T_{ij} }{1 - T_{jj} - \Delta_{j} } \leq \ T_{ij}$.

If $\frac{ T_{ji} T_{ij} }{1 - T_{jj} - \Delta_{j} } = \ T_{ij}$ for all $j$, then $1 - T_{ii} - \sum_{ j \in \mathbb{N}(i) } \frac{T_{ji} T_{ij}}{1 - T_{jj} - \Delta_{j}} = 0$. However, there must exist at least one $j \in \mathbb{N}_{i}$ where $T_{ji} = 0$, as the definition of an influencer requires that $i$ cuts links with some $j$, but no $j$ cuts links with $i$. This in turn requires that there is some $j$ who does not listen to $i$, but is listen to by $i$. Therefore;
	\begin{align*}
	1 - T_{ii} - \sum_{j\in \mathbb{N}(i)} \frac{T_{ji} T_{ij}}{1 - T_{jj} - \Delta_{j}} > 0
	\end{align*}

Now equate LHS and RHS (from above) and rearrange;
	\begin{align*}
	\left(1 - T_{ii} - \sum_{ j \in \mathbb{N}(i) }
	\frac{T_{ji} T_{ij}}{1 - T_{jj} - \Delta_{j}} \right)	(s_{i}^{*} - s_{i})
	=
	\Delta_{i} s_{i}^{*}
	+
	\sum_{j\in \mathbb{N}(i)}\frac{T_{ji} \Delta_{j}}{(1 - T_{jj}) (1 - T_{jj} - \Delta_{j})}
	\left( s_i T_{ij} + \sum_{k \in \mathbb{N}(j) \backslash \{ i \} } T_{kj} s_{k} \right)
	\end{align*}
All Right Hand Side terms are weakly greater than zero, and we have established that $1 - T_{ii} - \sum_{ j \in \mathbb{N}(i) } \frac{T_{ji} T_{ij}}{1 - T_{jj} - \Delta_{j}} > 0$. Therefore;
	\begin{align*}
	(s_{i}^{*} - s_i) \geq 0
	\end{align*}

The proof for a \emph{listener} follows the same logic to that for an \emph{influencer} and does not provide any additional insight. It is available from the authors upon request.
\end{proof}

\subsection{Polarization}
\begin{lem}\label{eq with mean field}
Under Assumption \ref{mean field ass}, the learning rule for an agent $i$ can be expressed as $x_{i t} = (1 - T_{ii}^{t}) \mu + T_{ii}^{t} x_{i 0} \quad \text{where } \mu = \frac{1}{d_i} \sum_{j} x_{j 0}$ for all $j$.
\end{lem}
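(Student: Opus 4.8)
The plan is to prove the claim by induction on $t$, using the DeGroot update rule $x_{it} = \sum_{j} T_{ij} x_{j,t-1}$ together with Assumption \ref{mean field ass}. I would write the update as
\begin{align*}
x_{it} = T_{ii}\, x_{i,t-1} + \sum_{j \neq i} T_{ij}\, x_{j,t-1},
\end{align*}
so the whole argument reduces to showing that the ``neighbour term'' $\sum_{j \neq i} T_{ij} x_{j,t-1}$ collapses to $(1 - T_{ii})\mu$ at every date. For $t = 1$ this is exactly what the mean-field assumption delivers directly: the out-neighbour weights of $i$ sum to $1 - T_{ii}$ and, by $x_j \indep T_{ij}$, the initial beliefs are uncorrelated with those weights, so $\sum_{j \neq i} T_{ij} x_{j0} \approx (1 - T_{ii})\mu$, where $\mu$ is the average belief (equal, by representativeness, to both $i$'s neighbourhood average and the population mean). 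The base case $t = 0$ is immediate since $T_{ii}^{0} = 1$, and $t = 1$ then gives $x_{i1} = T_{ii} x_{i0} + (1 - T_{ii})\mu$, matching the claimed formula.

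For the inductive step, I would assume $x_{j,t-1} = (1 - T_{jj}^{t-1})\mu + T_{jj}^{t-1} x_{j0}$ for every agent $j$, substitute into the neighbour term, and separate the $\mu$ part from the $x_{j0}$ part:
\begin{align*}
\sum_{j \neq i} T_{ij}\, x_{j,t-1} = \left(1 - T_{ii} - \sum_{j \neq i} T_{ij} T_{jj}^{t-1}\right)\mu + \sum_{j \neq i} T_{ij} T_{jj}^{t-1} x_{j0}.
\end{align*}
By the independence part of Assumption \ref{mean field ass}, $x_{j0}$ is uncorrelated with the composite weight $T_{ij} T_{jj}^{t-1}$, so $\sum_{j \neq i} T_{ij} T_{jj}^{t-1} x_{j0} \approx \mu \sum_{j \neq i} T_{ij} T_{jj}^{t-1}$; the two $\sum_{j \neq i} T_{ij} T_{jj}^{t-1}$ terms cancel, leaving $\sum_{j \neq i} T_{ij} x_{j,t-1} \approx (1 - T_{ii})\mu$. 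Plugging back in and using $T_{ii}(1 - T_{ii}^{t-1}) + (1 - T_{ii}) = 1 - T_{ii}^{t}$,
\begin{align*}
x_{it} = T_{ii}\big[(1 - T_{ii}^{t-1})\mu + T_{ii}^{t-1} x_{i0}\big] + (1 - T_{ii})\mu = (1 - T_{ii}^{t})\mu + T_{ii}^{t} x_{i0},
\end{align*}
which is the desired expression.

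I expect the main obstacle to be making rigorous the repeated use of mean-field at intermediate dates rather than only at $t = 0$: Assumption \ref{mean field ass} is phrased in terms of the initial signals, and one must argue that representativeness is preserved along the learning path. Two ingredients handle this: first, the independence $x_{j} \indep T_{ij}$ (and hence $x_{j0} \indep T_{ij} T_{jj}^{t-1}$, since the self-weights are fixed by the network, not by $j$'s own signal) kills the covariance term at each step; second, the population mean is conserved over time, which holds exactly when $T$ (equivalently $T^{*}$, which inherits symmetry by Remark \ref{equal influence no change}) is symmetric and therefore doubly stochastic — the case relevant for Proposition \ref{polar result}. Everything else is the elementary telescoping computation above.
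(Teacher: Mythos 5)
Your proof is correct and takes essentially the same route as the paper's: induction on $t$, with the base case handled by the mean-field/independence assumption and the inductive step resting on the cancellation $\sum_{j\neq i} T_{ij}T_{jj}^{t-1}x_{j0} \approx \mu \sum_{j\neq i} T_{ij}T_{jj}^{t-1}$, which the paper obtains by writing out the covariance of $x_{j0}$ with the composite weight and setting it to zero. The differences are purely presentational (the paper expands the covariance definition explicitly where you invoke uncorrelatedness directly), and your closing remark about extending the zero-covariance condition to the composite weights $T_{ij}T_{jj}^{t-1}$ flags exactly the step the paper also treats informally.
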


\begin{proof}[\emph{\textbf{Proof of Lemma \ref{eq with mean field}.}}]
Proof by induction. First, show for $t = 1$. By the definition of covariance; $cov(x_{j 0} , T_{ij}) = \frac{1}{ d_{i} } \sum_{j \in N(i) \backslash \{ i \} } (x_{j 0} - \overline{x}_{j 0}) (T_{ij} - \overline{T}_{ij})$. Which rearranges to;
\begin{align*}
d_{i} cov(x_{j 0} , T_{ij}) = \left( \sum_{j \in N(i) \backslash \{ i \} } x_{j 0} T_{ij} \right) -  \left(  \overline{x}_{j 0} \sum_{j \in N(i) \backslash \{ i \} }  T_{ij} \right) -  \left( \overline{T}_{ij} \sum_{j \in N(i) \backslash \{ i \} } x_{j 0} \right) + d_{i}   \overline{x}_{j 0} \overline{T}_{ij}
\end{align*}

Where $d_{i} = \# [T_{ij} > 0] = \# [N(i) \backslash \{ i \} ]$. By Assumption \ref{mean field ass} we have $\overline{x}_{j 0} \approx \mu$, and $cov(x_{j 0} , T_{ij}) \approx 0$. By definition, $\overline{T}_{ij} = \frac{1}{d_{i}}(1 - T_{ii})$. Therefore, the equation above rearranges to; $\sum_{j \in N(i) \backslash \{ i \} } x_{j,0} T_{ij} -  \mu (1 - T_{ii}) -  \frac{1}{d_{i}} (1 - T_{ii}) d_{i} \mu + d_{i} \mu \frac{1}{d_{i}} (1 - T_{ii}) \approx 0$, which in turn yields, $\sum_{j \in N(i) \backslash \{ i \} } x_{j 0} T_{ij} \approx \mu (1 - T_{ii})$. The learning rule in Section \ref{model}, can be decomposed to $x_{i 1} = \sum_{j \in N(i) \backslash \{ i \} } T_{ij} x_{j 0}	+ T_{ii} x_{i 0}$. Substituting our result from immediately above into the learning rule yields; $x_{i 1} \approx \mu (1 - T_{ii})	+ T_{ii} x_{i,0}$, as required.

Now we assume that for $t=k$; $x_{i k} = (1 - T_{ii}^{k}) \mu + T_{ii}^{k} x_{i 0}$, and prove for $t=k+1$. The learning rule for $t=k+1$ is; $x_{i k+1} = \sum_{j \in N(i) \backslash \{ i \} } T_{ij} x_{j k}	+ T_{ii} x_{i k}$. Substituting in our assumption for $t=k$, and multiplying out and rearranging terms;
\begin{align*}
x_{i,k+1} = \mu \left( \sum_{j \in N(i) \backslash \{ i \} }  T_{ij} \right)   + \left( \sum_{j \in N(i) \backslash \{ i \} } (x_{j,0} - \mu) \cdot T_{ij} T_{jj}^{k} \right) +  \left( T_{ii} \mu - T_{ii}^{k+1} \mu + T_{ii}^{k+1} x_{i,0} \right)
\end{align*}

Recall that the covariance equation rearranges to $N cov(a,b) = \sum^{N} ab - N \overline{a} \overline{b}$ for general $a, b$. Let $a = (x_{j 0} - \mu)$, $b = T_{ij} T_{jj}^{k}$, $N = d_{i}$. By Assumption \ref{mean field ass} $\overline{a} = 0$, so $d_{i} \cdot cov((x_{j 0} - \mu) , T_{ij} T_{jj}^{k}) = \sum (x_{j 0} - \mu) T_{ij} T_{jj}^{k}$. Adding/subtracting a constant does not alter covariance, and a multiplicative constant acts linearly, so; $d_{i} T_{jj}^{k} cov(x_{j 0} , T_{ij}) = \sum (x_{j 0} - \mu) T_{ij} T_{jj}^{k}$. By Assumption \ref{mean field ass}, the covariance term approximately equals zero, so $\sum (x_{j 0} - \mu) T_{ij} T_{jj}^{k} \approx 0$. Using this observation, the equation above simplifies to
\begin{align*}
x_{i,k+1} = \mu  \sum_{j \in N(i) \backslash \{ i \} }  T_{ij}  +  \left( T_{ii} \mu - T_{ii}^{k+1} \mu + T_{ii}^{k+1} x_{i,0} \right)
\end{align*}

And simple rearranging yields $x_{i k+1} = \mu (1 - T_{ii}^{k+1}) + T_{ii}^{k+1} x_{i 0}$. This completes the proof.
\end{proof}

\begin{proof}[\emph{\textbf{Proof of Proposition \ref{polar result}.}}]
Recall from Definition \ref{defn variance} that $var(x_{t}) = \frac{1}{N} \sum_{i} ( x_{i t} - \mu )^{2}$. Substituting in the simplified learning rule from Lemma \ref{eq with mean field} and rearranging terms yields; $var(x_{t}) = \frac{1}{N} \sum_{i}  T_{ii}^{2t} \cdot (x_{i 0} - \mu)^{2}$.
We now show that the belief at time $t$ is always further from the mean in the case with confirmation bias.

Proof by induction. For $t=1$; $x_{i 1}^{*} = \sum_{j} x_{j 0} T_{ij}^{*} + T_{ii}^{*} x_{i 0}$, which rearranges to; $x_{i 1}^{*}= (1 - T_{ii}^{*}) \mu + T_{ii}^{*} x_{i 0} + \Cov(x_{j 0} , T_{ij}^{*})$. Assumption \ref{mean field ass};  $\Cov(x_{j 0} , T_{ij}) = 0$, coupled with $x_{i} \sim U[0,1]$ implied that $\Cov(x_{j 0} , T_{ij}^{*})$ has the same sign as $x_{i 0} - \mu$. If $x_{i 0} > 0.5$, then after cutting links due to confirmation bias, $i$ listens \emph{disproportionately} to other agents with $x_{j 0} > 0.5$. Conversely for $x_{i 0} < 0.5$.
Therefore, $|x_{i 1}^{*} - \mu| - |x_{i 1} - \mu| = |\Cov(x_{j 0} , T_{ij}^{*})| > 0$

For $t=k$; assume that $x_{i k}^{*}= (1 - T_{ii}^{*k}) \mu + T_{ii}^{*k} x_{i 0} + \epsilon_{ik}$, where $\epsilon_{ik} = T_{ii}^{*} \epsilon_{i k-1} + \Cov(x_{j k-1} , T_{ij}^{*})$ takes the same sign as $(x_{i k}^{*} - \mu)$.

For $t=k+1$; $x_{i k+1}^{*} = \sum T_{ij}^{*} x_{j k}^{*}  + T_{ii}^{*} x_{i k}^{*}$. First, we use a covariance expansion, then substitute in our assumption for $x_{i k}^{*}$. Next, we multiply out, using the covariance expansion and the assumption that $\sum x_{j 0} = \mu$. Finally, we note that; (1) $\Cov(T_{jj}^{*}, x_{j 0}) = 0$ due to the mean field assumption and the uniform distribution of initial signals, and (2) $\sum \epsilon_{j k} = 0$ due the to uniform distribution of initial signals, and therefore the symmetry of the problem. Notice that for a given agent $i$, $x_{i t}^{*} - \mu$ has the same sign for all $t$. Therefore, $\Cov(x_{j t} , T_{ij}^{*})$ has the same sign as $x_{i t}^{*} - \mu$, by identical logic to the $t=0$ case. By defining $\epsilon_{j k+1} = T_{ii}^{*} \epsilon_{j k} + \Cov(x_{j k} , T_{ij}^{*})$, we get the final equation, completing the induction.

\begin{align*}
x_{i k+1}^{*} &= \sum T_{ij}^{*} \sum x_{j k}^{*}  + T_{ii}^{*} x_{i k}^{*} + \Cov(x_{j k} , T_{ij}^{*}) \\
x_{i k+1}^{*} &= \sum T_{ij}^{*} \sum [ (1 - T_{jj}^{*k}) \mu + T_{jj}^{*k} x_{j 0} + \epsilon_{jk}]
+ T_{ii}^{*} [(1 - T_{ii}^{*k}) \mu + T_{ii}^{*k} x_{i 0} + \epsilon_{ik}] + \Cov(x_{j k} , T_{ij}^{*}) \\
x_{i k+1}^{*} &= (1 - T_{ii}^{*}) \big( \mu \sum (1 - T_{jj}^{*k}) + \mu \sum T_{jj}^{*k}
+ \Cov(T_{jj}, x_{j 0}) + \sum \epsilon_{j k}  \big) + T_{ii}^{*} \mu - T_{ii}^{* k+1} \mu \\
& \qquad+ T_{ii}^{* k+1} x_{i 0} + T_{ii}^{*} \epsilon_{j k} + \Cov(x_{j k} , T_{ij}^{*}) \\
x_{i k+1}^{*} &=  (1 - T_{ii}^{* k+1}) \mu + T_{ii}^{* k+1} x_{i 0}
+ T_{ii}^{*} \epsilon_{j k} + \Cov(x_{j k} , T_{ij}^{*}) + (1 - T_{ii}^{*}) \big( \Cov(T_{jj}^{*}, x_{j 0}) + \sum \epsilon_{j k} \big) \\
x_{i k+1}^{*} &=  (1 - T_{ii}^{* k+1}) \mu + T_{ii}^{* k+1} x_{i 0}  + \epsilon_{j k+1}
\end{align*}
Therefore, $|x_{i k}^{*} - \mu| - |x_{i k} - \mu| = |\epsilon_{j k}| > 0$.

Higher $q$ increases the magnitude of $\Cov(x_{j 0} , T_{ij}^{*})$, it weakly increases the average opinion that $i$ continues to listen to if $x_{i 0} > 0.5$ (and decreases that average opinion if $x_{i 0} < 0.5$). In the special case where $x_{i 0} = 0.5$, then the average opinion that $i$ continues to listen to is unaffected by $q$. The symmetry of the problem in this case means that $\Cov(x_{j 0} , T_{ij}^{*}) = 0$ if $x_{i 0} = 0.5$. By similar logic, it also increases the magnitude of $\Cov(x_{j t} , T_{ij}^{*})$ for all $t$. That variance is increasing in $q$ follows from this observation.
\end{proof}

\subsection{Optimal networks}
\begin{lem} \label{severq2}
Given a network, $T$, and vector of unallocated beliefs, $x$. Taking expectations over the realizations of (allocated) initial beliefs, the probability of cutting a randomly selected  in $T$ link is; (1) the same for all links, and (2) weakly monotonically increasing in $q$.
\end{lem}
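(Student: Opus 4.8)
The plan is to reduce the statement to two elementary facts: the exchangeability of the belief assignment across agents, and the monotonicity of the cutting threshold in $q$.

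For part (1), I would fix two arbitrary links $(i,j)$ and $(k,\ell)$ of $T$ (a link being an ordered pair with positive weight, so $i\neq j$ and $k\neq\ell$) and argue that the joint distribution of the ordered belief pair $(x_{i0},x_{j0})$ equals that of $(x_{k0},x_{\ell0})$. This holds whether the initial beliefs are obtained by a uniformly random assignment of the fixed belief vector $x$ to the agents, or by i.i.d.\ draws from a common distribution; in either case the assignment is exchangeable across agents. Writing $\nu$ for this common joint law on $[0,1]^2$ and $A_q=\{(a,b):|a-b|>1-q\}$, Definition \ref{core rule} says link $(i,j)$ is cut precisely when $(x_{i0},x_{j0})\in A_q$, so $\Pr[(i,j)\text{ cut}]=\nu(A_q)=\Pr[(k,\ell)\text{ cut}]$. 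Since this value is the same across all links, it is also the probability of cutting a uniformly randomly selected link of $T$.

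For part (2), I would note that $A_q$ is monotone in $q$: if $q'\geq q$ then $1-q'\leq 1-q$ and hence $A_q\subseteq A_{q'}$, so monotonicity of the measure $\nu$ gives $\nu(A_q)\leq\nu(A_{q'})$. That is exactly the claimed weak monotonicity in $q$.

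I do not expect a genuine obstacle here --- the whole argument rests on the two observations above. The only step needing a little care is the exchangeability claim behind part (1), since it is what lets us drop all dependence on which link was selected; it is worth noting explicitly that it holds under either natural model of how initial beliefs are generated, so the conclusion is robust to that choice.
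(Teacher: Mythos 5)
Your proposal is correct and follows essentially the same route as the paper: the paper also fixes an arbitrary link, observes that the cut probability equals the fraction of belief pairs $(x_k,x_l)$ with $|x_k-x_l|>1-q$ (which is link-independent, i.e.\ your exchangeability argument made explicit as a counting ratio), and notes that this count is weakly increasing in $q$ while the denominator is fixed. Your phrasing via the monotone set $A_q$ and the common law $\nu$ is just a slightly more formal dressing of the same two observations.
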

\begin{proof}[\emph{\textbf{Proof of Lemma \ref{severq2}.}}]
Consider an arbitrary link, $T_{ij}$. The probability of cutting this link is the number of pairs of beliefs further apart than $(1-q)$, as a fraction of the total number of pairs of beliefs.
\begin{align*}
f_{i}(q) = Pr(\text{reroute} | q) = \frac{ \# \{ x_{k} , x_{l} : |x_{k} - x_{l}| > (1 - q) , k \neq l \}  }{ \# \{ x_{k} , x_{l} : k \neq l \}  }
\end{align*}
This did not depend on our choice of link $T_{ij}$, and so is the same for all links. Hence $f_{i}(q) = f(q)$ for all $i$. $\# \{ x_{k} , x_{l} : |x_{k} - x_{l}| > (1 - q) , k \neq l \}$ is weakly monotonically increasing in $q$, and $\# \{ x_{k} , x_{l} : k \neq l \}$ is unaffected by $q$, so $f(q)$ is is weakly monotonically increasing in $q$.
\end{proof}

\begin{defn}\label{defn info loss}
Information loss, $L$, is the number of agents who cease to have influence in the giant component due to confirmation bias. $L = \# \{ i : s_{\text{giant} , i }^{*} = 0 , s_{\text{giant} , i } > 0 \} $
Where $s_{\text{giant} , i }$ is the influence of agent $i$ in the giant component.
\end{defn}

\begin{defn}[Vertex-transitivity]
A network $T$ is vertex-transitive if for any pair of agents $i$ and $j$ there exists a Graph Automorphism $\pi: N \to N$ such that $\pi(i) = j$
\end{defn}

\begin{proof}[\emph{\textbf{Proof of Proposition \ref{min info loss}.}}]
First, some machinery. \textbf{Sets.} Let $g_{k,z}$ denote the $z^{th}$ \textit{group} of $k \in \{1,2,...,n\}$ agents. Let $g_{k,z}^{c}$ be its complement. Note that $g_{k,z}^{c} \equiv g_{n-k,z}$. Let $G_{k}$ denote the \textit{set} of all groups of agents of size $k$. $|G_{k}| = Z_{k}$; $G_{k}$ contains $Z_{k}$ different groups. Let $A_{k}$ denote the \textit{event} that \textit{any} one or more groups of size $k$ become disconnected from the network due to confirmation bias. \textbf{Probabilities.} Let $Pr(A_{k})$ is the probability that the event $A_{k}$ occurs, and let $Pr(g_{k,z})$ be the probability that the $z^{th}$ group of size $k$ becomes disconnected. Let $Pr(L>0)$ denote the probability that there is some information loss in the network. \textbf{Degrees.} Let $d_{g_{k,z}}$ denote the number of links between the agents in the group $g_{k,z}$ and those in $g_{k,z}^{c}$.

If $T$ is symmetric and there is no information loss, then $T^{*} \cdot x_{i 0} = \overline{x}_{i 0}$ (by Remark \ref{equal influence no change}). The optimal network is therefore a symmetric network that minimizes $Pr(L>0)$. 
By Lemma \ref{severq2}, the probability of cutting a randomly chosen link is equal to $f(q)$. Under Assumption \ref{links indep} we apply this to all links, and ignore correlations between cutting probabilities. The probability a group $g_{k,z}$ becomes disconnected is the probability that all of their links to the rest of the network are cut. That is; $Pr(g_{k,z}) = f^{ d_{g_{k,z}} }$. Then the probability that one or more group of size $k$ becomes disconnected is one minus the probability that no groups of size $k$ become disconnected. That is;
\begin{align*}
Pr(A_{k}) \ \ = \ \ 1 - \prod_{z = 1}^{Z_{k}} [1 - Pr(g_{k})] \ \ = \ \ 1 - \prod_{z = 1}^{Z_{k}} [1 - f^{ d_{g_{k}} } ]
\end{align*}

Due to the convexity of taking exponents, this term is minimized by setting $d_{g_{k,z}} = d_{k}$ for all $g_{k,z} \in G_{k}$, for any value of $k$. The overall probability of information loss, $Pr(L>0)$, is increasing in $Pr(A_{k})$ for all $k$. So a graph where all groups of agents of a given size have the same number of links minimizes the probability of information loss. This requires that all nodes are identical in the unweighted equivalent $T^{s}$. By definition, this is equivalent to the network $T^{s}$ being vertex transitive by definition. Regularity follows from vertex-transitivity, and the absence of self-links increases $Pr(g_{k})$ for all $g_k$.
\end{proof}

\subsection{Shock elections}\label{proofs:shock elections}

\begin{proof}[\emph{\textbf{Proof of Proposition \ref{voting result}.}}]
First, recall that $\mu < 0.5$ and $\# [x_{i 0} < 0.5] > \# [x_{i 0} > 0.5]$ by assumption. The Left wins immediately and in the long run.

By Lemma \ref{eq with mean field}, when $q=0$ we have the simplified learning rule; $ x_{i t} =  (1 - T_{ii}^{t}) \mu + T_{ii}^{t} x_{i 0}$ for all $i$. Let $\Delta x_{i t} = x_{i t} - x_{i t-1}$ for $t \geq 1$. Substituting in the simplified learning rule and rearranging yields; $\Delta x_{i t}	= T_{ii}^{t-1} (1-T_{ii})(\mu - x_{i 0})$.
$t$ only enters as the exponent on a weakly positive number.\footnote{In the special case where $T_{ii} = 0$, agent $i$ will arrive at the long run social consensus after only one period of learning. This is because, by assumption, their neighbors are representative of society as a whole.} Therefore, the sign of $\Delta x_{i,t}$ is the same for all $i,t$.

Therefore, $(x_{i t} - \mu)$ is weakly monotonically decreasing in $t$, for all $i$. This means that no agent who votes for the Left-wing candidate can ever change their vote (and all agents who vote for the right-wing candidate must switch their vote to the Left-wing candidate at some point). Therefore the Left-wing candidate would win for any $t$, and so there cannot be any volatility.

To prove that volatility is possible with $q>0$ we show sufficient conditions for the Right-wing candidate to win at $t=1$. Since the Left-win candidate wins at $t=0$ and $t=\infty$ by assumption, this is sufficient to prove that volatility is possible.

Choose $q \in ( 1 - \max \{ k_{m+1} - k_{m} \} , 1 - \min \{ k_{m+2} - k_{m} \} )$. Then we have;
\begin{align*}
x_{EL 1} 	&\approx f_{CL} x_{CL 0}   + (1 - f_{CL}) x_{EL 0} 	 \\
x_{CL 1} 	&\approx f_{EL} x_{EL 0}   +  f_{S} x_{S 0} + (1 - f_{EL} - f_{S}) x_{CL 0} \\
x_{S 1}	 	&\approx f_{CL} x_{CL 0}   +  f_{CR} x_{CR 0} + (1 - f_{CL} - f_{CR}) x_{S 0} 	\\
x_{CR 1} 	&\approx f_{S} x_{S 0}   +  f_{ER} x_{ER 0} + (1 - f_{S} - f_{ER}) x_{CR 0} 	 \\
x_{ER 1} 	&\approx f_{CR} x_{CR 0}   + (1 - f_{CR}) x_{ER 0}
\end{align*}
Now suppose that $ f_{EL} + f_{CL} < 0.5 $ (the Left do not form a majority on their own) and that $ f_{CR} x_{CR 0} +  f_{CL} x_{CL 0} > 0.5 $ (the center right can persuade the swing voters more than the center left). Then $x_{S 1} > 0.5$, and the Right-wing candidate wins at $=1$ \\
\end{proof}

\subsection{Media}
\begin{proof}[\emph{\textbf{Proof of Remark \ref{media monotone q}.}}]
From \citet{eaton1975} we have; $\mu_{fr}(q=0, M) = \frac{1}{2M-4}$. When $(1-q) > \frac{1}{2M-4}$ then a fringe media organization does not lose any (potential) agents due to confirmation bias. For $q \in [0 , \frac{2M-5}{2M-4} ]$, the fringe media organization is unaffected by confirmation bias, and so its ideology is unaffected by confirmation bias in this range.

Further, $\mu_{fr}(q, M) = (1-q)$ when $q \in [\frac{2M-5}{2M-4} , 1]$. $\mu_{fr}(q, M) > (1-q)$ is not possible. The fringe media organization can shift leftwards and pick up at least as many listeners to its left as it would lose to its right, meaning that is weakly prefers to shift to the left. Since this leftward shift does not reduce the audience of any other media organization, then it remains an equilibrium. This logic (implicitly) requires that the distribution of agents' initial opinions is uniform (or close to uniform).

$\mu_{fr}(q, M) < (1-q)$ is also not possible. First, a media organization could only choose $\mu < (1-q)$ if another organization also made the same choice (otherwise it would choose to shift inwards). Supposing a pair of media organizations chose $\mu_{1} = \mu_{2} = \chi < (1-q)$, then there must be another organization choosing $\mu_{3} = 3 \chi$ -- otherwise one of the fringe media organizations would deviate to $\mu = \chi + \epsilon$, for some $\epsilon > 0$.

The media organization $m=3$ must choose this ideology to prevent $m=1,2$ from deviating. In turn, this requires that $\mu_{4} = 5 \chi$ -- to prevent $m=3$ from deviating. This continues across the spectrum of media organizations; $\mu_{m} = (2m - 3)\chi$ for $m \in \{ 3 , ... , M-2 \}$.

Note that for $\mu_{fr}(q, M) < (1-q)$ to be sustainable as an equilibrium, there must also be a pair of media organizations choosing $\mu = 1 - \chi$ (an equally extreme ideology at the other end of the spectrum) -- otherwise a fringe media organization could profitably deviate to $\mu = 1 - \chi + \epsilon$, for some $\epsilon > 0$. By identical logic to above, we require $\mu_{M-2} = 1 - 3 \chi$.

Immediately above we found that $\mu_{M-2} = 1 - 3\chi$, but in the paragraph above found that, $\mu_{M-2} = (2M - 7)\chi$. For these to equate, we must have $\chi = \frac{1}{2M - 4}$. But this is a contradiction -- since we assumed that $\chi < \frac{1}{2M - 4}$.\footnote{We did this by setting $\chi < (1-q) \leq  \frac{1}{2M - 4}$ in the third paragraph.}

It now remains to note that $\mu_{1}^{*} = 1-q$, $\mu_{y}^{*} = \frac{1-2q}{M-1} \text{ for } y \in \{ 2,...,M-1 \}$, $\mu_{M-1}^{*} = q$ is an equilibrium. So the fringe media ideology is $\mu_{fr}(q, M) = (1-q)$, and such an equilibrium exists.
\end{proof}

\end{document}